\documentclass[
  reprint,
  floatfix,
  aps,
  pra,
  nofootinbib,
  eqsecnum,
  longbibliography
]{revtex4-2}

\usepackage{graphicx}
\PassOptionsToPackage{dvipsnames}{xcolor}
\usepackage[most]{tcolorbox}
\let\paperTcolorboxEnvironment\tcolorboxenvironment
\usepackage{Lucky_preamble}
\let\tcolorboxenvironment\paperTcolorboxEnvironment

\AtEndEnvironment{proposition}{\setcounter{mpfootnote}{\value{footnote}}}

\usepackage{Menicucci_preamble}
\newcommand{\criterion}[1]{\ensuremath{(#1)}}
\newcommand{\pointwise}{\mathbin{\vcenter{\hbox{\scalebox{2}{$\cdot$}}}}}
\newtcolorbox{notation}{enhanced,breakable,parbox=false,colback=rembg,colframe=remframe,boxrule=0pt,arc=0mm,left=3mm,right=2mm,top=1mm,bottom=1mm,before skip=10pt,after skip=10pt}
\newcommand{\indicDKS}{\Pi^{\mathrm c}}

\begin{document}

\title{Complete parameterization and parameter-space topology of discrete Wigner representations on \(d\times d\) phase space}

\author{Lucky K. Antonopoulos} 
\email{Lucky.K.Antonopoulos@gmail.com}
\affiliation{Centre for Quantum Computer Performance and Integration, Department of Physics, RMIT University, Melbourne, Victoria 3000, Australia}

\author{Nicolas C. Menicucci}
\email{nicolas.menicucci@rmit.edu.au}
\affiliation{Centre for Quantum Computer Performance and Integration, Department of Physics, RMIT University, Melbourne, Victoria 3000, Australia}

\date{\today}

\begin{abstract}
Toroidal discrete Wigner functions (DWFs) for finite-dimensional systems are non-unique. We classify the labeled single-qudit ${d\times d}$ representations whose phase-point operators are Hermitian, unit-trace, Hilbert--Schmidt orthogonal, and Weyl--Heisenberg covariant. Our previously established stencil theorem expresses this family as descents of a doubled ${2d\times2d}$ parent. Here, we solve its projected-stencil admissibility conditions explicitly. In the symplectic-Fourier representation, admissibility fixes the modulus, leaving phase data on a ${d\times d}$ base cell satisfying a parity-dependent twisted-oddness relation. This gives the parameter space ${(S^1)^{(d^2-1)/2}}$ for odd ${d}$ and ${(S^1)^{(d^2-4)/2}\times\intsT[2][3]}$ for even ${d}$. Canonical horizontal and vertical marginals reduce these spaces to ${(S^1)^{(d-1)^2/2}}$ and ${(S^1)^{d\,(d-2)/2}\times\intsT[2]}$, respectively. The same phase data parameterize validity-preserving rephasings of the doubled Weyl--Heisenberg displacement operators. Requiring these operators to have order dividing the Hilbert-space dimension ${d}$ replaces each ${S^1}$ factor by a discrete ${\intsT[d]}$ factor. The associated symplectic-Fourier characteristic functions encode the same operator information, with pointwise magnitudes independent of the valid stencil convention. This classification separates the freedom intrinsic to DWF validity from that selected by marginal and displacement-algebra requirements and makes explicit the structural distinction between odd and even dimensions.
\end{abstract}

\maketitle

\section{Introduction}
Discrete Wigner functions (DWFs) characterize states and operators within a phase space. These finite-dimensional counterparts of the Wigner representation~\cite{wigner_Quantum_1932} are non-unique: different constructions depend on dimension, phase convention, covariance requirements and marginal structure~\cite{wootters_Wignerfunction_1987,chaturvedi_Wigner_2010,miquel_Quantum_2002,ferrie_Quasiprobability_2011,leonhardt_Discrete_1996,horibe_Existence_2002,raussendorf_Contextuality_2017,zak_Doubling_2011,gibbons_Discrete_2004,gross_Hudsons_2006,agam_Semiclassical_1995,bianucci_Discrete_2002}. Our previous stencil theorem~\cite{antonopoulos_Grand_2025} exhausts and unifies the single-qudit, toroidal ${d\times d}$ DWFs satisfying a set of basic criteria that form a discrete analog of the Stratonovich--Weyl criteria~\cite{stratonovich_Distributions_1957,brif_General_1998}; we call such DWFs valid. Its linear maps also enable representation-based comparisons.

A stencil is a function that specifies the weights used to combine values of a doubled ${2d\times2d}$ parent representation. Cross-correlating the stencil with that representation and evaluating the result at even lattice points produces a ${d\times d}$ phase-space function. The classification concerns labeled single-qudit phase-point operator (PPO) frames on a ${d\times d}$ phase space that are Hermitian, unit-trace, Hilbert--Schmidt orthogonal and Weyl--Heisenberg covariant. The stencil theorem provides a necessary and sufficient test for whether a stencil generates a PPO frame with these properties. We call such frames \emph{valid}. Here, we classify the projected stencils that pass this test. Projection retains the part of each stencil that affects the resulting phase-space function, giving a unique projected stencil for each labeled representation. We parameterize these projected stencils by phase data, determine the topology of their parameter space in every dimension and classify the subspaces selected by canonical horizontal and vertical marginals and by an order-dividing-$d$ requirement on the associated displacement operators. The same phase data describe the admissible stencil-induced rephasings of Weyl--Heisenberg displacement operators (WHDOs) and the corresponding symplectic-Fourier characteristic functions. For a fixed operator, changing the valid stencil preserves the characteristic function’s pointwise magnitudes while potentially changing its phases, making explicit how representation freedom affects the symplectic-Fourier components.

Takami \textit{et al.}~\cite{takami_Wigner_2001} parameterized the general family satisfying canonical position and momentum marginals, Hermiticity and orthonormal completeness by real orthogonal matrices, without imposing WHDO covariance; related sign and phase parameterizations occur in Chaturvedi \textit{et al.}~\cite{chaturvedi_Wigner_2010} and the broader Pauli-covariant expansion of Raussendorf \textit{et al.}~\cite{raussendorf_Role_2023}. The contribution here is the explicit projected-stencil solution and its topological classification, including the specified restricted families.

We begin in \Cref{sec_stencil_frame_symp_Four_admissibility} by recalling the projected-stencil framework and its validity criteria. \Cref{sec_cons_mod_phase_norm_form} expresses these criteria in terms of a constant-modulus phase normal form, and \Cref{sec_param_by_base_cell_phase_funcs} develops the complete base-cell parameterization in terms of twisted-odd phase data. In \Cref{sec_topo_and_margin_constr_subspaces}, we determine the topology, introduce the associated stencil-dependent WHDOs and characteristic functions and derive the restriction imposed by canonical horizontal and vertical marginals. \Cref{sec_induced_WHDO_struct_and_algeb_restr} develops examples, multiplication laws and the order-dividing-${d}$ restriction. Finally, \Cref{sec_disc_and_conc} discusses the classification, its relation to earlier constructions and its scope.

\section{Stencil framework and symplectic-Fourier admissibility}
\label{sec_stencil_frame_symp_Four_admissibility}
Our previous work~\cite{antonopoulos_Grand_2025} established necessary and sufficient conditions for constructing DWFs whose phase-point operators (PPOs) are Hermitian, unit-trace, Hilbert--Schmidt orthogonal and Weyl--Heisenberg covariant. We recall the stencil framework, beginning with the notation and operator conventions used throughout this paper.

\subsection{Recalled notation and doubled-space machinery}
\label{app_recalled_notation_and_doubled_space_machinery}
We work with two-component vectors of the form ${\bm{v}=(v_1,v_2)^\tp}$. For ${N\in\intsT[>0]}$, the discrete phase space is ${\mathcal{P}_{N}\coloneqq\intsT[N][2]}$, with phase-space points added componentwise modulo ${N}$. We write ${a\equiv_T b}$ when ${a-b\in T\intsT}$, with vector congruences understood componentwise. Furthermore, we use ${\bm{m}\in\mathcal{P}_{2d}}$ and ${\bm{\alpha}\in\mathcal{P}_{d}}$ for the \emph{doubled} and \emph{undoubled} phase-space points, respectively, ${\bm{b}\in\intsT[2][2]}$ for a bit vector and ${\bm{\kappa}\in\intsT[][2]}$ for an integer displacement. 

On ${\mathcal{P}_N=\intsT[N][2]}$, addition and negation are componentwise modulo ${N}$; ${\{x\}_N\in\{0,\ldots,N-1\}}$ denotes the ordinary remainder, also applied componentwise to vectors. For ${a\in\reals}$ and ${T>0}$, our exponential phase notation is
\begin{align}
    \omegaT[T]\coloneqq e^{2\pi i/T},\qquad
    \omegaT[T][a]\coloneqq e^{2\pi ia/T}.
    \label{def_exp_phase_notation}
\end{align}
For integer ${a}$ and positive integer ${T}$, the \emph{Kronecker comb} ${\DeltaT[T][a]}$ is ${1}$ when ${T\mid a}$ and ${0}$ otherwise, with ${\DeltaT[T][\bm a]=\DeltaT[T][a_1]\DeltaT[T][a_2]}$.

We use the symplectic form matrix
\begin{align}
    \mat\Omega
    &\coloneqq
    \bmat{0}{-1}{1}{0}.
    \label{eq_symp_form_matrix}
\end{align}
For ${T\in\intsT[>0]}$, the \emph{symplectic discrete Fourier transform} (SDFT) of ${f:\mathcal P_T\to\complex}$ is
\begin{align}
 \check f(\bm a')=\SDFT[T][f](\bm a')
 \coloneqq\frac1T\sum_{\bm a\in\mathcal P_T}f(\bm a)\omegaT[T][-\symprod{\bm a'}{\bm a}],
 \label{def_SDFT}
\end{align}
with ${\bm a'\in\mathcal P_T}$, using the conventions of Ref.~\cite{antonopoulos_Grand_2025}.
The SDFT is self-inverse and acts entrywise on operator-valued functions. For ${f,g:\mathcal P_T\to\complex}$, their \emph{discrete cross-correlation} is
\begin{align}
    \crossc{f}{g}(\bm a)
    \coloneqq\sum_{\bm c\in\mathcal P_T}f(\bm c)^*g(\bm c+\bm a),
    \label{def_disc_cross_corr}
\end{align}
with arguments reduced modulo ${T}$.

The \emph{position basis} (also known as the computational basis) ${\{\ket j_q\}_{j\in\intsT[d]}}$ and its Fourier-dual \emph{momentum basis} (also known as the Fourier basis) satisfy
\begin{align}
    \ket k_p
    \coloneqq
    \frac1{\sqrt d}\sum_{j\in\intsT[d]}\omegaT[d][jk]\ket j_q.
    \label{def_mom_pos_bases}
\end{align}
The \emph{clock} and \emph{shift} operators obey ${\wzed\ket j_q=\omegaT[d][j]\ket j_q}$ and ${\wex\ket j_q=\ket{j+1}_q}$, respectively, where ket labels are modulo ${d}$. The \emph{general-form Weyl--Heisenberg displacement operators} (WHDOs%
\footnote{We pronounce these as ``who-doos''.}%
) are
\begin{align}
    \hat V^{\mathrm{Gen}}(\bm\kappa)
    \coloneqq\Phi(\bm\kappa;d)\wzed^{\kappa_2}\wex^{\kappa_1},
    \label{def_WHDO_general_phase}
\end{align}
for ${\Phi(\bm\kappa;d)\in U(1)}$. We also define the \emph{Hilbert--Schmidt inner product} for operators ${\hat{O},\hat{O}'\in\mathcal L(\complex^d)}$ by ${\HSip{O}{O'}\coloneqq\traceOf{\hat{O}^\dagger}{\hat{O}'}}$.

For a qudit Hilbert space ${\mathcal H\cong\complex^d}$, a \emph{${d\times d}$ PPO frame} is a family of PPOs ${\{\hat{A}(\bm{\alpha})\}_{\bm{\alpha}\in\mathcal{P}_{d}}}$ that spans $\mathcal{L}(\complex^{d})$. Since this family contains $d^2$ operators and $\dim\mathcal{L}(\complex^{d})=d^2$, the PPOs form an informationally complete operator basis. Validity of such a frame imposes the following additional requirements.
\begin{definition}[Valid PPO frame]
\label{def_DV_valid_PPO_frame}
    A \emph{valid PPO frame} is a set of ${d^2}$ discrete PPOs with ${N=d}$, namely ${\{\hat{A} (\bm{\alpha}) \mid \bm{\alpha} \in \mathcal{P}_{d}\}}$, forming a Hermitian, unit-trace, Hilbert--Schmidt orthogonal and WHDO-covariant operator basis. Explicitly, for all ${\bm{\alpha}, \bm{\beta} \in \mathcal{P}_{d}}$, the following hold:
    \begin{itemize}
    \item[\criterion{\mathrm{A}1}.]\label{A1} (Hermiticity/reality). For all ${\bm{\alpha}}$, the phase-point operators are Hermitian:
    ${\hat{A}(\bm{\alpha}) = \hat{A}(\bm{\alpha})^\dagger.}$
    
    \item[\criterion{\mathrm{A}2}.]\label{A2} (Unit trace). For all ${\bm{\alpha}}$, the PPOs have unit trace:
    ${\traceOf{\hat{A}(\bm{\alpha})}{} = 1.}$
    
    \item[\criterion{\mathrm{A}3}.]\label{A3} (Hilbert--Schmidt orthogonality). For all ${\bm{\alpha},\bm{\beta}}$, the PPOs are Hilbert--Schmidt orthogonal:
    \begin{align}
        \HSip{A(\bm{\alpha})}{A(\bm{\beta})}
        =
        \traceOf{\hat{A}(\bm{\alpha})^\dagger}{\hat{A}(\bm{\beta})}
        =
        d\,\DeltaT[d][\bm{\alpha} - \bm{\beta}].
    \end{align}
    
    \item[\criterion{\mathrm{A}4}.]\label{A4} (WHDO covariance). For all ${\bm{\alpha}}$ and ${\bm{\kappa}\in\intsT[][2]}$, the general-form discrete WHDOs act on PPO labels by discrete phase-space translation under conjugation:
    \begin{align}
        \hat{V}^{\mathrm{Gen}}(\bm{\kappa}) \hat{A}(\bm{\alpha}) \hat{V}^{\mathrm{Gen}}(\bm{\kappa})^\dagger = \hat{A}(\bm{\alpha} + \bm{\kappa}).
    \end{align}
    Here, ${\bm{\alpha}+\bm{\kappa}}$ is understood as addition in ${\mathcal{P}_{d}}$ (componentwise modulo ${d}$) and the general-form WHDOs are given in \Cref{def_WHDO_general_phase}.
\end{itemize}
\end{definition}
Since ${\Phi(\bm\kappa;d)}$ is a scalar phase, \criterion{\mathrm{A}4} is independent of the WHDO phase convention. The associated \emph{valid DWF} representing a discrete operator ${\hat O\in\mathcal{L}(\mathcal H)}$ is the trace representation 

\begin{align}
    W_{\hat O}(\bm\alpha)
    \coloneqq
    \frac{1}{d}
    \traceOf{\hat A(\bm\alpha)^\dagger}{\hat O},
\end{align}
where ${\bm\alpha\in\mathcal P_d}$. The ${d^2}$ orthogonal PPOs form an informationally complete operator basis, so this representation is \emph{faithful}: the represented operator ${\hat O}$ can be reconstructed uniquely from its DWF values.

The discrete-variable \emph{doubled WHDOs} of Ref.~\cite{antonopoulos_Grand_2025}, in the notation used here, are
\begin{align}
 \hat V^{(2d)}(\bm m)
 \coloneqq\omegaT[2d][-m_1m_2]\wzed^{m_2}\wex^{m_1}.
 \label{def_DV_doubled_WHDOs}
\end{align}
Doubling concerns the label lattice; the Hilbert-space dimension remains ${d}$. With the \emph{discrete parity operator} ${\hat R\coloneqq\sum_{j\in\intsT[d]}\qoutprod{-j}{j}}$, the \emph{doubled PPOs} are
\begin{align}
    \hat A^{(2d)}(\bm m)\coloneqq\hat V^{(2d)}(\bm m)\hat R.
    \label{eq_doubled_PPO}
\end{align}
The doubled PPO covariance recalled from Ref.~\cite{antonopoulos_Grand_2025} is
\begin{align}
    \hat V^{\mathrm{Gen}}(\bm\kappa)\hat A^{(2d)}(\bm m)
    \hat V^{\mathrm{Gen}}(\bm\kappa)^\dagger
    =
    \hat A^{(2d)}(\bm m+2\bm\kappa),
    \label{prop_doubled_PPO_covariance}
\end{align}
where the right-hand label is reduced modulo ${2d}$. Thus, physical translations act on doubled PPO labels by even shifts.

Finally, we define the associated doubled DWF through the following doubled Wigner and Weyl maps. Let ${\{\hat{A}^{(2d)}(\bm{m})\}_{\bm{m}\in\mathcal{P}_{2d}}}$ be the doubled PPO self-dual frame, where \emph{self-dual} means that the frame is its own dual under the Hilbert--Schmidt inner product, up to an overall normalization factor. For ${\hat O\in\mathcal L(\complex^d)}$ and ${f:\mathcal P_{2d}\to\complex}$, the \emph{doubled Wigner map} ${\Wig^{(2d)}}$ and \emph{doubled Weyl map} ${\Op^{(2d)}}$ are defined by
\begin{align}
    \bigl(\Wig^{(2d)}[\hat O]\bigr)(\bm m) 
    &\coloneqq 
    \frac{1}{2d}\traceOf{\hat A^{(2d)}(\bm m)^\dagger}{\hat O},
    \label{eq_doubled_Wigner_map_recall}\\
    \Op^{(2d)}[f] 
    &\coloneqq 
    \frac{1}{2}\sum_{\bm m\in\mathcal P_{2d}}f(\bm m)\hat A^{(2d)}(\bm m),
    \label{eq_doubled_Weyl_map_recall}
\end{align}
and these maps satisfy
\begin{align}
    \Op^{(2d)}\circ\Wig^{(2d)}
    =
    \mathrm{id}_{\mathcal{L}(\complex^d)},
    \label{eq_doubled_Op_Wig_composition}
\end{align}
where ${\mathrm{id}_{\mathcal{L}(\complex^d)}}$ is the identity map on the operator space ${\mathcal{L}(\complex^d)}$. The \emph{doubled DWF} representing ${\hat O}$ is then the function 
\begin{align}
    W_{\hat O}^{(2d)}\coloneqq\Wig^{(2d)}[\hat O].
    \label{eq_doubled_DWF}
\end{align}
This function is defined on the doubled discrete phase space ${\mathcal{P}_{2d}}$, and is the parent function for all valid ${d\times d}$ DWFs and the starting point for the stencil-based framework of Ref.~\cite{antonopoulos_Grand_2025}.

\subsection{Projected stencils and admissibility}
\label{sec_proj_stencils_and_admissibility}
The doubled-representation space ${\complex^{\mathcal{P}_{2d}}}$ has dimension ${4d^2}$, whereas ${\mathcal L(\complex^d)}$ has dimension ${d^2}$. Hence, a generic doubled-lattice function cannot be interpreted as independent qudit phase-space data. The doubled Wigner map fills only a distinguished ${d^2}$-dimensional subspace, on which the representation is faithful~\cite{antonopoulos_Grand_2025}. The following projection isolates this subspace, separating operator information from redundant doubled-lattice data.

\begin{definition}[Doubled-space projection map]
\label{def_doubled_space_projection_map}
    The \emph{doubled-space projection map} is the linear map
    \begin{align}
        \Proj \coloneqq \Wig^{(2d)} \circ \Op^{(2d)} : \complex^{\mathcal{P}_{2d}} \to \complex^{\mathcal{P}_{2d}},
    \end{align}
    which acts on a function ${f : \mathcal{P}_{2d} \to \complex}$ by
    \begin{align}
        \bar{f} \coloneqq \Proj f
        =
        \Wig^{(2d)} \bigl[ \Op^{(2d)}[f] \bigr]
        =
        W^{(2d)}_{\Op^{(2d)}[f]}.
    \end{align}
\end{definition}
Here ${\im(\Proj)}$ denotes the \emph{image} of ${\Proj}$.
Furthermore, the doubled-space projection acts explicitly as
\begin{align}
    \bar{f}(\bm{m})
    &=
    \frac{1}{4}
    \sum_{\bm{b}\in\intsT[2][2]}
    (-1)^{\symprod{\bm{b}}{\bm{m}}-b_1b_2d}
    f(\bm{m}-d\,\bm{b}).
    \label{eq_proj_explicit}
\end{align}
Thus, one ${d\times d}$ base cell determines the remaining three blocks pointwise, up to the prescribed signs. The resulting structure is illustrated for ${d=2}$ and ${d=3}$ in \Cref{fig_doubled_DWF_redundancy}.

\begin{figure}[h]
    \centering
    \includegraphics[width=\columnwidth]{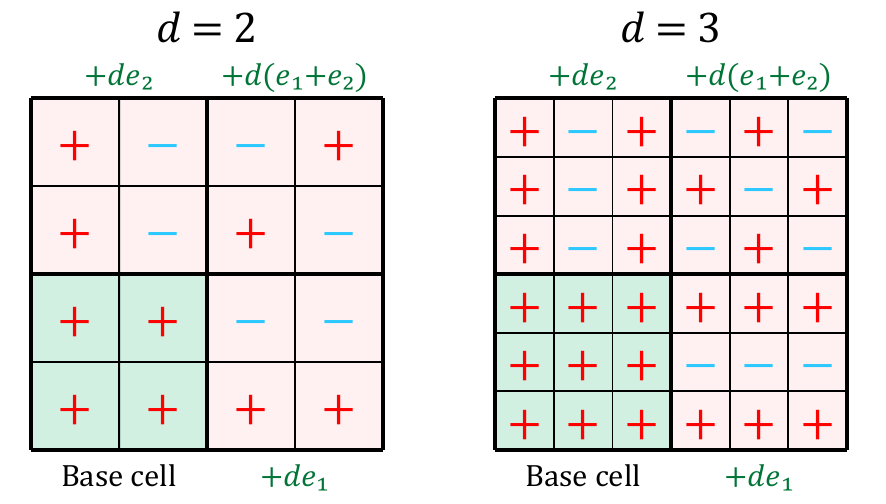}
    \caption[Fourfold redundancy of functions in ${\im(\Proj)}$]
    {Fourfold redundancy of functions in ${\im(\Proj)}$ for ${d=2}$ (left) and ${d=3}$ (right). The green highlighted ${d\times d}$ block is a base cell; the other three red highlighted blocks are fixed pointwise by the projection sign structure in \Cref{eq_proj_explicit}. The block labels indicate translations by ${d\,e_1}$, ${d\,e_2}$, and ${d\,(e_1+e_2)}$, with ${e_1=(1,0)^\tp}$ and ${e_2=(0,1)^\tp}$. The ${d\,b_1b_2}$ term in the sign exponent of \Cref{eq_proj_explicit} is even for ${d=2}$, but contributes an extra minus sign to the doubly shifted block for ${d=3}$. The symbols show the resulting sign pattern, with positive base-cell values chosen for illustration.}
    \label{fig_doubled_DWF_redundancy}
\end{figure}

By \Cref{eq_doubled_Op_Wig_composition}, ${\Proj^2=\Proj}$, ${\Proj\circ\Wig^{(2d)}=\Wig^{(2d)}}$ and ${\Op^{(2d)}\circ\Proj=\Op^{(2d)}}$. Consequently, ${\im(\Proj)=\im(\Wig^{(2d)})}$. Thus, ${\Proj}$ removes redundant doubled-lattice data that do not affect the represented operator. A \emph{stencil} is any complex-valued function ${M:\mathcal P_{2d}\to\complex}$. To descend to a ${d\times d}$ phase space, we cross-correlate [\Cref{def_disc_cross_corr}] the doubled DWF with a stencil and restrict to the even sites ${2\bm\alpha}$.

\begin{definition}[${M}$-DWF and ${M}$-PPO]
    \label{def_M_DWF_and_M_PPO}
    Let ${M:\mathcal{P}_{2d} \to \complex}$ be a stencil and let ${\hat{O}\in\mathcal{L}(\complex^d)}$ be a qudit operator. The \emph{${M}$-DWF} generated by ${M}$ (of the operator ${\hat{O}}$) is defined by
    \begin{align}
        W_{\hat{O}}^{M}(\bm{\alpha})
        \coloneqq
        \crossc{M}{W_{\hat{O}}^{(2d)}}(2\bm{\alpha}),
        \qquad
        \bm{\alpha}\in\mathcal{P}_{d},
    \label{eq_M_DWF_def}
    \end{align}
    where the cross-correlation ${\star}$ of \Cref{def_disc_cross_corr} is taken over ${\mathcal{P}_{2d}}$ (i.e., with $T=2d$). The \emph{${M}$-PPO} generated by ${M}$ is given by
    \begin{align}
        \hat{A}^{M}(\bm{\alpha})
        \coloneqq
        \frac{1}{2} \crossc{M^{*}}{\hat{A}^{(2d)}}(2\bm{\alpha}).
    \label{eq_M_PPO_def}
    \end{align}
    Equivalently, the induced ${M}$-DWF may be written in trace form as
    \begin{align}
        W_{\hat{O}}^{M}(\bm{\alpha})
        =
        \frac{1}{d}\traceOf{\hat{A}^{M}(\bm{\alpha})^\dagger}{\hat{O}}.
    \label{eq_M_DWF_trace_form}
    \end{align}
\end{definition}

The induced ${M}$-PPO depends only on the projected stencil:
\begin{align}
    \hat{A}^{M}(\bm{\alpha})
    &=
    \Op^{(2d)}
    \left[M(\inputdot-2\bm{\alpha})\right]
    \nonumber\\
    &=
    \Op^{(2d)}
    \left[\bar{M}(\inputdot-2\bm{\alpha})\right].
    \label{eq_projected_stencil_generates_same_PPO}
\end{align}
Here we used ${\Op^{(2d)}\circ\Proj=\Op^{(2d)}}$ and the fact that ${\Proj}$ commutes with translations by even labels, as follows from \Cref{eq_proj_explicit}.

General stencils need not yield valid DWFs. The natural question therefore is: which functions ${M\in\complex^{\mathcal P_{2d}}}$ generate valid ${d\times d}$ DWFs? WHDO covariance \criterion{\mathrm{A}4} is automatic: cross-correlation is linear in the doubled PPO and evaluation at ${2\bm\alpha}$ converts doubled-label covariance into translation on ${\mathcal P_d}$. The remaining conditions correspond to \criterion{\mathrm{A}1}--\criterion{\mathrm{A}3}: Hermiticity, unit trace and Hilbert--Schmidt orthogonality.

A stencil ${M}$ is \emph{valid} when its projection ${\bar M=\Proj M}$ satisfies the reality, normalization and autocorrelation criteria of Ref.~\cite{antonopoulos_Grand_2025}:
\begin{align}
 \criterion{\mathrm{M}1}\quad &\bar M(\bm m)^*=\bar M(\bm m),
       &&\bm m\in\mathcal P_{2d},\nonumber\\
 \criterion{\mathrm{M}2}\quad &\sum_{\bm m\in\mathcal P_{2d}}\bar M(\bm m)=1,\nonumber\\
 \criterion{\mathrm{M}3}\quad &\crossc{\bar M}{\bar M}(2\bm\alpha)=\DeltaT[d][\bm\alpha],
       &&\bm\alpha\in\mathcal P_d.\nonumber
\end{align}
The Stencil Theorem of Ref.~\cite{antonopoulos_Grand_2025} establishes that every valid DWF over a ${d\times d}$ phase space arises as a valid-stencil descent of the doubled DWF and, conversely, that every valid stencil generates a valid ${M}$-DWF. Thus, the stencil construction exhausts the \criterion{\mathrm{A}1}--\criterion{\mathrm{A}4} class of valid ${d\times d}$ DWFs.

By Corollary~1 of Ref.~\cite{antonopoulos_Grand_2025}, two valid stencils generate the same labeled ${M}$-DWF representation for every operator exactly when ${\bar M_1=\bar M_2}$, or equivalently, when their difference is in the kernel of ${\Proj}$: ${M_1-M_2\in\ker\Proj}$. We call this \emph{projected stencil uniqueness}.
Consequently, ${\bar M}$ is the representation-relevant object: kernel differences encode redundant doubled-lattice data, not additional operator information. \Cref{fig_projected_stencil_nonuniqueness} illustrates this many-to-one relation.

\begin{figure*}[t]
    \centering
    \includegraphics[width=0.7\textwidth]{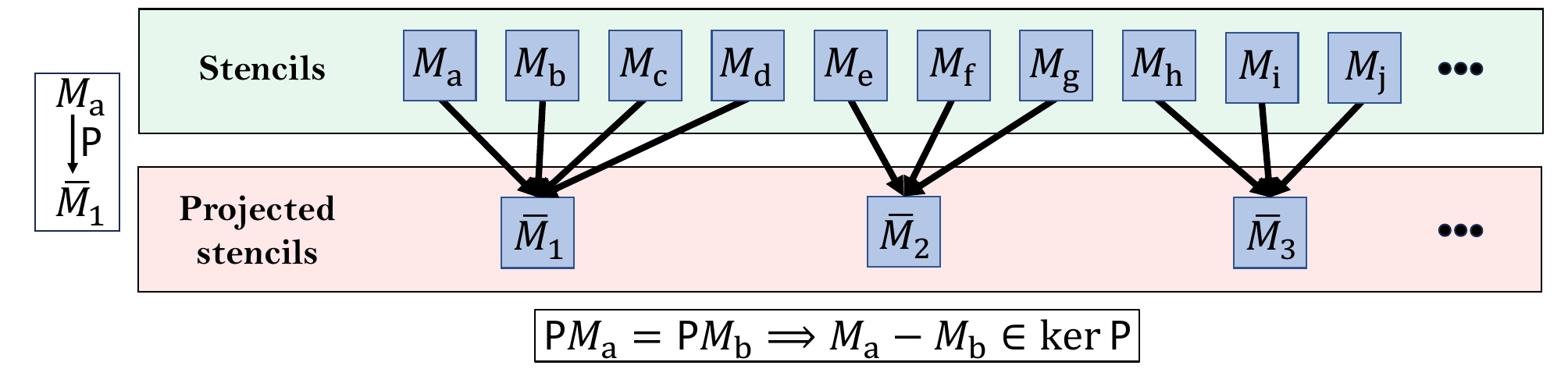}
    \caption{Projected stencil uniqueness under the doubled-space projection ${\Proj}$. Distinct unprojected stencils ${M_a}$ and ${M_b}$ may have the same projected representative ${\bar{M}_1}$, equivalently ${M_a-M_b\in\ker\Proj}$. For valid stencils, such kernel differences are redundant doubled-lattice data and yield the same ${M}$-DWF and ${M}$-PPO frame.}
    \label{fig_projected_stencil_nonuniqueness}
\end{figure*}

A further identity proved in Ref.~\cite{antonopoulos_Grand_2025} is that the doubled-space projection commutes with the ${2d}$-sized symplectic discrete Fourier transform:
\begin{align}
    \SDFT[2d]\circ\,\Proj = \Proj\circ\,\SDFT[2d].
    \label{eq_proj_commutes_with_SDFT}
\end{align}
Equivalently, for any ${f:\mathcal{P}_{2d}\to\complex}$, ${\check{\bar{f}} = \bar{\check{f}}}$, where the check accent (${\check{f}}$) denotes the symplectic discrete Fourier transform (SDFT) defined in \Cref{def_SDFT}. We use this relation to pass from the phase-space stencil criteria ${\criterion{\mathrm{M}1}}$--${\criterion{\mathrm{M}3}}$ to their equivalent symplectic-Fourier form.

The projected-stencil validity criteria have the following equivalent symplectic-Fourier form, established in Appendix Sec.~3 of the published Ref.~\cite{antonopoulos_Grand_2025}.

\begin{proposition}[Symplectic-Fourier projected-stencil validity criteria]
\label{prop_fourier_domain_stencil_validity}
    A stencil~${M}$ is valid if and only if the projected SDFT \({\bar{\check{M}} = \check{\bar{M}}}\) satisfies the following conditions:
    \begin{enumerate}[align=left]
    \item[\criterion{\check{\mathrm{M}}1}.] \emph{Hermitian symmetry:}
    \begin{align}
        \bar{\check{M}}(\bm{m})^* = \bar{\check{M}}(-\bm{m}), \quad \forall \bm{m}\in\mathcal{P}_{2d}. \label{checkM1}
    \end{align}
    \item[\criterion{\check{\mathrm{M}}2}.] \emph{Zero-mode normalization:}
    \begin{align}
        \bar{\check{M}}(\bm{0}) = \frac{1}{2d}. \label{checkM2}
    \end{align}
    \item[\criterion{\check{\mathrm{M}}3}.] \emph{Constant-modulus spectrum:}
    \begin{align}
        \abs{\bar{\check{M}}(\bm{m})} = \frac{1}{2d}, \quad \forall \bm{m}\in\mathcal{P}_{2d}. \label{checkM3}
    \end{align}
    \end{enumerate}
\end{proposition}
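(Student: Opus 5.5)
The plan is to set ${g\coloneqq\bar M=\Proj M}$ and translate each of \criterion{\mathrm{M}1}--\criterion{\mathrm{M}3} separately into a statement about ${\check g=\SDFT[2d][g]}$. By \Cref{eq_proj_commutes_with_SDFT}, ${\check g=\bar{\check M}}$. Throughout, I use only the definition \Cref{def_SDFT} with ${T=2d}$ and the structural fact that ${g\in\im(\Proj)}$. Hence also ${\check g\in\im(\Proj)}$, again by \Cref{eq_proj_commutes_with_SDFT} together with ${\Proj^2=\Proj}$. Each equivalence will be proved in both directions at once, since every step is an identity.

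\textbf{Criteria \criterion{\mathrm{M}1} and \criterion{\mathrm{M}2}.} Conjugating \Cref{def_SDFT} gives ${\check g(\bm m)^*=\frac1{2d}\sum_{\bm a}g(\bm a)^*\omegaT[2d][\symprod{\bm m}{\bm a}]=\SDFT[2d][g^*](-\bm m)}$. So ${\check g(\bm m)^*=\check g(-\bm m)}$ for all ${\bm m}$ iff ${\SDFT[2d][g^*]=\SDFT[2d][g]}$. By invertibility (self-inverse) of the SDFT, this holds iff ${g^*=g}$, which is \criterion{\mathrm{M}1}. Evaluating \Cref{def_SDFT} at ${\bm 0}$ gives ${\check g(\bm 0)=\frac1{2d}\sum_{\bm m}g(\bm m)}$, so \criterion{\mathrm{M}2} is equivalent to \criterion{\check{\mathrm{M}}2}.

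\textbf{Criterion \criterion{\mathrm{M}3}.} First I compute the SDFT of the autocorrelation ${C\coloneqq\crossc{g}{g}}$. Substituting ${\bm b=\bm c+\bm a}$ and using bilinearity of the symplectic product factorizes the double sum, giving ${\check C(\bm m)=2d\,\abs{\check g(\bm m)}^2}$.

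Next, \criterion{\mathrm{M}3} only constrains ${C}$ on the even sublattice ${2\mathcal P_d}$. The key step is a sampling/periodization identity. Restricting to even sites on ${\mathcal P_{2d}}$ corresponds in the symplectic-Fourier domain to summing over the four cosets ${\bm m+d\bm b}$, ${\bm b\in\intsT[2][2]}$. Concretely, I expand ${C(2\bm\alpha)}$ via inverse SDFT and use ${\omegaT[2d][\symprod{\bm m}{2\bm\alpha}]=\omegaT[d][\symprod{\bm m}{\bm\alpha}]}$, which depends only on ${\bm m}$ mod ${d}$. This yields
\begin{align}
C(2\bm\alpha)=\frac{1}{2d}\sum_{\bm\mu\in\mathcal P_d}\omegaT[d][\symprod{\bm\mu}{\bm\alpha}]\sum_{\bm b\in\intsT[2][2]}\check C(\bm\mu+d\bm b).
\end{align}
By Fourier inversion on ${\mathcal P_d}$, ${C(2\bm\alpha)=\DeltaT[d][\bm\alpha]}$ holds iff ${\sum_{\bm b}\check C(\bm\mu+d\bm b)}$ is the constant fixed by this normalization. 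That constant works out to ${2/d}$, up to the bookkeeping I must check.

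Now I use that ${\check g\in\im(\Proj)}$. Writing ${\check g=\Proj\check g}$ via \Cref{eq_proj_explicit} shows ${\check g(\bm m+d\bm b)=\pm\check g(\bm m)}$, with a sign depending only on ${\bm b}$ and ${\bm m}$; this is the fourfold redundancy of \Cref{fig_doubled_DWF_redundancy}. Therefore ${\abs{\check g}}$ is constant on each coset, and the coset sum equals ${4\cdot 2d\,\abs{\check g(\bm\mu)}^2}$. So \criterion{\mathrm{M}3} holds iff ${\abs{\check g}}$ is constant on ${\mathcal P_{2d}}$ with the appropriate value. That value must be consistent with ${\check g(\bm 0)=1/(2d)}$ and with the Parseval-type count, and it gives \criterion{\check{\mathrm{M}}3}.

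\textbf{Main obstacle.} The main obstacle is getting the sampling identity's constants and coset parametrization exactly right. It requires the correct normalization of the SDFT on ${\mathcal P_{2d}}$ versus ${\mathcal P_d}$, and handling ${\bm\mu}$ as a representative of ${\mathcal P_{2d}}$ mod ${d}$. It also requires verifying that the redundancy signs, including the ${b_1b_2d}$ term, are genuine ${\pm1}$ so that they drop out under ${\abs{\cdot}^2}$. I would fix the constants by checking against a known valid stencil, for instance the one reproducing Wootters' odd-${d}$ DWF, or simply by the zero mode, where \criterion{\check{\mathrm{M}}2} forces the modulus ${1/(2d)}$.
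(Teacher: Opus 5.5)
Your proposal is correct. The paper does not prove this proposition itself. It imports the result from Appendix Sec.~3 of Ref.~\cite{antonopoulos_Grand_2025}, noting only that the commutation $\SDFT[2d]\circ\Proj=\Proj\circ\SDFT[2d]$ is the bridge between \criterion{\mathrm{M}1}--\criterion{\mathrm{M}3} and their Fourier form, which is exactly how you use it. Your argument is a complete, self-contained version of that translation:
the conjugation and zero-mode steps are exact identities; $\check C=2d\,|\check g|^2$ is the right Wiener--Khinchin identity for this normalization; and your aliasing formula for $C(2\bm\alpha)$ is correct. The coset redundancy $\check g(\bm m+d\bm b)=(-1)^{\symprod{\bm b}{\bm m}-b_1b_2d}\,\check g(\bm m)$ follows from $\check g=\Proj\check g$ by reindexing the coset sum in the explicit projector formula. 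The cross terms cancel modulo $2$ by antisymmetry of the symplectic product, giving the recurrence the paper records for $\bar{\check M}$.

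The constants you left open are already fixed by your own display, so the fallback is unnecessary. Since $\sum_{\bm\mu\in\mathcal P_d}\omegaT[d][\symprod{\bm\mu}{\bm\alpha}]=d^2\DeltaT[d][\bm\alpha]$, a constant coset sum $K$ gives $C(2\bm\alpha)=\tfrac{Kd}{2}\DeltaT[d][\bm\alpha]$, hence $K=2/d$. Then $4\cdot 2d\,|\check g(\bm\mu)|^2=2/d$ gives $|\check g|=1/(2d)$ exactly. This is consistent with Parseval, $C(\bm 0)=\sum_{\bm m}|g(\bm m)|^2=4d^2/(2d)^2=1$.

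Doing this arithmetic is better than reading the modulus off \criterion{\check{\mathrm{M}}2} or a known stencil. That shortcut would make the converse \criterion{\check{\mathrm{M}}3}$\Rightarrow$\criterion{\mathrm{M}3} depend on knowing that a valid stencil exists in every $d$. The direct computation instead shows that each of \criterion{\mathrm{M}1}$\Leftrightarrow$\criterion{\check{\mathrm{M}}1}, \criterion{\mathrm{M}2}$\Leftrightarrow$\criterion{\check{\mathrm{M}}2} and \criterion{\mathrm{M}3}$\Leftrightarrow$\criterion{\check{\mathrm{M}}3} holds separately, which is slightly sharper than the stated joint equivalence.

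Note also where the projection is actually needed: only in \criterion{\mathrm{M}3}$\Rightarrow$\criterion{\check{\mathrm{M}}3}. Without the coset redundancy, \criterion{\mathrm{M}3} fixes only $\sum_{\bm b}|\check g(\bm\mu+d\bm b)|^2=1/d^2$, not the individual moduli. The converse needs no redundancy, and neither direction uses Hermiticity.
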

The key consequence for the present classification is ${\criterion{\check{\mathrm{M}}3}}$: it fixes the modulus of the projected Fourier stencil everywhere, so all remaining freedom is phase freedom.

\section{Constant-modulus phase normal form of valid stencils}
\label{sec_cons_mod_phase_norm_form}
This section expresses the projected-stencil validity criteria of \Cref{prop_fourier_domain_stencil_validity} in terms of a phase function.

By the constant-modulus condition \Cref{checkM3}, each value of the projected Fourier stencil has modulus ${1/(2d)}$ and hence admits a polar form. Thus there is a phase function ${\theta:\mathcal P_{2d}\to\reals/(2d\,\intsT)}$ such that
\begin{align}
    \bar{\check{M}}(\bm m)=\frac{1}{2d}\omegaT[2d][\theta(\bm m)],
    \qquad \bm m\in\mathcal P_{2d}.
    \label{eq_ch8_normal_form}
\end{align}
We call this the \emph{phase normal form}; the phase is defined modulo ${2d}$.

The phase constraints take the form of oddness relations modulo the relevant modulus. We therefore fix the terminology and negation notation here.

For ${N,T\in\intsT[>0]}$, a function ${f:\mathcal P_N\to\reals/(T\intsT)}$ is \emph{odd modulo ${T}$ on ${\mathcal P_N}$} when ${f(\bm n)+f(-\bm n)\equiv_T0}$ for every ${\bm n\in\mathcal P_N}$---we also call this \emph{untwisted oddness modulo ${T}$}.

\begin{notation}
\textbf{Notation.} For functions on ${\mathcal P_N}$, ${f(-\bm n)}$ means ${f(\{-\bm n\}_N)}$: negation of the argument is taken modulo the domain size ${N}$. For ${f:\mathcal P_N\to\reals/(T\intsT)}$, the minus sign in ${-f(\bm n)}$ instead negates the phase modulo ${T}$. The two moduli need not agree.
\end{notation}

The symplectic-Fourier criterion ${\criterion{\check{\mathrm{M}}1}}$ immediately constrains the phase function ${\theta}$ from \Cref{eq_ch8_normal_form}. Therefore, ${\criterion{\check{\mathrm{M}}1}}$, together with \Cref{eq_ch8_normal_form}, gives \({\omegaT[2d][-\theta(\bm{m})] = \omegaT[2d][\theta(-\bm{m})]}\),
and therefore ${\theta}$ satisfies the modular-oddness condition:
\begin{align}
    \criterion{\theta1}\quad &\tcboxmath[
        colback=white,
        colframe=black,
        boxrule=0.5pt,
        arc=2mm,
        left=2mm,
        right=2mm,
        top=1mm,
        bottom=1mm
    ]{
        \theta(\bm{m}) + \theta(-\bm{m})
        \equiv_{2d}
        0.
    }
    \label{eq_ch8_theta_odd}
\end{align}
Thus, ${\theta}$ is an odd function on ${\mathcal{P}_{2d}}$ modulo ${2d}$. Similarly, ${\criterion{\check{\mathrm{M}}2}}$ gives the origin condition:
\begin{align}
    \criterion{\theta2}\quad &\tcboxmath[
        colback=white,
        colframe=black,
        boxrule=0.5pt,
        arc=2mm,
        left=2mm,
        right=2mm,
        top=1mm,
        bottom=1mm
    ]{
        \theta(\bm{0})
        \equiv_{2d}
        0.
    }
    \label{eq_ch8_theta_origin}
\end{align}

A further constraint comes from the doubled-space projection. Since ${\bar{\check{M}}}$ lies in the image of ${\Proj}$, its values satisfy the projection-induced recurrence relation
\begin{align}
    \bar{\check{M}}(\bm{m}-d\,\bm{b})
    =
    \omegaT[2][\symprod{\bm{m}}{\bm{b}} - b_1 b_2 d]
    ~\bar{\check{M}}(\bm{m}),
    \label{eq_ch8_proj_recurrence_Mcheck}
\end{align}
for all ${\bm{m}\in\mathcal{P}_{2d}}$ and ${\bm{b}\in\intsT[2][2]}$, where ${\omegaT[2][a] = (-1)^a}$---see \Cref{def_exp_phase_notation}. Since ${\bm{m}-d\,\bm{b}\equiv_{2d}\bm{m}+d\,\bm{b}}$ for ${\bm{b}\in\intsT[2][2]}$, substituting \Cref{eq_ch8_normal_form} into \Cref{eq_ch8_proj_recurrence_Mcheck} yields the induced phase recurrence
\begin{align}
    \theta(\bm{m}+d\,\bm{b})
    \equiv_{2d}
    \theta(\bm{m})
    +
    d\,\{\symprod{\bm{m}}{\bm{b}} - b_1 b_2 \{d\}_2\}_2,
    \label{eq_ch8_theta_recurrence}
\end{align}
where ${\{\inputdot\}_T}$ denotes the modulo ${T}$ operation (with ${T\in\intsT[>0]}$), and ${\{d\}_2\in\{0,1\}}$ measures the parity of ${d}$, i.e., ${\{d_{\text{even}}\}_2 = 0}$ and ${\{d_{\text{odd}}\}_2 = 1}$. Hence, the phase function ${\theta}$ is not arbitrary on ${\mathcal{P}_{2d}}$. Once its values are known on a single ${d\times d}$ cell, the rest of the doubled phase space is fixed by \Cref{eq_ch8_theta_recurrence}---see also \Cref{fig_doubled_DWF_redundancy}.

Symplectic-Fourier validity therefore reduces to the oddness, origin and recurrence conditions \Cref{eq_ch8_theta_odd,eq_ch8_theta_origin,eq_ch8_theta_recurrence}. We now reduce these data to a single ${d\times d}$ base cell.

\section{Parameterization by base-cell phase functions}
\label{sec_param_by_base_cell_phase_funcs}
The recurrence relation \Cref{eq_ch8_theta_recurrence} links values separated by ${d\,\bm b}$ through a fixed parity-dependent rule, so a single ${d\times d}$ base cell suffices.

\subsection{Reduction to the base \titm{d\times d} cell}
\label{subsec_red_to_dxd_base_cell}
Every point ${\bm{m}\in\mathcal{P}_{2d}}$ may be written uniquely in the parameterized form
\begin{align}
    \bm{m}
    =
    \bm{\alpha}+d\,\bm{b},
    \qquad
    \bm{\alpha}\in\mathcal{P}_{d},
    \quad
    \bm{b}\in\intsT[2][2].
    \label{eq_ch8_m_base_cell_decomp}
\end{align}
Thus, ${\mathcal{P}_{2d}}$ decomposes into four translated copies of the base ${d\times d}$ cell, which serves as a fundamental domain for the shifts ${\bm{m}\mapsto\bm{m}+d\,\bm{b}}$.

For later use, we also record how negation acts in the parameterization ${\bm{m}=\bm{\alpha}+d\,\bm{b}}$. To track the carry produced by reducing the negation ${-\bm{\alpha}}$ back to the base cell, define
\begin{align}
    s(\alpha)
    \coloneqq
    1 - \DeltaT[d][\alpha]
    =
    \begin{cases}
        0, & \alpha = 0, \\
        1, & \alpha \neq 0,
    \end{cases}
    \qquad
    \bm{s}(\bm{\alpha})
    \coloneqq
    \vmat{s(\alpha_1)}{s(\alpha_2)}.
    \label{eq_ch8_s_function}
\end{align}
Then, if \({\bm{m}=\bm{\alpha}+d\,\bm{b}}\),
its negation modulo ${2d}$ may be written as
\begin{align}
    -\bm{m}
    &\equiv_{2d}
    \bm{\alpha}'+d\,\bm{b}',
    \nonumber \\
    \bm{\alpha}'&=\{-\bm{\alpha}\}_{d},\nonumber \\
    \bm{b}'&=\{\bm{b}+\bm{s}(\bm{\alpha})\}_{2}.
    \label{eq_ch8_negation_base_cell}
\end{align}
This is a carry rule, not simply the negation of the bit vector ${\bm b}$. If ${\alpha_i=0}$, then ${-d\,b_i\equiv_{2d}d\,b_i}$; otherwise, ${-\alpha_i-d\,b_i\equiv_{2d}(d-\alpha_i)+d\,\{b_i+1\}_2}$. Thus, each nonzero component produces exactly the carry recorded by ${s(\alpha_i)}$.

\subsection{Base-cell phase function \titm{\nu}}
Evaluating \Cref{eq_ch8_theta_recurrence} at ${\bm m=\bm\alpha}$ motivates isolating the base-cell phase function ${\nu}$ from which ${\theta}$ is reconstructed.

\begin{definition}[Base-cell phase function ${\nu}$]
\label{def_ch8_base_cell_phase_function}
    The \emph{base-cell phase function ${\nu}$} is the restriction of ${\theta}$ to the canonical ${d\times d}$ base cell of ${\mathcal{P}_{2d}}$, namely
    \begin{align}
    \label{eq_ch8_nu_def}
        \nu
        &:
        \mathcal{P}_{d} \to \reals/(2d\,\intsT), \nonumber \\
        \nu(\bm{\alpha})
        &\coloneqq
        \theta(\bm{\alpha}+d\,\bm{0}).
    \end{align}
\end{definition}

It is convenient to also define the fixed translation-dependent canonical term%
\footnote{Since ${d\equiv_2\{d\}_2}$, either may be used inside the remainder defining ${\beta_c}$; the latter makes the parity dependence explicit.}%
\begin{align}
    \theta_c(\bm{\alpha},\bm{b})
    &\coloneqq
    d\,\beta_c(\bm{\alpha},\bm{b})
    \in
    \{0,d\}
    \label{eq_ch8_theta_c_def} \\
    \text{where}
    \quad
    \beta_c(\bm{\alpha},\bm{b})
    &\coloneqq
    \{\symprod{\bm{\alpha}}{\bm{b}} - b_1 b_2 \{d\}_2\}_2 \in \{0,1\},
    \label{eq_ch8_beta_c_def}
\end{align}
where the subscript ${c}$ stands for canonical. Evaluating \Cref{eq_ch8_theta_recurrence} at ${\bm m=\bm\alpha}$ gives the base-cell reconstruction condition:
\begin{align}
    \criterion{\theta3}\quad &\tcboxmath[
        colback=white,
        colframe=black,
        boxrule=0.5pt,
        arc=2mm,
        left=2mm,
        right=2mm,
        top=1mm,
        bottom=1mm
    ]{
        \theta(\bm{\alpha}+d\,\bm{b})
        \equiv_{2d}
        \nu(\bm{\alpha})+\theta_c(\bm{\alpha},\bm{b}).
    }
    \label{eq_ch8_theta_nu_theta_c}
\end{align}
Thus, once ${\nu}$ is specified on the base cell, the function ${\theta}$ on ${\mathcal{P}_{2d}}$ is completely determined.

Here and below, we use \emph{admissible} for projected-Fourier and base-cell phase data that parameterize valid projected stencils.

We define an \emph{admissible doubled-cell phase function} to be a function ${\theta:\mathcal P_{2d}\to\reals/(2d\,\intsT)}$ satisfying modular oddness ${\criterion{\theta1}}$ [\Cref{eq_ch8_theta_odd}], the origin condition ${\criterion{\theta2}}$ [\Cref{eq_ch8_theta_origin}] and the base-cell recurrence ${\criterion{\theta3}}$ [\Cref{eq_ch8_theta_nu_theta_c}].

It remains to determine which base-cell phase functions ${\nu}$ give rise to admissible projected stencils. This is the subject of the next subsection.

\subsection{Constraints on the base-cell phase function \titm{\nu} for admissibility}
\label{subsec_constraints_on_base_cell_func_nu}
The origin condition passes directly to the base-cell phase function. By \Cref{def_ch8_base_cell_phase_function}, ${\nu(\bm{0}) = \theta(\bm{0} + d\,\bm{0}) = \theta(\bm{0})}$, and hence by the origin condition of ${\theta}$, we have
\begin{align}
    \tcboxmath[
        colback=white,
        colframe=black,
        boxrule=0.5pt,
        arc=2mm,
        left=2mm,
        right=2mm,
        top=1mm,
        bottom=1mm
    ]{
        \nu(\bm{0})
        \equiv_{2d}
        0.
    }
    \label{eq_ch8_nu_origin}
\end{align}
The subtlety in passing from the ordinary modular oddness of ${\theta}$ to a condition on its base-cell restriction ${\nu}$ is that the canonical ${d\times d}$ base cell, viewed inside ${\mathcal{P}_{2d}}$, is not generally preserved by negation modulo ${2d}$. For ${\bm{\alpha}\in\mathcal{P}_{d}}$, the negation of the canonical representative ${\bm{\alpha}+d\,\bm{0}}$ is represented by ${\{-\bm{\alpha}\}_{d}+d\,\bm{s}(\bm{\alpha})}$, rather than by the canonical representative ${\{-\bm{\alpha}\}_{d}+d\,\bm{0}}$---see \Cref{eq_ch8_negation_base_cell}. Consequently, expressing the value of ${\theta}$ at the negated doubled-lattice label in terms of ${\nu}$ introduces the recurrence phase associated with this carry. This is the source of the base-cell constraint derived below.

Taking the canonical representative ${\bm m=\bm\alpha+d\,\bm0}$, modular oddness, the negation rule \Cref{eq_ch8_negation_base_cell} and reconstruction \Cref{eq_ch8_theta_nu_theta_c} give
\begin{align}
 \nu(\bm\alpha)+\nu(\{-\bm\alpha\}_d)
 &\equiv_{2d}-\theta_c(\{-\bm\alpha\}_d,\bm s(\bm\alpha))\nonumber\\
 &\equiv_{2d}d\,\beta_c(\{-\bm\alpha\}_d,\bm s(\bm\alpha)).
 \label{eq_ch8_nu_pre_twisted_oddness}
\end{align}
The second congruence uses ${-\theta_c\equiv_{2d}\theta_c}$ because ${\theta_c\in\{0,d\}}$. To simplify the right-hand side, use ${\{-\alpha_i\}_d=s(\alpha_i)(d-\alpha_i)}$, which gives
\begin{align}
 \symprod{\{-\bm\alpha\}_d}{\bm s(\bm\alpha)}
 =s(\alpha_1)s(\alpha_2)(\alpha_2-\alpha_1).
\end{align}
Thus only the off-axis case contributes. Since ${\alpha_2-\alpha_1\equiv_2\alpha_1+\alpha_2}$, substitution into \Cref{eq_ch8_beta_c_def} gives the canonical twist below.

Hence, ${\nu}$ satisfies the following \emph{twisted oddness relation modulo ${2d}$}, which is the base-cell analog of the modular oddness of ${\theta}$:
\begin{align}
    \tcboxmath[
        colback=white,
        colframe=black,
        boxrule=0.5pt,
        arc=2mm,
        left=2mm,
        right=2mm,
        top=1mm,
        bottom=1mm
    ]{
        \nu(\bm{\alpha})+\nu(\{-\bm{\alpha}\}_{d})
        \equiv_{2d}
        \Xi(\bm{\alpha}),
    }
    \label{eq_ch8_twisted_oddness}
\end{align}
where
\begin{align}
    \Xi(\bm{\alpha})
    \coloneqq\,&
    d\,s(\alpha_1)s(\alpha_2)\{\alpha_1+\alpha_2+\{d\}_2\}_2 \\
    =&
    \begin{cases}
        d\,\{\alpha_1+\alpha_2+\{d\}_2\}_2,
        & \alpha_1\not\equiv_d 0\ \text{and}\ \alpha_2\not\equiv_d 0,
        \\
        0,
        & \alpha_1\equiv_d 0\ \text{or}\ \alpha_2\equiv_d 0,
    \end{cases}
    \label{eq_ch8_Xi_def}
\end{align}
and so ${\Xi(\bm{\alpha}) \in \{0,d\}}$.

\begin{definition}[Twisted oddness modulo ${2d}$ on ${\mathcal{P}_{d}}$]
\label{def_twisted_oddness}
    Let
    ${f:\mathcal{P}_{d}\to\reals/(2d\,\intsT).}$
    Here, the argument involution is ${\bm{\alpha}\mapsto\{-\bm{\alpha}\}_{d}}$ on ${\mathcal{P}_{d}}$, while congruences between function values are taken modulo ${2d}$. Then ${f}$ is said to satisfy \emph{twisted oddness} with twist ${\Xi:\mathcal{P}_{d}\to\reals/(2d\,\intsT)}$ if
    \begin{align}
        f(\bm{\alpha})
        +
        f(\{-\bm{\alpha}\}_{d})
        \equiv_{2d}
        \Xi(\bm{\alpha}),
        \qquad
        \forall~\bm{\alpha}\in\mathcal{P}_{d}.
    \end{align}
\end{definition}
The twist ${\Xi(\bm{\alpha})}$ arises from the canonical phase term in the doubled-cell reconstruction when negation is expressed through base-cell representatives. Thus, ${\nu}$ is twisted odd modulo ${2d}$ on ${\mathcal P_d}$ with twist ${\Xi}$. If ${\Xi}$ were zero, this would be untwisted oddness with domain modulus ${d}$ and codomain modulus ${2d}$; the present canonical twist is generally nonzero. Its base-cell values for ${d=4}$ and ${d=5}$ are illustrated in \Cref{fig_ch8_Xi_support}.

\begin{figure}[!t]
    \centering
    \includegraphics[width=\columnwidth]{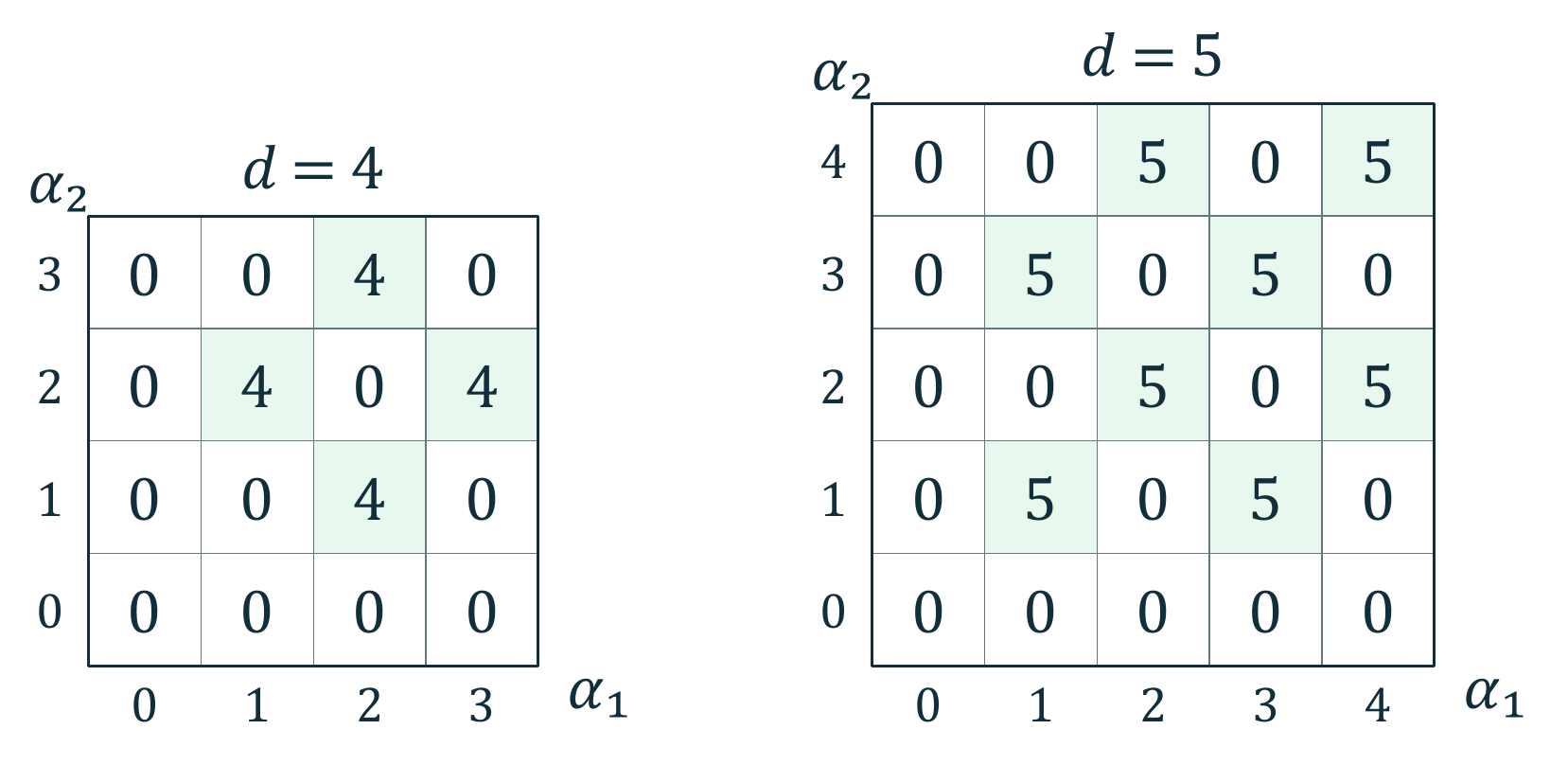}
    \caption[Canonical twist for even and odd dimensions]{Base-cell values of the canonical twist ${\Xi(\bm\alpha)}$ for ${d=4}$ (left) and ${d=5}$ (right), calculated from \Cref{eq_ch8_Xi_def}. Green cells have ${\Xi=d}$, shown explicitly as 4 or 5; unshaded cells have ${\Xi=0}$. The twist vanishes on both coordinate axes. Off-axis, ${\Xi(\bm\alpha)=d}$ exactly when the components have opposite parity for even ${d}$, or the same parity for odd ${d}$. Each pattern is invariant under ${\bm\alpha\mapsto\{-\bm\alpha\}_d}$, as required by twisted oddness.}
    \label{fig_ch8_Xi_support}
\end{figure}

\begin{remark}[Symmetry of the twist function]
    Evaluating the twisted-oddness relation at ${\{-\bm{\alpha}\}_{d}}$ gives
    ${f(\{-\bm\alpha\}_d)+f(\bm\alpha)\equiv_{2d}\Xi(\{-\bm\alpha\}_d)}$,
    since
    ${\{-\{-\bm{\alpha}\}_{d}\}_{d}=\bm{\alpha}}$.
    Comparing this with the original twisted-oddness relation therefore implies
    \begin{align}
        \Xi(\{-\bm{\alpha}\}_{d})
        \equiv_{2d}
        \Xi(\bm{\alpha}).
    \end{align}
    The canonical twist function ${\Xi}$ of \Cref{eq_ch8_Xi_def} satisfies this symmetry.
\end{remark}

For ${d\geq3}$, the base-cell phase function ${\nu}$ is therefore not an ordinary odd function on ${\mathcal{P}_{d}}$, but rather an odd function twisted by the parity-dependent canonical term ${\Xi(\bm{\alpha})}$.%
\footnote{For ${d=1,2}$, the twist function ${\Xi}$ vanishes, so twisted oddness reduces to ordinary oddness modulo ${2d}$.} %
This twisted oddness is the key structural constraint on the base-cell phase function ${\nu}$. We now collect the resulting base-cell constraints into a definition.

\begin{definition}[Admissible base-cell phase functions and admissible parameter set]
\label{def_ch8_nu_admissibility}
    Let
    ${\nu : \mathcal{P}_{d} \to \reals/(2d\,\intsT)}$
    be a \emph{base-cell phase function}, ${\theta : \mathcal{P}_{2d} \to \reals/(2d\,\intsT)}$ be its doubled-cell extension under \Cref{eq_ch8_theta_nu_theta_c}, and define the associated projected Fourier stencil by
    ${\bar{\check{M}}(\bm{m}) \coloneqq \frac{1}{2d}\omegaT[2d][\theta(\bm{m})],}$
    where ${\bm{m}\in\mathcal{P}_{2d}}$. An argument belonging to ${\mathcal{P}_{d}}$ is understood as its canonical representative in the base ${d\times d}$ cell of ${\mathcal{P}_{2d}}$. The function ${\nu}$ is called \emph{admissible} if it satisfies the following criteria:
    \begin{enumerate}[align=left]
    \item[\criterion{\nu1}.] \emph{Twisted oddness:}
    \begin{align}
        \nu(\bm{\alpha}) + \nu(\{-\bm{\alpha}\}_{d}) \equiv_{2d} \Xi(\bm{\alpha}), \quad \forall \bm{\alpha}\in\mathcal{P}_{d}.
    \end{align}
    \item[\criterion{\nu2}.] \emph{Origin condition:}
    \begin{align}
        \nu(\bm{0}) \equiv_{2d} 0.
    \end{align}
    \end{enumerate}
    The set of all admissible base-cell phase functions is the \emph{admissible parameter set}, denoted by
    \begin{align}
        \solspace
        \coloneqq
        \left\{\begin{aligned}
            &\nu:\mathcal{P}_{d}\to\reals/(2d\,\intsT)\\
            &\quad\mid\nu \text{ satisfies } \criterion{\nu1} \text{ and } \criterion{\nu2}
        \end{aligned}\right\}.
    \end{align}
\end{definition}

The preceding calculation shows how these criteria arise from the symplectic-Fourier validity conditions. The following proposition shows that they are not merely necessary but also sufficient.

\begin{proposition}[Base-cell form of stencil admissibility]
\label{prop_ch8_base_cell_form_of_stencil_admissibility}
    Under the doubled-cell projection recurrence, the constant-modulus normal form enforces ${\criterion{\check{\mathrm{M}}3}}$, while ${\criterion{\check{\mathrm{M}}1}}$ and ${\criterion{\check{\mathrm{M}}2}}$ are equivalent to ${\criterion{\nu1}}$ and ${\criterion{\nu2}}$, respectively.

    Equivalently, any base-cell phase function ${\nu}$ satisfying ${\criterion{\nu1}}$--${\criterion{\nu2}}$, extended to ${\theta}$ by \Cref{eq_ch8_theta_nu_theta_c} and inserted into the constant-modulus normal form, defines a projected Fourier stencil satisfying ${\criterion{\check{\mathrm{M}}1}}$--${\criterion{\check{\mathrm{M}}3}}$.
\end{proposition}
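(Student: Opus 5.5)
The plan is to argue directly in the phase normal form and split the proposition into its two directions. Necessity is essentially done already. Given a valid stencil, \Cref{checkM3} yields the normal form \Cref{eq_ch8_normal_form}. The criteria \criterion{\check{\mathrm{M}}1} and \criterion{\check{\mathrm{M}}2} give \criterion{\theta1} and \criterion{\theta2}. The recurrence \Cref{eq_ch8_theta_recurrence} gives \criterion{\theta3}. The computation leading to \Cref{eq_ch8_twisted_oddness}, together with ${\nu(\bm0)=\theta(\bm0)}$, then gives \criterion{\nu1} and \criterion{\nu2}. The real content is sufficiency, which has three parts. Starting from an arbitrary ${\nu\in\solspace}$, define ${\theta}$ on all of ${\mathcal P_{2d}}$ by \criterion{\theta3}, using the unique decomposition \Cref{eq_ch8_m_base_cell_decomp}, and set ${\bar{\check M}=\frac1{2d}\omegaT[2d][\theta]}$. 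I would then show in order that: (i) this function actually lies in ${\im(\Proj)}$, so that it is a legitimate projected Fourier stencil; (ii) \criterion{\check{\mathrm{M}}2} and \criterion{\check{\mathrm{M}}3} hold; (iii) \criterion{\check{\mathrm{M}}1} holds on the whole doubled lattice.

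For (i), \Cref{eq_proj_explicit} shows that a function lies in ${\im(\Proj)}$ exactly when it satisfies the recurrence \Cref{eq_ch8_proj_recurrence_Mcheck} for every ${\bm m}$, not merely for ${\bm m}$ in the base cell. I would therefore check the cocycle identity
\begin{align}
\beta_c(\bm\alpha,\{\bm b+\bm b'\}_2)\equiv_2\beta_c(\bm\alpha,\bm b)+\symprod{\bm\alpha+d\bm b}{\bm b'}-b_1'b_2'd .
\end{align}
This should follow because the reduced product ${\{b_1+b_1'\}_2\{b_2+b_2'\}_2}$ has the same parity as ${(b_1+b_1')(b_2+b_2')}$. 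The resulting cross terms ${d(b_1b_2'+b_1'b_2)}$ then coincide modulo ${2}$ with ${d\symprod{\bm b}{\bm b'}}$. All signs are irrelevant modulo ${2}$.

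For (ii), \criterion{\check{\mathrm{M}}3} is automatic from the unimodular form. \criterion{\check{\mathrm{M}}2} is ${\theta(\bm0)=\nu(\bm0)+\theta_c(\bm0,\bm0)=\nu(\bm0)\equiv_{2d}0}$, which is exactly \criterion{\nu2}.

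The main obstacle is (iii): proving ${\theta(\bm m)+\theta(-\bm m)\equiv_{2d}0}$ for every ${\bm m=\bm\alpha+d\bm b}$, including ${\bm b\neq\bm0}$. I would first use \Cref{eq_ch8_negation_base_cell} to write ${-\bm m=\{-\bm\alpha\}_d+d\{\bm b+\bm s(\bm\alpha)\}_2}$ and expand both terms with \criterion{\theta3}. Applying \criterion{\nu1} then replaces ${\nu(\bm\alpha)+\nu(\{-\bm\alpha\}_d)}$ by ${\Xi(\bm\alpha)}$. This reduces (iii) to a parity identity
\begin{align}
\beta_c(\bm\alpha,\bm b)+\beta_c(\{-\bm\alpha\}_d,\{\bm b+\bm s(\bm\alpha)\}_2)+\beta_c(\{-\bm\alpha\}_d,\bm s(\bm\alpha))\equiv_2 0 .
\end{align}
To get there I use that ${\Xi=d\,\beta_c(\{-\bm\alpha\}_d,\bm s(\bm\alpha))}$, by the derivation of \Cref{eq_ch8_twisted_oddness}, and that ${\theta_c\in\{0,d\}}$. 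To prove this identity, I would apply the cocycle identity from (i) at the point ${\{-\bm\alpha\}_d}$ to split the middle term. Expanding with ${\{-\alpha_i\}_d\equiv_2 s(\alpha_i)(d-\alpha_i)}$, the contributions should cancel pairwise. If the bookkeeping becomes messy, I would fall back on a case split over ${\bm s(\bm\alpha)\in\{0,1\}^2}$ (on-axis versus off-axis) and ${\bm b\in\intsT[2][2]}$, checking each case separately for even and odd ${d}$. The case ${\bm b=\bm0}$ is just \criterion{\nu1} itself.

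Finally, the two directions together show that the map ${\nu\mapsto\bar{\check M}}$ is a bijection between ${\solspace}$ and the set of admissible projected Fourier stencils. This gives the claimed equivalences of \criterion{\check{\mathrm{M}}1} with \criterion{\nu1} and of \criterion{\check{\mathrm{M}}2} with \criterion{\nu2}.
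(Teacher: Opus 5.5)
Your proposal is correct and follows the paper's proof essentially step for step. Necessity comes from restricting the phase-normal-form conditions to the base cell. For sufficiency, you verify the recurrence \Cref{eq_ch8_theta_recurrence} from an arbitrary doubled-lattice point; your cocycle identity is exactly the paper's difference computation for ${\beta_c}$. You then reduce Hermitian symmetry to the same carry identity \Cref{eq_ch8_converse_carry_identity}.

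The only difference is cosmetic: you obtain that carry identity by reusing the cocycle identity at ${\{-\bm\alpha\}_d}$, which works cleanly since ${\{-\bm\alpha\}_d+d\,\bm s(\bm\alpha)\equiv_2\bm\alpha}$, whereas the paper expands it by direct substitution.
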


\begin{proof}
    Criterion ${\criterion{\check{\mathrm{M}}3}}$ is the constant-modulus condition \({\abs{\bar{\check{M}}(\bm{m})} = \frac{1}{2d}}\),
    and hence permits the phase normal form \({\bar{\check{M}}(\bm{m}) = \frac{1}{2d} \omegaT[2d][\theta(\bm{m})]}\).
    This condition is built into the reconstruction of ${\bar{\check{M}}}$ in \Cref{def_ch8_nu_admissibility}.

    Criterion ${\criterion{\check{\mathrm{M}}2}}$ fixes the zero-mode: \({\bar{\check{M}}(\bm{0}) = \frac{1}{2d}}\).
    In the phase normal form, this is equivalent to \({\nu(\bm{0}) \equiv_{2d} 0}\),
    which is ${\criterion{\nu2}}$.

    It remains to consider ${\criterion{\check{\mathrm{M}}1}}$. In the phase normal form, the Hermitian-symmetry condition \({\bar{\check{M}}(\bm{m})^{*} = \bar{\check{M}}(-\bm{m})}\)
    is equivalent to \({\theta(-\bm{m}) \equiv_{2d} -\theta(\bm{m})}\).
    As derived in \Cref{eq_ch8_nu_pre_twisted_oddness,eq_ch8_twisted_oddness}, restricting this relation to the base cell gives \({\nu(\bm{\alpha}) + \nu(\{-\bm{\alpha}\}_{d}) \equiv_{2d} \Xi(\bm{\alpha})}\),
    which is ${\criterion{\nu1}}$.

    Conversely, suppose that ${\criterion{\nu1}}$ holds, and extend ${\nu}$ to ${\theta}$ using \Cref{eq_ch8_theta_nu_theta_c}. The extension formula fixes ${\theta}$ relative to each base-cell representative, but projection compatibility requires the same recurrence to hold under a ${d}$-shift from an arbitrary point of ${\mathcal{P}_{2d}}$. This check depends only on the extension formula.

    Write ${\bm{m}'=\bm{\alpha}'+d\,\bm{c}}$, where ${\bm{\alpha}'\in\mathcal{P}_{d}}$ and ${\bm{c}\in\intsT[2][2]}$. For ${\bm{r}\in\intsT[2][2]}$, set ${\tilde{\bm{c}}\coloneqq\{\bm{c}+\bm{r}\}_{2}}$. Therefore, \({\bm{m}'+d\,\bm{r} \equiv_{2d} \bm{\alpha}'+d\,\tilde{\bm{c}}}\).
    Since ${\bm m'}$ and ${\bm m'+d\,\bm r}$ have the same base-cell representative ${\bm\alpha'}$, their ${\nu(\bm\alpha')}$ contributions cancel.
    It therefore suffices to compute the change in the canonical term under ${\bm{c}\mapsto \tilde{\bm{c}}}$. Applying \Cref{eq_ch8_beta_c_def} to both occurrences and taking their difference gives
    \begin{align}
        &\beta_c(\bm{\alpha}',\tilde{\bm{c}})-\beta_c(\bm{\alpha}',\bm{c})\nonumber\\
        &\quad\equiv_{2}
        \symprod{\bm{\alpha}'}{\bm{r}}
        -
        \{d\}_{2}
        \left(
            c_1r_2+r_1c_2+r_1r_2
        \right)
        \nonumber\\
        &\quad\equiv_{2}
        \symprod{\bm{m}'}{\bm{r}}
        -
        r_1r_2\{d\}_{2},
    \end{align}
    where the second congruence uses ${\bm{m}'=\bm{\alpha}'+d\,\bm{c}}$ and ${d\equiv_{2}\{d\}_{2}}$. Multiplication by ${d}$ lifts a congruence modulo ${2}$ to one modulo ${2d}$. Hence,
    \begin{align}
        \theta(\bm{m}'+d\,\bm{r}) - \theta(\bm{m}')
        &\equiv_{2d}
        d\,\left\{
            \symprod{\bm{m}'}{\bm{r}}
            -
            r_1r_2\{d\}_{2}
        \right\}_{2}.
    \end{align}
    This is exactly \Cref{eq_ch8_theta_recurrence}. Substitution into the constant-modulus normal form therefore shows that the reconstructed ${\bar{\check{M}}}$ satisfies \Cref{eq_ch8_proj_recurrence_Mcheck} and hence ${\bar{\check{M}}\in\im(\Proj)}$, since substituting this recurrence into the explicit projector formula \Cref{eq_proj_explicit} gives ${\Proj\bar{\check{M}}=\bar{\check{M}}}$.

    It remains to verify the Hermitian symmetry of the reconstructed projected Fourier stencil ${\bar{\check{M}}}$ via ${\criterion{\nu1}}$. For ${\bm{m}=\bm{\alpha}+d\,\bm{b}}$, set
    \begin{align}
        \bm{\alpha}^{-}
        \coloneqq
        \{-\bm{\alpha}\}_{d},
        \qquad
        \bm{b}^{-}
        \coloneqq
        \{\bm{b}+\bm{s}(\bm{\alpha})\}_{2}.
    \end{align}
    By \Cref{eq_ch8_negation_base_cell}, \({-\bm{m} \equiv_{2d} \bm{\alpha}^{-}+d\,\bm{b}^{-}}\).
    Therefore,
    \begin{align}
        \theta(\bm{m})+\theta(-\bm{m})
        &\equiv_{2d}
        \nu(\bm{\alpha})
        +
        \nu(\bm{\alpha}^{-})
        +
        d\,\beta_c(\bm{\alpha},\bm{b}) \nonumber \\
        &\qquad 
        +
        d\,\beta_c(\bm{\alpha}^{-},\bm{b}^{-})
        \nonumber\\
        &\equiv_{2d}
        \Xi(\bm{\alpha})
        +
        d\,\beta_c(\bm{\alpha},\bm{b})
        +
        d\,\beta_c(\bm{\alpha}^{-},\bm{b}^{-}),
        \label{eq_ch8_converse_oddness_intermediate}
    \end{align}
    where the second congruence uses ${\criterion{\nu1}}$. Moreover, direct substitution into \Cref{eq_ch8_beta_c_def,eq_ch8_Xi_def}, using ${\bm{\alpha}^{-}=\{-\bm{\alpha}\}_{d}}$, gives
    ${\Xi(\bm\alpha)\equiv_{2d}d\,\beta_c(\bm\alpha^-,\bm s(\bm\alpha))}$.

    Then, direct substitution into \Cref{eq_ch8_beta_c_def}, together with ${\bm{\alpha}+\bm{\alpha}^{-}=d\,\bm{s}(\bm{\alpha})}$ and ${\bm{b}^{-}\equiv_2\bm{b}+\bm{s}(\bm{\alpha})}$, gives
    \begin{align}
        &\beta_c(\bm{\alpha},\bm{b})
        +
        \beta_c(\bm{\alpha}^{-},\bm{b}^{-})
        +
        \beta_c(\bm{\alpha}^{-},\bm{s}(\bm{\alpha}))
        \nonumber\\
        &\qquad\equiv_2
        \{d\}_{2}
        \left(
            \symprod{\bm{s}(\bm{\alpha})}{\bm{b}}
            -
            b_1 s(\alpha_2)
            -
            s(\alpha_1) b_2
        \right)
        \nonumber\\
        &\qquad\equiv_2
        -2\{d\}_{2} s(\alpha_2) b_1
        \equiv_2
        0.
        \label{eq_ch8_converse_carry_identity}
    \end{align}
    Substitution into
    \Cref{eq_ch8_converse_oddness_intermediate} yields \({\theta(\bm{m})+\theta(-\bm{m}) \equiv_{2d} 0}\).
    Hence, the reconstructed phase function satisfies \({\theta(-\bm{m}) \equiv_{2d} -\theta(\bm{m}), \qquad \forall~\bm{m}\in\mathcal{P}_{2d}}\),
    and the reconstructed projected Fourier stencil satisfies ${\criterion{\check{\mathrm{M}}1}}$.

    Thus, ${\criterion{\check{\mathrm{M}}3}}$ is built into the phase normal form, while ${\criterion{\check{\mathrm{M}}2}}$ and ${\criterion{\check{\mathrm{M}}1}}$ are respectively equivalent to ${\criterion{\nu2}}$ and ${\criterion{\nu1}}$. This proves the claimed equivalence.
\end{proof}

There is no independent ${\criterion{\nu3}}$ criterion: ${\criterion{\check{\mathrm{M}}3}}$ is built into the constant-modulus reconstruction. Under the reconstruction in \Cref{eq_ch8_theta_nu_theta_c,eq_ch8_normal_form}, ${\criterion{\check{\mathrm{M}}1}\Longleftrightarrow\criterion{\nu1}}$ and ${\criterion{\check{\mathrm{M}}2}\Longleftrightarrow\criterion{\nu2}}$. Thus, the three descriptions of admissibility are equivalent:
\begin{align}
        \tcboxmath[
        colback=white,
        colframe=black,
        boxrule=0.5pt,
        arc=2mm,
        left=2mm,
        right=2mm,
        top=1mm,
        bottom=1mm
    ]{
    \criterion{\theta1}\text{--}\criterion{\theta3}
    \quad\Longleftrightarrow\quad
    \criterion{\nu1}\text{--}\criterion{\nu2}
    \quad\Longleftrightarrow\quad
    \criterion{\check{\mathrm{M}}1}\text{--}\criterion{\check{\mathrm{M}}3}.
    }
    \label{eq_ch8_theta_nu_M_admissibility_chain}
\end{align}
Thus, the doubled-cell conditions ${\criterion{\theta1}}$--${\criterion{\theta3}}$ are necessary and sufficient for the reconstructed stencil to be valid.

Let ${\bar M_\nu}$ be the projected stencil reconstructed from ${\nu\in\solspace}$ by \Cref{eq_ch8_theta_nu_theta_c,eq_ch8_normal_form} and the inverse SDFT. Projected-stencil uniqueness then gives the full family of valid unprojected stencils:
\begin{align}
    M=\bar M_\nu+K,\qquad \nu\in\solspace,\quad K\in\ker\Proj.
    \label{eq_all_unprojected_stencils}
\end{align}
The phase function uniquely fixes the projected stencil and labeled representation; ${K}$ is arbitrary redundant freedom in the kernel of the projection.

Of these two base-cell conditions, twisted oddness deserves particular emphasis because it is the base-cell form of the symplectic-Fourier Hermitian-symmetry condition ${\bar{\check{M}}(\bm{m})^{*}=\bar{\check{M}}(-\bm{m})}$. Thus, twisted oddness is not an additional phase convention---it expresses, on the ${d\times d}$ base cell, the Hermiticity requirement that gives the induced ${M}$-PPOs property \criterion{\mathrm{A}1}.

The classification problem is now one of choosing an admissible base-cell phase function ${\nu\in\solspace}$. To determine the topology and remaining freedom of ${\solspace}$, consider the involution ${\bm{\alpha}\mapsto\{-\bm{\alpha}\}_{d}}$ on the base cell: non-fixed points (${\bm{\alpha}\neq \{-\bm{\alpha}\}_{d}}$) occur in pairs and give continuous phase choices, whereas fixed points (${\bm{\alpha}=\{-\bm{\alpha}\}_{d}}$), or \emph{self-negative points}, impose self-consistency conditions ${2\nu(\bm{\alpha})\equiv_{2d}\Xi(\bm{\alpha})}$ through ${\criterion{\nu1}}$; after the origin is fixed by ${\criterion{\nu2}}$, any remaining fixed points give discrete choices. This is explored next.

\section{Topology and marginally constrained subspaces}
\label{sec_topo_and_margin_constr_subspaces}
We now determine the topology of the admissible parameter set ${\solspace}$ of \Cref{def_ch8_nu_admissibility}.

\begin{definition}[Admissible parameter set, topology and admissible parameter space]
\label{def_ch8_admissible_parameter_space_topology}
    The ambient parameter space is the finite Cartesian product
    ${\prod_{\bm{\alpha}\in\mathcal{P}_{d}} \reals/(2d\,\intsT),}$
    identified with the set of all base-cell phase functions ${\nu:\mathcal{P}_{d}\to\reals/(2d\,\intsT)}$.

    Each factor ${\reals/(2d\,\intsT)}$ is equipped with the quotient topology induced by the standard topology on ${\reals}$, and the finite Cartesian product is equipped with the corresponding product topology.

    The \emph{admissible parameter set} is the subset ${\solspace}$ of this ambient product space cut out by the admissibility criteria ${\criterion{\nu1}}$ and ${\criterion{\nu2}}$:
    ${\solspace \subseteq \prod_{\bm{\alpha}\in\mathcal{P}_{d}} \reals/(2d\,\intsT).}$
    The \emph{topology} ${\topol}$ on ${\solspace}$ is the subspace topology inherited from this ambient product topology.

    The topological space defined by the pair ${(\solspace,\topol)}$ is called the \emph{admissible parameter space}.
\end{definition}

The ambient product is an indexed Cartesian product containing one copy of the phase circle ${\reals/(2d\,\intsT)\cong S^1}$ for each of the ${d^2}$ base-cell points, and is therefore homeomorphic to ${(S^1)^{d^2}}$. Its elements are the tuples ${(\nu(\bm{\alpha}))_{\bm{\alpha}\in\mathcal{P}_{d}}}$ of phase values assigned on the base cell. The open sets of ${\solspace}$ are precisely its intersections with open sets of the ambient product.

\subsection{Continuous and discrete degrees of freedom}
\label{subsec_cont_and_disc_degree_freedom}
The parameterization above organizes the admissible phase data according to distinct negation pairs and self-negative points under the involution \({\bm{\alpha} \mapsto \{-\bm{\alpha}\}_d}\).
For points that are not fixed by this involution, \Cref{eq_ch8_twisted_oddness} shows that choosing ${\nu(\bm{\alpha})}$ determines ${\nu(\{-\bm{\alpha}\}_d)}$ uniquely. Since ${\nu}$ takes values in ${\reals/2d\,\intsT}$, each such free choice is a continuous phase parameter---that is, an ${S^1}$-valued degree of freedom.

At a fixed point of this involution, the twisted-oddness relation becomes the self-consistency condition:
\begin{align}
    2 \nu(\bm{\alpha}) \equiv_{2d} \Xi(\bm{\alpha}).
    \label{eq_ch8_self_negative_constraint_intro}
\end{align}
At these self-negative points, ${\Xi(\bm{\alpha})=0}$: there is only the origin for odd ${d}$, and four points for even ${d}$:
\begin{align}
    (0,0),
    \qquad
    \left(\frac{d}{2},0\right),
    \qquad
    \left(0,\frac{d}{2}\right),
    \qquad
    \left(\frac{d}{2},\frac{d}{2}\right).
    \label{eq_ch8_even_self_negative_points}
\end{align}
Thus \Cref{eq_ch8_self_negative_constraint_intro} permits two values, separated by ${d}$. Imposing ${\nu(\bm0)=0}$ by ${\criterion{\nu2}}$ removes the origin choice; each remaining self-negative point contributes a discrete ${\intsT[2]}$ choice rather than a circle---see \Cref{fig_ch8_low_dim_examples} below.

\subsection{Parameter-space structure and low-dimensional examples}
\label{subsec_topo_theo_and_low_dim_examps}
The following theorem counts the independent phase-pair and self-negative-point choices and records the topological type of ${\solspace}$.

\begin{theorem}[Structure of the admissible parameter space]
\label{thm_ch8_topology_of_admissible_stencils}
    Let ${\solspace}$ be the admissible parameter set of \Cref{def_ch8_nu_admissibility}. Then ${\solspace}$ is homeomorphic to
    \begin{align}
        \solspace
        \cong
        \left(S^1\right)^{N_c(d)}
        \times
        \intsT[2][N_b(d)],
    \end{align}
    where each ${\intsT[2]}$ factor is equipped with the discrete topology. Here,
    \begin{align}
        N_c(d)\coloneqq\frac{d^2-F(d)}{2},
        \quad\text{and}\quad N_b(d)\coloneqq F(d)-1
    \end{align}
    count the independent circle-valued parameters and binary (bit) parameters, respectively. The quantity
    \begin{align}
        F(d)
        \coloneqq
        \# 
        \left\{
        \bm{\alpha}\in\mathcal{P}_{d}
        :
        \bm{\alpha}=\{-\bm{\alpha}\}_{d} 
        \right\}
        =
        \begin{cases}
            1, & d \text{ odd}, \\
            4, & d \text{ even},
        \end{cases}
    \end{align}
    is the number of negation-fixed points in ${\mathcal{P}_{d}}$. Equivalently,
    \begin{align}
    \label{eq_ch8_topology_solution_space}
        \solspace
        \cong
        \begin{cases}
            \left(S^1\right)^{\frac{d^2-1}{2}},
            & d \text{ odd},
            \\[1mm]
            \left(S^1\right)^{\frac{d^2-4}{2}}
            \times
            \intsT[2][3],
            & d \text{ even}.
        \end{cases}
    \end{align}
\end{theorem}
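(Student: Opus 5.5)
Our plan is to build an explicit homeomorphism by splitting the base cell $\mathcal P_d$ into orbits of the involution $\iota:\bm\alpha\mapsto\{-\bm\alpha\}_d$. First we count the fixed points of $\iota$. The condition $\bm\alpha=\{-\bm\alpha\}_d$ means $2\alpha_i\equiv_d 0$ for $i=1,2$. For odd $d$ this forces $\alpha_i=0$. For even $d$ it allows $\alpha_i\in\{0,d/2\}$. Hence $F(d)=1$ or $4$, and the fixed set is exactly the list in \Cref{eq_ch8_even_self_negative_points}. The remaining $d^2-F(d)$ points split into $N_c(d)=(d^2-F(d))/2$ two-element orbits $\{\bm\alpha,\iota\bm\alpha\}$. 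We fix one representative $\bm\alpha_j$ per orbit, $j=1,\dots,N_c(d)$, which is possible since $d^2-F(d)$ is even.

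Next we evaluate the twist on fixed points. On an axis, $\Xi=0$ by \Cref{eq_ch8_Xi_def}. At the off-axis fixed point $(d/2,d/2)$ (even $d$ only) we get $\alpha_1+\alpha_2=d$, which is even, and $\{d\}_2=0$, so $\Xi=0$. At each fixed point, \criterion{\nu1} therefore reads $2\nu(\bm\alpha)\equiv_{2d}0$, whose solutions in $\reals/(2d\intsT)$ are exactly $\{0,d\}\cong\intsT[2]$. At the origin, \criterion{\nu2} selects $0$. This leaves $N_b(d)=F(d)-1$ independent bits.

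We then define the map
\[
\Psi:\solspace\to(\reals/(2d\intsT))^{N_c}\times\{0,d\}^{N_b}
\]
by sending $\nu$ to its values at the orbit representatives $\bm\alpha_j$ and at the nonzero fixed points. The inverse sets $\nu(\iota\bm\alpha_j)\coloneqq\Xi(\bm\alpha_j)-\nu(\bm\alpha_j)$ and $\nu(\bm 0)=0$.

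\begin{itemize}
\item \textbf{Inverse lands in $\solspace$.} We must check \criterion{\nu1} at $\iota\bm\alpha_j$ as well as at $\bm\alpha_j$. This follows from the symmetry $\Xi(\iota\bm\alpha)=\Xi(\bm\alpha)$ recorded in the Remark on the twist function, which one verifies directly: negation preserves whether each coordinate is nonzero, and it preserves the parity of $\alpha_1+\alpha_2$ because $\{-\alpha_1\}_d+\{-\alpha_2\}_d=2d-\alpha_1-\alpha_2$ off-axis.
\item \textbf{Bijectivity.} Every admissible $\nu$ is determined by $\Psi(\nu)$, so $\Psi$ is a bijection.
\end{itemize}

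For continuity, $\Psi$ is the restriction of a coordinate projection of the ambient product, so it is continuous in the subspace topology of \Cref{def_ch8_admissible_parameter_space_topology}. Its inverse is built componentwise from the continuous maps $x\mapsto\Xi-x$ on the circle, together with constant maps. The discrete factors cause no trouble: $\{0,d\}$ is a finite subset of the circle, so its subspace topology is discrete and it is homeomorphic to $\intsT[2]$. Finally, each circle $\reals/(2d\intsT)$ is homeomorphic to $S^1$, which gives the stated product form and the parity-split counts $(d^2-1)/2$ and $(d^2-4)/2$.

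The only step with real content is verifying that $\Xi$ vanishes at all fixed points, in particular at $(d/2,d/2)$. If $\Xi$ were $d$ there, the equation $2\nu\equiv_{2d}d$ would still have two solutions, so the topology would be unchanged, but the description of the bits would differ. Everything else is bookkeeping, namely the orbit-representative choice and the well-definedness of the inverse, which we would state cleanly rather than compute.
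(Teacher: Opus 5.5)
Your proposal is correct and follows essentially the same route as the paper. You decompose $\mathcal P_d$ into negation pairs and self-negative points, fix each partner through twisted oddness via the circle map $x\mapsto\Xi(\bm\alpha)-x$, obtain the two choices $\{0,d\}$ at each self-negative point (with the origin fixed by $\criterion{\nu2}$), and show that restriction to representatives is a homeomorphism. Your explicit checks that $\Xi(\{-\bm\alpha\}_d)=\Xi(\bm\alpha)$ and that $\Xi$ vanishes at $(d/2,d/2)$ simply supply details the paper asserts without computation.
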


\begin{proof}
The involution ${\bm{\alpha}\mapsto\{-\bm{\alpha}\}_d}$ partitions the base cell into distinct pairs and self-negative points. On a distinct pair, freely choosing ${\nu(\bm{\alpha})\in\reals/(2d\,\intsT)}$ determines its partner by
\begin{align}
 \nu(\{-\bm{\alpha}\}_d)\equiv_{2d}\Xi(\bm{\alpha})-\nu(\bm{\alpha}).
\end{align}
The relation is consistent in both directions because ${\Xi(\{-\bm{\alpha}\}_d)=\Xi(\bm{\alpha})}$. Each pair therefore contributes one circle-valued degree of freedom.

At a self-negative point, ${2\alpha_i\equiv_d0}$ for both components. There is only the origin for odd ${d}$, while for even ${d}$ each coordinate is ${0}$ or ${d/2}$, giving four such points. Thus ${F(d)=1}$ or ${4}$, respectively. At each of these points ${\Xi=0}$, and twisted oddness gives
\begin{align}
 2\nu(\bm{\alpha})\equiv_{2d}0,
 \label{eq_ch8_self_negative_constraint_proof}
\end{align}
with precisely the two choices ${0,d}$. The origin condition fixes the origin choice to zero. Consequently, ${N_b(d)=F(d)-1}$ and ${N_c(d)=(d^2-F(d))/2}$.

Choose one representative from each distinct pair and identify each non-origin self-negative choice ${0,d}$ with a bit. Restriction to these independent data gives a bijection
\begin{align}
 \solspace\longrightarrow(\reals/(2d\,\intsT))^{N_c(d)}\times\intsT[2][N_b(d)].
\end{align}
Its inverse sets ${\nu(\bm{0})=0}$, assigns the chosen circle and bit values and reconstructs every partner by twisted oddness. Thus every base-cell point receives exactly one value, and the reconstructed function satisfies ${\criterion{\nu1}}$--${\criterion{\nu2}}$. Restriction is continuous in the topology of \Cref{def_ch8_admissible_parameter_space_topology}; reconstruction is continuous because the partner varies by the circle map ${x\mapsto\Xi(\bm{\alpha})-x}$, while the bit choices carry the discrete topology. The bijection is therefore a homeomorphism, proving \Cref{thm_ch8_topology_of_admissible_stencils}, including the trivial singleton for ${d=1}$.
\end{proof}

\begin{remark}[Equivalent form without ${F(d)}$]
\label{remark_equivalent_parameter_count_no_F}
    Equivalently, one may write
    ${N_c(d)=\frac{d^2 - 1 - 3\{d-1\}_2}{2}}$, and ${N_b(d)=3\{d-1\}_2}$.
\end{remark}

Thus, the odd-dimensional admissible space is a connected torus, while the three independent even-dimensional bits distinguish eight disjoint copies of ${(S^1)^{(d^2-4)/2}}$.

For odd ${d}$, only the origin is fixed by negation, whereas for even ${d}$, there are four self-negative points, given in \Cref{eq_ch8_even_self_negative_points}. For example, at ${d=3}$ the eight non-origin points form four negation pairs and contribute four circles. At ${d=4}$, fixing the origin leaves three bits and six circles. At ${d=1}$, there are no free parameters and only the trivial solution. The preceding proof gives the componentwise count.

The first few cases are listed in \Cref{tab_ch8_low_dim_topology_counts}. This table records the topological type of the unconstrained admissible parameter space obtained solely from validity, before imposing any additional marginal or displacement-operator algebraic restrictions.

\begin{table}[tbp]
    \centering
    \small
    \renewcommand{\arraystretch}{1.25}
    \begin{tabular*}{\linewidth}{@{\extracolsep{\fill}}ccc@{}}
        \hline\hline
        ${d}$ & Free parameters & ${\solspace}$ \\
        \hline
        \noalign{\vskip 1pt}
        Odd ${d}$
        &
        ${\dfrac{d^2-1}{2}}$ circles
        &
        ${\left(S^1\right)^{\frac{d^2-1}{2}}}$ \\
        
        Even ${d}$
        &
        ${\dfrac{d^2-4}{2}}$ circles ${+3}$ bits
        &
        ${\left(S^1\right)^{\frac{d^2-4}{2}}\times\intsT[2][3]}$ \\

        ${1}$ & None & ${\{\ast\}}$ \\

        ${2}$ & ${3}$ bits & ${\intsT[2][3]}$ \\
        
        ${3}$ & ${4}$ circles & ${(S^1)^4}$ \\
        
        ${4}$ & ${6}$ circles ${+3}$ bits & ${(S^1)^6\times\intsT[2][3]}$ \\
        
        ${5}$ & ${12}$ circles & ${(S^1)^{12}}$ \\
        
        ${6}$ & ${16}$ circles ${+3}$ bits & ${(S^1)^{16}\times\intsT[2][3]}$ \\
        
        ${7}$ & ${24}$ circles & ${(S^1)^{24}}$ \\
        \hline\hline
    \end{tabular*}
    \caption[Admissible parameter spaces by dimensions]{Topological type of the admissible parameter space ${\solspace}$ for general odd and even dimensions and for the first few nontrivial dimensions. The table records ${\solspace}$ obtained from the ${\nu}$-admissibility criteria alone. Continuous phase-pair choices contribute ${S^1}$ circle factors, while self-negative points contribute ${\intsT[2]}$ bit factors after the origin is fixed. The trivial case ${d=1}$ is a singleton.}
    \label{tab_ch8_low_dim_topology_counts}
\end{table}

These low-dimensional parameter counts are further illustrated schematically in \Cref{fig_ch8_low_dim_examples}, where gray shadings denote continuous free ${S^1}$-type parameters and green shadings denote discrete free ${\intsT[2]}$-type choices.

\begin{figure}[h]
    \centering
    \includegraphics[width=\columnwidth]{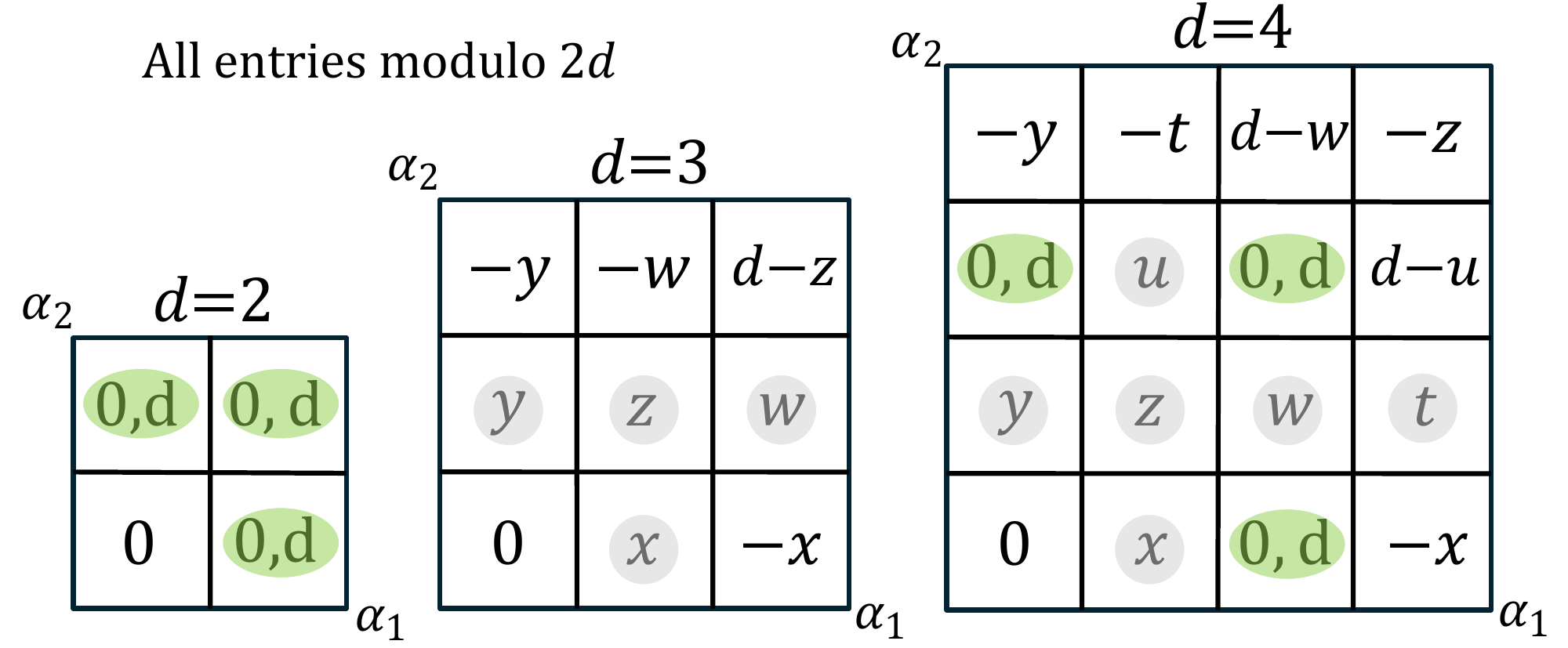}
    \caption[Free parameter structure for ${d=2,3,4}$]{Low-dimensional examples of the free parameter structure determined by twisted oddness on the base ${d\times d}$ cell. Left: ${d=2}$, where the parameter space is ${\intsT[2][3]}$. Middle: ${d=3}$, where the parameter space is ${(S^1)^4}$. Right: ${d=4}$, where the parameter space is ${(S^1)^6\times \intsT[2][3]}$. Gray shadings denote independent ${S^1}$-type parameters of ${\nu}$. Green shadings denote discrete ${\intsT[2]}$-type choices at self-negative points. The independent continuous choices are labeled by ${x, y, z, w, u, t}$. The displayed partners satisfy ${\nu(\{-\bm{\alpha}\}_d)\equiv_{2d}\Xi(\bm{\alpha})-\nu(\bm{\alpha})}$, with ${\Xi}$ given by \Cref{eq_ch8_Xi_def}; all entries are read modulo ${2d}$.}
    \label{fig_ch8_low_dim_examples}
\end{figure}

The topology above follows solely from validity. Additional physical or algebraic requirements can remove phase degrees of freedom, discretize continuous choices or force nonexistence. Tracking these changes in ${\solspace}$ distinguishes constraints selecting valid DWF subfamilies from those incompatible with validity itself.

\begin{notation}
\textbf{Base-cell label convention.} Whenever ${\bm{\alpha}\in\mathcal{P}_{d}}$ appears as the argument of a doubled-lattice quantity such as ${\hat{V}^{(2d)}}$ or ${\bar{\check{M}}}$, it denotes the ${\bm{b}=\bm{0}}$ representative ${\bm{\alpha}+d\,\bm{0}\in\mathcal{P}_{2d}}$. This convention suppresses only the explicit ${+d\,\bm{0}}$; carry terms arising when sums or negations are reduced to the base cell are retained.
\end{notation}

\subsection{Stencil-induced WHDOs}
\label{subsec_stencil_ind_M_WHDOs}
Given the ${M}$-induced PPO frame ${\{\hat{A}^{M}(\bm{\alpha})\}_{\bm{\alpha}\in\mathcal{P}_{d}}}$ associated with a valid stencil ${M}$, it is natural to introduce the corresponding displacement-operator family by taking its ${d}$-sized SDFT [see \Cref{def_SDFT}]. This is the direct analog of the symplectic-Fourier relation between phase-point operators and WHDOs in the standard discrete Wigner formalism.

\begin{definition}[${M}$-WHDOs]
\label{def_M_WHDOs}
    Let ${M}$ be a valid stencil, and let ${\{\hat{A}^{M}(\bm{\alpha})\}_{\bm{\alpha}\in\mathcal{P}_{d}}}$ denote the corresponding ${M}$-induced valid PPO frame. The associated \emph{${M}$-Weyl--Heisenberg displacement operators} (${M}$-WHDOs) are defined by
    \begin{align}
        \hat{V}^{M}(\bm{\alpha})
        \coloneqq
        \SDFT[d][\hat{A}^{M}](\bm{\alpha}),
    \label{eq_M_WHDO_def}
    \end{align}
    where ${\bm{\alpha}\in\mathcal{P}_{d}}$.
\end{definition}
Furthermore, since ${\SDFT[d]}$ is self-inverse, one also has that
\begin{align}
    \hat{A}^{M}(\bm{\alpha})
    =
    \SDFT[d][\hat{V}^{M}](\bm{\alpha}).
\label{eq_M_PPO_from_M_WHDO}
\end{align}
Analogously to the PPO construction, the induced WHDOs are stencil-dependent reweightings of the doubled family on the base ${d\times d}$ cell.

\begin{proposition}[Explicit form of the ${M}$-WHDOs]
\label{prop_explicit_form_M_WHDOs}
    Let ${M}$ be a valid stencil and let ${\nu}$ be its associated admissible base-cell phase function. Then, for every ${\bm{\alpha}\in\mathcal{P}_{d}}$, where ${\bm{\alpha}}$ is identified with its canonical representative in the base ${d\times d}$ cell of ${\mathcal{P}_{2d}}$,
    \begin{align}
        \hat{V}^{M}(\bm{\alpha})
        &=
        2d\,\bar{\check{M}}(\bm{\alpha})^* ~\hat{V}^{(2d)}(\bm{\alpha}) \nonumber \\
        &=
        \omegaT[2d][-\nu(\bm{\alpha})] ~\hat{V}^{(2d)}(\bm{\alpha}) \nonumber \\
        &=
        \omegaT[2d][-\nu(\bm{\alpha})-\alpha_1 \alpha_2] \wzed^{\alpha_2} \wex^{\alpha_1}.
        \label{eq_M_WHDO_explicit_form_doubled}
    \end{align}
\end{proposition}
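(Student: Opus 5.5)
The plan is to compute $\SDFT[d][\hat A^M]$ directly from the $M$-PPO definition, first expressing $\hat A^M(\bm\alpha)$ as a doubled-lattice Fourier sum of the doubled WHDOs. Start from \Cref{eq_projected_stencil_generates_same_PPO}, $\hat A^M(\bm\alpha)=\Op^{(2d)}[\bar M(\inputdot-2\bm\alpha)]=\tfrac12\sum_{\bm m}\bar M(\bm m-2\bm\alpha)\hat A^{(2d)}(\bm m)$. Next, expand the doubled PPOs in doubled WHDOs. The doubled analog of the PPO/WHDO Fourier relation gives $\hat A^{(2d)}=c\,\SDFT[2d][\hat V^{(2d)}]$ for a fixed normalization $c$ (I expect $c=1$ or $c=2$). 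I would verify this from \Cref{def_DV_doubled_WHDOs,eq_doubled_PPO} via $\hat V^{(2d)}(\bm m)\hat R$ and the parity relation $\hat R\,\hat V^{(2d)}(\bm n)\hat R=\hat V^{(2d)}(-\bm n)$; alternatively, I would cite it as recalled from Ref.~\cite{antonopoulos_Grand_2025}.

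Substituting this and exchanging sums gives a Parseval/shift computation. The sum over $\bm m$ of $\bar M(\bm m-2\bm\alpha)\,\omegaT[2d][-\symprod{\bm n}{\bm m}]$ equals $2d\,\omegaT[2d][-\symprod{\bm n}{2\bm\alpha}]\,\check{\bar M}(\bm n)^{*}$-type factors, up to the sign convention of the SDFT and the conjugation from the correlation. This leads to
\begin{align}
\hat A^M(\bm\alpha)\propto\sum_{\bm n\in\mathcal P_{2d}}\bar{\check M}(\bm n)^{*}\,\omegaT[d][-\symprod{\bm n}{\bm\alpha}]\,\hat V^{(2d)}(\bm n).
\end{align}
Here $\omegaT[2d][\symprod{\bm n}{2\bm\alpha}]=\omegaT[d][\symprod{\bm n}{\bm\alpha}]$ depends only on $\bm n$ modulo $d$. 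The conjugate comes from $\bar M$ entering as $\bar M^{*}$ via $M^{*}$ in \Cref{eq_M_PPO_def}; by \criterion{\mathrm{M}1} this is harmless, but I will track whether $\bar{\check M}(\bm n)$ or $\bar{\check M}(-\bm n)$ appears, using \criterion{\check{\mathrm M}1} to convert between them.

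The key step, and the main obstacle, is folding the sum over $\bm n\in\mathcal P_{2d}$ onto the base cell. Write $\bm n=\bm\beta+d\bm b$. I will show that each summand $\bar{\check M}(\bm n)^{*}\hat V^{(2d)}(\bm n)$ is invariant under $\bm n\mapsto\bm n+d\bm b$. The projection recurrence \Cref{eq_ch8_proj_recurrence_Mcheck} multiplies $\bar{\check M}$ by $(-1)^{\symprod{\bm n}{\bm b}-b_1b_2d}$. Direct computation from \Cref{def_DV_doubled_WHDOs} shows that $\hat V^{(2d)}(\bm n+d\bm b)$ acquires exactly the same sign: $\wzed^d=\wex^d=1$, and $\omegaT[2d][-(n_1+db_1)(n_2+db_2)]$ produces $(-1)^{n_1b_2+n_2b_1+db_1b_2}$. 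This is the sign-bookkeeping step that needs care, since the signs must cancel exactly, including the $d\,b_1b_2$ term.

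Once invariance holds, the $2d$-sum becomes $4$ times a base-cell sum. The result is $\hat A^M(\bm\alpha)=\tfrac1d\sum_{\bm\beta\in\mathcal P_d}\bigl[2d\,\bar{\check M}(\bm\beta)^{*}\hat V^{(2d)}(\bm\beta)\bigr]\omegaT[d][-\symprod{\bm\beta}{\bm\alpha}]$, with the constants fixed by the normalizations chosen above. This is $\SDFT[d]$ of the bracket, and self-inversity of $\SDFT[d]$ yields the first line of \Cref{eq_M_WHDO_explicit_form_doubled}. As a check on the constant, the unit-trace condition at $\bm\beta=\bm0$ should give $\hat V^M(\bm0)=\hat{\mathbb 1}$. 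The second line follows from the phase normal form \Cref{eq_ch8_normal_form} and $\theta(\bm\alpha+d\bm0)=\nu(\bm\alpha)$. The third line follows by inserting \Cref{def_DV_doubled_WHDOs}.
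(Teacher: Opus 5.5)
Your argument is correct in structure and rests on the same key step as the paper's proof. After writing $\bm n=\bm\beta+d\,\bm b$, the product $\bar{\check{M}}(\bm n)^*\hat V^{(2d)}(\bm n)$ is $d$-shift invariant: the projection recurrence and the quasiperiodicity of the doubled WHDOs give the same sign $(-1)^{n_1b_2+n_2b_1+d\,b_1b_2}$, so four equal copies appear.

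What differs is the direction of the computation. The paper works on the analysis side. It writes $\hat V^M(\bm\alpha)$ as the $2d$-sized SDFT of the comb-sampled correlation $\DeltaT[2][\inputdot]\,(\bar M^*\star\hat A^{(2d)})$ and applies the three-function identity \Cref{eq_cross_corr_theore_three_funcs}. The four shifted copies then come from the transform of the comb, and no inversion is needed. You work on the synthesis side. You expand $\hat A^{(2d)}$ in doubled WHDOs inside $\Op^{(2d)}[\bar M(\inputdot-2\bm\alpha)]$ and absorb the even-site sampling into a phase $\omegaT[d][-\symprod{\bm\alpha}{\bm n}]$ that is $d$-periodic in $\bm n$. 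You then fold the $2d$-sum and invert $\SDFT[d]$. This avoids the general correlation theorem and the comb transform, and it exhibits each $M$-PPO explicitly as a Fourier synthesis of rephased WHDOs. The cost is a final recognition-and-inversion step. The normalization is $c=1$, since the paper uses $\SDFT[2d][\hat A^{(2d)}]=\hat V^{(2d)}$ with a self-inverse SDFT; your trace check at $\bm 0$ confirms this.

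One bookkeeping point must be fixed, and a check at $\bm\alpha=\bm 0$ cannot detect it. In the paper's convention the output label comes first in the kernel, so $\hat A^{(2d)}(\bm m)=\frac1{2d}\sum_{\bm n}\hat V^{(2d)}(\bm n)\,\omegaT[2d][-\symprod{\bm m}{\bm n}]$, not $\omegaT[2d][-\symprod{\bm n}{\bm m}]$. The inner sum is then $2d\,\omegaT[d][-\symprod{\bm\alpha}{\bm n}]\,\check{\bar M}(-\bm n)=2d\,\omegaT[d][-\symprod{\bm\alpha}{\bm n}]\,\bar{\check{M}}(\bm n)^*$. So the conjugate comes from $\criterion{\check{\mathrm{M}}1}$. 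It does not come from the $M^*$ in the $M$-PPO definition, which is cancelled by the conjugation inside $\star$; your own starting formula already has $\bar M$ unconjugated.

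Your displayed kernels $\omegaT[d][-\symprod{\bm n}{\bm\alpha}]$ and $\omegaT[d][-\symprod{\bm\beta}{\bm\alpha}]$, taken literally together with $\bar{\check{M}}^*$, give the $d$-sized SDFT evaluated at $-\bm\alpha$. That would yield $\hat V^M(\bm\alpha)=\bigl[2d\,\bar{\check{M}}(\bm\alpha)^*\hat V^{(2d)}(\bm\alpha)\bigr]^\dagger$, which equals $\hat V^M(-\bm\alpha)$ and is therefore wrong at every non-self-negative label.

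With the orientation corrected, your computation gives exactly $\hat A^M(\bm\alpha)=\SDFT[d][X](\bm\alpha)$ with $X(\bm\beta)=2d\,\bar{\check{M}}(\bm\beta)^*\hat V^{(2d)}(\bm\beta)$. The remaining lines of the statement then follow as you describe.
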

\begin{proof}
The induced PPOs depend only on ${\bar{M}=\Proj M}$ by \Cref{eq_projected_stencil_generates_same_PPO}. We may therefore replace ${M}$ by ${\bar{M}}$ before applying the defining SDFT. The comb ${\DeltaT[2]}$ selects precisely the even labels ${\bm{m}=2\bm{\beta}}$, so the ${1/(2d)}$ normalization of the doubled transform reproduces the ${1/d}$ transform of ${\frac12(\bar{M}^*\star\hat{A}^{(2d)})(2\inputdot)}$. Thus
\begin{align}
 \hat{V}^M(\bm{\alpha})
 &=\left.\SDFT[2d][\DeltaT[2][\inputdot](\bar{M}^*\star\hat{A}^{(2d)})](\bm{m})\right|_{\bm{m}=\bm{\alpha}}.
\end{align}

For functions on ${\mathcal P_T}$, let ${\pointwise}$ denote pointwise multiplication: ${(u\pointwise v)(\bm a)=u(\bm a)v(\bm a)}$, with scalar multiplication understood when one function is operator-valued. For scalar functions ${g,h:\mathcal P_T\to\complex}$ and a scalar- or operator-valued function ${f}$, finite Fourier orthogonality gives
\begin{align}
    \SDFT[T][h^*\pointwise(g^*\star f)]
    &= \check h\star\bigl(\check g(-\inputdot)\pointwise\check f\bigr).
    \label{eq_cross_corr_theore_three_funcs}
\end{align}
When ${f}$ is operator-valued, the same identity holds with the SDFT and cross-correlation interpreted as scalar-weighted sums of operators.

Next, apply \Cref{eq_cross_corr_theore_three_funcs} with ${h=\DeltaT[2]}$, ${g=\bar{M}}$ and ${f=\hat{A}^{(2d)}}$. The sums are finite and the other factors are scalars. Since ${\SDFT[2d][\DeltaT[2]]=\frac d2\DeltaT[d]}$, ${\SDFT[2d][\hat{A}^{(2d)}]=\hat{V}^{(2d)}}$ and ${\bar{\check{M}}(-\bm{m})=\bar{\check{M}}(\bm{m})^*}$, evaluating the resulting cross-correlation gives
\begin{align}
 \hat{V}^M(\bm{\alpha})
 &=\frac d2\sum_{\bm{b}\in\intsT[2][2]}
 \bar{\check{M}}(\bm{\alpha}+d\,\bm{b})^*\hat{V}^{(2d)}(\bm{\alpha}+d\,\bm{b}).
 \label{eq_appendix_M_WHDO_bit_sum}
\end{align}

To evaluate the shifted-label sum, note that the projected Fourier stencil and doubled WHDO obey the same ${d}$-shift recurrence:
\begin{align}
 \bar{\check{M}}(\bm{\alpha}+d\,\bm{b})
 &=\omegaT[2][\symprod{\bm{\alpha}}{\bm{b}}-d\,b_1b_2]\bar{\check{M}}(\bm{\alpha}),\\
 \hat{V}^{(2d)}(\bm{\alpha}+d\,\bm{b})
 &=\omegaT[2][\symprod{\bm{\alpha}}{\bm{b}}-d\,b_1b_2]\hat{V}^{(2d)}(\bm{\alpha}).
\end{align}
In each summand of \Cref{eq_appendix_M_WHDO_bit_sum}, this sign multiplies its complex conjugate and cancels. Consequently the four terms coincide, and
\begin{align}
 \hat{V}^M(\bm{\alpha})=2d\,\bar{\check{M}}(\bm{\alpha})^*\hat{V}^{(2d)}(\bm{\alpha}).
\end{align}
This proves the first form in \Cref{prop_explicit_form_M_WHDOs}; its remaining forms follow by substituting the base-cell phase normal form and \Cref{def_DV_doubled_WHDOs}.
\end{proof}

Each ${M}$-WHDO is unitary because it is a unit-modulus rephasing of a doubled WHDO. Moreover, as ${\nu(\bm0) \equiv_{2d} 0}$ by ${\criterion{\nu2}}$, we also have that ${\hat V^M(\bm0)=\idop}$.

\begin{remark}[Interpretation of ${\nu}$]
    The base-cell phase function ${\nu}$ does not merely parameterize admissible projected stencils. It is also exactly the phase function ${\nu}$ that rephases the base-cell restriction of the doubled WHDO family to produce the stencil-induced displacement family.
\end{remark}

\subsection{Associated \titm{M}-characteristic functions}
\label{subsec_M_characteristic_functions}
The $M$-WHDOs provide the operator-basis symplectic-Fourier side to the $M$-PPO frame. For a valid stencil ${M}$ and ${\hat O\in\mathcal L(\complex^d)}$, we define the stencil-induced characteristic function---the $M$-characteristic function---by taking the $d$-sized SDFT of the corresponding $M$-DWF:
\begin{align}
    \bigChi_{\hat O}^M(\bm\alpha)
    &\coloneqq \SDFT[d][W_{\hat O}^M](\bm\alpha) \nonumber\\
    &= \frac1d\traceOf{\hat V^M(\bm\alpha)}{\hat O},
    \qquad \bm\alpha\in\mathcal P_d.
    \label{eq_M_characteristic_function}
\end{align}
The trace form follows from Hermiticity \criterion{\mathrm{A}1} and \Cref{eq_M_DWF_trace_form,eq_M_WHDO_def}. Since the SDFT is invertible, $W_{\hat{O}}^M$ and $\bigChi_{\hat{O}}^M$ contain the same operator information in their respective operator representations: the $M$-DWF uses the $M$-PPO frame, whereas the characteristic function uses the $M$-WHDO family. Combining the inverse SDFT with the Hilbert--Schmidt orthogonality of the ${M}$-PPOs, \criterion{\mathrm{A}3}, gives the reconstruction formula ${\hat{O} = \sum_{\bm{\alpha}\in\mathcal{P}_{d}} \bigChi_{\hat{O}}^M(\bm{\alpha}) \hat{V}^M(\bm{\alpha})^\dagger}$.

By \Cref{prop_explicit_form_M_WHDOs}, the characteristic function can also be written as ${\bigChi_{\hat O}^M(\bm\alpha)=d^{-1}\omegaT[2d][-\nu(\bm\alpha)]\traceOf{\hat V^{(2d)}(\bm\alpha)}{\hat O}}$. Furthermore, for two valid stencils $M_1$ and $M_2$ at the same dimension, with associated phase functions $\nu_1$ and $\nu_2$, the corresponding characteristic functions are related by
\begin{align}
    \bigChi_{\hat{O}}^{M_2}(\bm{\alpha})
    =
    \omegaT[2d][-\bigl(\nu_2(\bm{\alpha})-\nu_1(\bm{\alpha})\bigr)]
    \bigChi_{\hat{O}}^{M_1}(\bm{\alpha}),
    \label{eq_M_characteristic_function_change_of_stencil_rephasing}
\end{align}
and thus, the two characteristic functions of the fixed operator ${\hat{O}}$ satisfy
\begin{align}
    \abs{\bigChi_{\hat O}^{M_2}(\bm\alpha)}
    =
    \abs{\bigChi_{\hat O}^{M_1}(\bm\alpha)},
    \qquad \bm\alpha\in\mathcal P_d.
\end{align}
Their pointwise magnitude is therefore stencil independent, while their phase, wherever the characteristic functions are nonzero, is generally stencil dependent. The admissible phase function ${\nu}$ therefore determines both the projected stencil and the rephasing of these characteristic-function values, without introducing any additional independent parameters.

\subsection{Subspace imposed by canonical horizontal and vertical marginals}
\label{subsec_subspace_imposed_by_hori_vert_marg}
We now determine the restriction imposed on valid PPO frames by horizontal and vertical marginalization. Since marginalization is an additional structural requirement beyond PPO-frame validity itself, we first define the relevant marginal operators and recall the marginalization requirement of Ref.~\cite{antonopoulos_Grand_2025}.

The discussion below is restricted to axis-parallel marginal lines. More general lines are considered in Refs.~\cite{leonhardt_Discrete_1996,miquel_Quantum_2002, horibe_Existence_2002,wootters_Wignerfunction_1987}, but the associated marginal requirements can impose additional restrictions on the qudit dimension. Such generalizations are left to future work.

\begin{definition}[PPO marginals and marginalization]
\label{def_DV_PPO_marginals}
    Given a valid PPO frame ${\{\hat{A}(\bm{\alpha})\}_{\bm{\alpha}\in\mathcal{P}_{d}}}$, define the vertical (V) and horizontal (H) \emph{PPO marginals} by
    \begin{align}
        \hat{Q}_V(\alpha_1) \coloneqq \frac{1}{d} \sum_{\alpha_2} \hat{A}(\bm{\alpha}), \\
        \qq{ and }
        \hat{Q}_H(\alpha_2) \coloneqq \frac{1}{d} \sum_{\alpha_1} \hat{A}(\bm{\alpha}),
    \end{align} 
    where these operators are averages of the PPOs along vertical and horizontal (i.e., axis-parallel) lines, respectively.
    
    \begin{itemize}[leftmargin=3.45em, labelsep=0.6em]
        \item[\criterion{\mathrm{A}5}.] (Marginalization). A valid PPO frame satisfies \emph{marginalization} if each family ${\{\hat{Q}_V(\alpha_1)\}_{\alpha_1\in\intsT[d]}}$ and ${\{\hat{Q}_H(\alpha_2)\}_{\alpha_2\in\intsT[d]}}$ consists of rank-one projectors resolving the identity, and the two associated bases are mutually unbiased~\cite{gibbons_Discrete_2004,ivonovic_Geometrical_1981,durt_Mutually_2010}. For these rank-one projective resolutions, mutual unbiasedness is equivalently expressed by
        \begin{align}
            \traceOf{\hat{Q}_V(\alpha_1)}{\hat{Q}_H(\alpha_2)} = \frac{1}{d}.
            \label{eq_cross_family_overlap}
        \end{align}
    \end{itemize}
    \par\smallskip\noindent\textbf{Further restriction for the work below.} For the work below, we further restrict to the canonical labeled marginals
    \begin{align}
        \hat Q_V^M(\alpha_1)&=\qoutprod{\alpha_1}{\alpha_1},\\
        \hat Q_H^M(\alpha_2)&=\poutprod{\alpha_2}{\alpha_2},
        \label{eq_canonical_projector_marginals}
    \end{align}
    for every ${\alpha_1,\alpha_2\in\intsT[d]}$, with the position and momentum bases fixed by \Cref{def_mom_pos_bases}.
\end{definition}

The projector requirement in \criterion{\mathrm{A}5} is essential. (The cross-family overlap relation, \Cref{eq_cross_family_overlap}, already follows from \criterion{\mathrm{A}1} and \criterion{\mathrm{A}3}.) The further restriction specifies both bases and their labels, in addition to the rank-one projectivity required by \criterion{\mathrm{A}5}. For stencil-induced frames, it is equivalent to the Fourier-axis constraints
\begin{align}
    \bar{\check{M}}(0,\alpha_2)=\frac{1}{2d},
    \qq{ and }
    \bar{\check{M}}(\alpha_1,0)=\frac{1}{2d},
    \label{eq_ch8_axis_constraints_bar_check_M}
\end{align}
for all ${\alpha_1,\alpha_2\in\intsT[d]}$. Equivalently, in terms of the base-cell phase function ${\nu}$, this becomes
\begin{align}
\label{eq_ch8_axis_constraints_nu}
    \nu(0,\alpha_2)\equiv_{2d}0,
    \qq{ and }
    \nu(\alpha_1,0)\equiv_{2d}0.
\end{align}
The following calculation establishes both directions of the equivalence.

By the inverse SDFT in \Cref{eq_M_PPO_from_M_WHDO}, summing ${\hat{A}^M(\alpha_1,\alpha_2)}$ over ${\alpha_2}$ enforces ${\kappa_1=0}$ through finite Fourier orthogonality. With the ${1/d}$ marginal normalization, this gives
\begin{align}
 \hat{Q}_V^M(\alpha_1)
 &=\frac1d\sum_{\kappa_2\in\intsT[d]}\omegaT[d][-\alpha_1\kappa_2]\hat{V}^M(0,\kappa_2)\nonumber\\
 &=2\sum_{\kappa_2\in\intsT[d]}\omegaT[d][-\alpha_1\kappa_2]\bar{\check{M}}(0,\kappa_2)^*\wzed^{\kappa_2},\\
 \hat{Q}_H^M(\alpha_2)
 &=2\sum_{\kappa_1\in\intsT[d]}\omegaT[d][\alpha_2\kappa_1]\bar{\check{M}}(\kappa_1,0)^*\wex^{\kappa_1}.
\end{align}
Here \Cref{eq_M_WHDO_explicit_form_doubled} gives the second line, and summing over ${\alpha_1}$ similarly gives the horizontal expression. The explicit-form proof uses no marginal assumption.

The position and momentum conventions of \Cref{def_mom_pos_bases} give
\begin{align}
 \qoutprod{\alpha_1}{\alpha_1}
 &=\frac1d\sum_{\kappa_2\in\intsT[d]}\omegaT[d][-\alpha_1\kappa_2]\wzed^{\kappa_2},\\
 \poutprod{\alpha_2}{\alpha_2}
 &=\frac1d\sum_{\kappa_1\in\intsT[d]}\omegaT[d][\alpha_2\kappa_1]\wex^{\kappa_1}.
\end{align}
Thus setting ${\bar{\check{M}}=1/(2d)}$ on both Fourier axes is sufficient. Conversely, equality to the canonical projectors for all labels forces equality of the finite Fourier coefficients: ${2\bar{\check{M}}^*=1/d}$ on each axis. This proves necessity as well. In the base-cell phase normal form these coefficient constraints are precisely ${\nu(0,\alpha_2)\equiv_{2d}0}$ and ${\nu(\alpha_1,0)\equiv_{2d}0}$, as claimed.

Thus, the canonical horizontal and vertical marginal conditions fix the base-cell phase function ${\nu}$ along the coordinate axes. The resulting constrained parameter space is as follows:

\begin{proposition}[Topological reduction under canonical horizontal and vertical marginals]
\label{prop_canonical_hor_vert_line_marginals}
    Let ${\solspace^{\mathrm{HV}}\subseteq\solspace}$ denote the constrained parameter space obtained by imposing the canonical labeled horizontal and vertical marginal conditions [see \Cref{def_DV_PPO_marginals}], equivalently \Cref{eq_ch8_axis_constraints_nu}, equipped with the subspace topology inherited from ${\topol}$. Then,
    \begin{align}
        \solspace^{\mathrm{HV}}
        \cong
        (S^1)^{N_c^{\mathrm{HV}}(d)}\times \intsT[2][N_b^{\mathrm{HV}}(d)],
    \end{align}
    where 
    \begin{align}
        N_c^{\mathrm{HV}}(d) &\coloneqq \floor{\frac{(d-1)^2}{2}}
        \qq{ and } \\
        N_b^{\mathrm{HV}}(d) &\coloneqq \{d-1\}_2.        
    \end{align}
    Moreover,
    \begin{align}
        N_c^{\mathrm{HV}}(d) = \floor{\frac{(d-1)^2}{2}} &\leq \frac{d^2 - 1 - 3 \{d-1\}_2}{2} \nonumber \\
        &= N_c(d),
    \end{align}
    with equality only at ${d=2}$ for qudit dimensions ${d\geq 2}$, and
    \begin{align}
        N_b^{\mathrm{HV}}(d)
        =
        \{d-1\}_2
        \leq
        3\{d-1\}_2
        =
        N_b(d),
    \end{align}
    with equality only when ${d}$ is odd, in which case
    ${N_b^{\mathrm{HV}}(d)=N_b(d)=0}$.
    
    Thus, imposing horizontal and vertical marginals never increases the number of free parameters. It strictly reduces the number of continuous parameters for all ${d>2}$, and strictly reduces the number of discrete bit parameters when ${d}$ is even.
\end{proposition}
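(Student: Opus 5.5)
Since $\solspace^{\mathrm{HV}}$ is cut out of $\solspace$ by \Cref{eq_ch8_axis_constraints_nu}, the plan is to rerun the argument of \Cref{thm_ch8_topology_of_admissible_stencils} with the axis values frozen. First we check that the axis constraints are compatible with twisted oddness. The axes $\{(0,\alpha_2)\}\cup\{(\alpha_1,0)\}$ form a negation-invariant set of $2d-1$ points, and $\Xi$ vanishes on them by \Cref{eq_ch8_Xi_def}. Hence setting $\nu=0$ there satisfies $\criterion{\nu1}$ on the axes and $\criterion{\nu2}$. The axis points impose no conditions on off-axis values, since negation maps off-axis points to off-axis points. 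Therefore $\solspace^{\mathrm{HV}}$ is exactly the set of twisted-odd functions on the off-axis set $\{\bm\alpha:\alpha_1\neq0\neq\alpha_2\}$, which has $(d-1)^2$ points, extended by zero on the axes.

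Next we count the orbits of $\bm\alpha\mapsto\{-\bm\alpha\}_d$ on the off-axis set. For odd $d$ there are no off-axis fixed points, giving $(d-1)^2/2$ pairs and no bits. For even $d$ the only off-axis fixed point is $(d/2,d/2)$, which leaves $((d-1)^2-1)/2$ pairs and one fixed point. At that fixed point $\Xi(d/2,d/2)=d\{d+\{d\}_2\}_2=0$, so $2\nu\equiv_{2d}0$ allows two values, $0$ and $d$, i.e.\ one bit. In both cases the number of pairs is $\lfloor (d-1)^2/2\rfloor$ and the number of bits is $\{d-1\}_2$.

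The homeomorphism then follows exactly as in the proof of \Cref{thm_ch8_topology_of_admissible_stencils}. Restricting to one representative per pair plus the bit is continuous. The inverse reconstructs each partner via $x\mapsto\Xi(\bm\alpha)-x$, sets the axes to zero, and is continuous, with the bits carrying the discrete topology. We also note that the subspace topology inherited from $\topol$ agrees with the topology from the ambient product, so nothing changes there.

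It remains to prove the inequalities, which is the only slightly fiddly step; we split by parity.

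\emph{Odd $d$:} $N_c^{\mathrm{HV}}=(d-1)^2/2$ and $N_c=(d^2-1)/2$. Their difference is $(2d-2)/2=d-1$, which is positive for $d\ge 3$, and $N_b^{\mathrm{HV}}=N_b=0$.

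\emph{Even $d$:} $N_c^{\mathrm{HV}}=(d^2-2d)/2$ and $N_c=(d^2-4)/2$. Their difference is $(2d-4)/2=d-2$, which vanishes only at $d=2$. Also $N_b^{\mathrm{HV}}=1<3=N_b$.

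The same differences can be read off directly: the $2d-1$ axis points minus the origin contain $d-1$ negation pairs when $d$ is odd. When $d$ is even they contain $d-2$ pairs plus the two fixed points $(d/2,0)$ and $(0,d/2)$; freezing these removes exactly those circles and two bits. Together these establish the claimed strict reductions for $d>2$ and for even $d$.
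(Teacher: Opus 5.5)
Your proposal is correct and follows essentially the same route as the paper's proof. Both freeze $\nu=0$ on the negation-invariant coordinate axes, which is consistent because $\Xi$ vanishes there, and count the negation pairs on the off-axis set. For even $d$ there is also the single off-axis fixed point $(d/2,d/2)$, which contributes one bit. Both then reuse the restriction/reconstruction homeomorphism from the structure theorem and compare the parameter counts by parity: $d-1$ circles are lost for odd $d$, and $d-2$ circles plus two bits are lost for even $d$.
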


\begin{proof}
The marginal constraints fix ${\nu=0}$ on both coordinate axes. These values satisfy twisted oddness because ${\Xi=0}$ there. Therefore, only the off-axis set
\begin{align}
 \mathcal{P}_d^{\mathrm{off}}
 \coloneqq\{(\alpha_1,\alpha_2)\in\mathcal{P}_d\mid\alpha_1\ne0,\ \alpha_2\ne0\}
\end{align}
contributes free parameters. It has ${(d-1)^2}$ points and is invariant under negation.

For odd ${d}$, the only self-negative point is the origin, which has already been fixed. All off-axis points therefore form distinct pairs. For even ${d}$, precisely one self-negative point remains off-axis, namely ${(d/2,d/2)}$; its two choices ${0,d}$ contribute one bit. All other off-axis points again form distinct pairs. Hence
\begin{align}
 N_c^{\mathrm{HV}}(d)&=\left\lfloor\frac{(d-1)^2}{2}\right\rfloor,\\
 N_b^{\mathrm{HV}}(d)&=\{d-1\}_2.
\end{align}
The restriction and reconstruction maps used in the proof of \Cref{thm_ch8_topology_of_admissible_stencils}, now applied only to these surviving coordinates, give
\begin{align}
 \solspace^{\mathrm{HV}}\cong(S^1)^{N_c^{\mathrm{HV}}(d)}\times\intsT[2][N_b^{\mathrm{HV}}(d)].
\end{align}
Thus the count identifies the inherited topology, not only the number of parameters.

Subtracting these counts from the unrestricted counts removes ${d-1}$ circles and no bits for odd ${d}$, and ${d-2}$ circles and two bits for even ${d}$. Consequently the continuous count strictly decreases for every ${d>2}$, with equality only at ${d=2}$ among qudit dimensions ${d\ge2}$. The bit count decreases exactly in even dimensions. This proves the stated inequalities and equality cases; at ${d=1}$, both spaces are the trivial singleton.
\end{proof}

This topological reduction has a simple geometric origin: the constraints \Cref{eq_ch8_axis_constraints_nu} remove all freedom along the coordinate axes, so only the off-axis points contribute to the remaining parameter count and, hence, the restricted parameter space. In particular, for odd ${d}$, no discrete bit sector survives, while for even ${d}$, exactly one does, coming from the off-axis self-negative point ${\left(\frac{d}{2},\frac{d}{2}\right)}$.

\Cref{fig_ch8_low_dim_examples_HV} illustrates this reduction relative to \Cref{fig_ch8_low_dim_examples} while \Cref{tab_ch8_HV_topology_counts} summarizes the resulting low-dimensional counts.

\begin{figure}[t]
    \centering
    \includegraphics[width=\columnwidth]{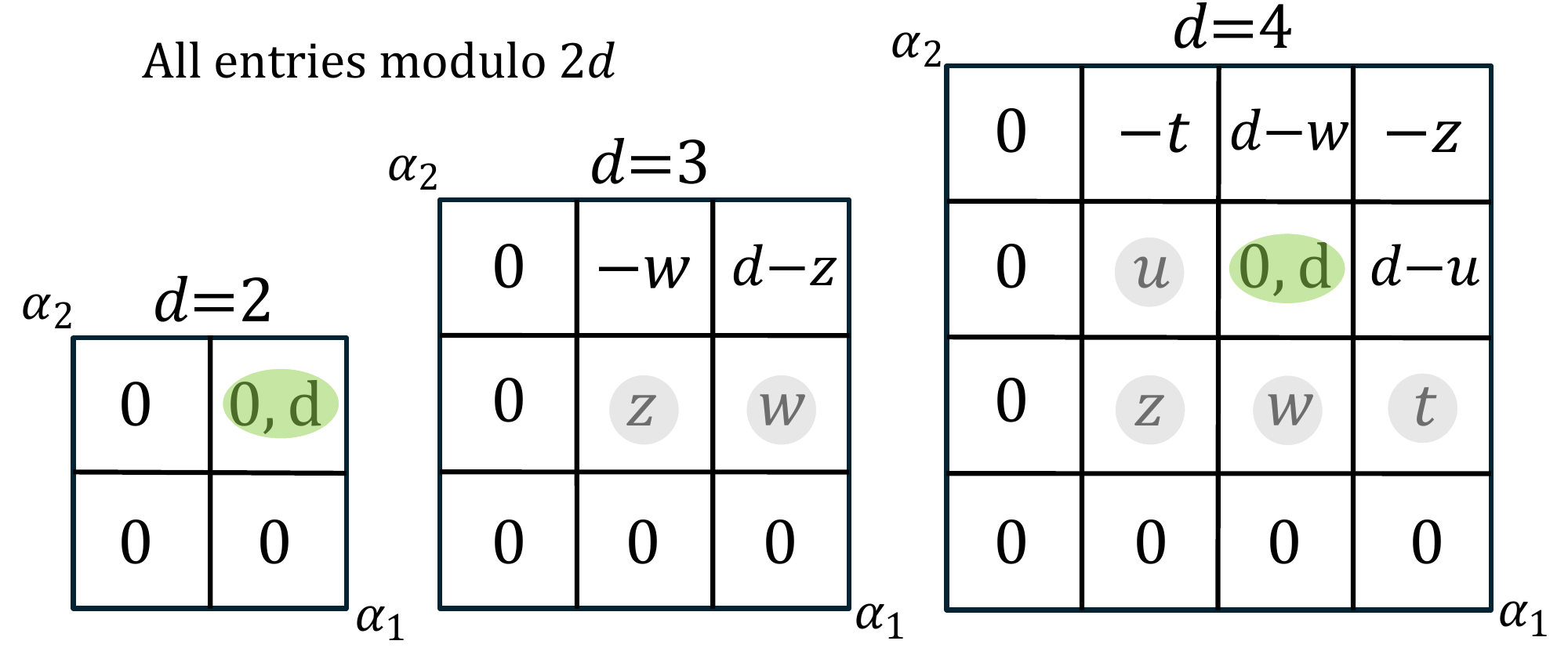}
    \caption[Free parameter structure under horizontal and vertical marginals for \titm{d=2,3,4}] {Low-dimensional examples of the free parameter structure after imposing the canonical horizontal and vertical marginal constraints. Left: ${d=2}$, where the remaining parameter space is ${\intsT[2]}$. Middle: ${d=3}$, where it is ${(S^1)^2}$. Right: ${d=4}$, where it is ${(S^1)^4\times\intsT[2]}$. The coordinate-axis values are fixed to zero by \Cref{eq_ch8_axis_constraints_nu}. Gray shadings denote the remaining independent ${S^1}$-type parameters of ${\nu}$, while green shadings denote the surviving discrete ${\intsT[2]}$-type choices at off-axis self-negative points. The displayed partners satisfy ${\nu(\{-\bm{\alpha}\}_d)\equiv_{2d}\Xi(\bm{\alpha})-\nu(\bm{\alpha})}$, with ${\Xi}$ given by \Cref{eq_ch8_Xi_def}; all entries are read modulo ${2d}$.}
    \label{fig_ch8_low_dim_examples_HV}
\end{figure}

\begin{table*}[t]
    \centering
    \renewcommand{\arraystretch}{1.2}
    \setlength{\tabcolsep}{10pt}
    \begin{tabular}{|c|c|c|c|c|c|c|c|}
        \hline
        Parameter space & ${d=1}$ & ${d=2}$ & ${d=3}$ & ${d=4}$ & ${d=5}$ & ${d=6}$ & ${d=7}$ \\
        \hline
        ${\solspace}$ & ${\{\ast\}}$ & ${\intsT[2][3]}$ & ${(S^1)^4}$ & ${(S^1)^6 \times \intsT[2][3]}$ & ${(S^1)^{12}}$ & ${(S^1)^{16} \times \intsT[2][3]}$ & ${(S^1)^{24}}$ \\
        \hline
        ${\solspace^{\mathrm{HV}}}$ & ${\{\ast\}}$ & ${\intsT[2]}$ & ${(S^1)^2}$ & ${(S^1)^4 \times \intsT[2]}$ & ${(S^1)^8}$ & ${(S^1)^{12} \times \intsT[2]}$ & ${(S^1)^{18}}$ \\
        \hline
    \end{tabular}
    \caption[Effect of horizontal and vertical marginals on the admissible parameter space]{Effect of imposing canonical horizontal and vertical marginals in the first few dimensions. The first row gives the unconstrained admissible parameter space ${\solspace}$, while ${\solspace^{\mathrm{HV}}}$ is obtained after fixing the Fourier-axis values. This removes all axis degrees of freedom---in even dimensions, only the off-axis self-negative bit survives. For ${d=1}$, both spaces are singletons.}
\label{tab_ch8_HV_topology_counts}
\end{table*}

Thus, imposing the canonical labeled horizontal and vertical projector marginals still leaves a substantial admissible family, with continuous freedom remaining for every ${d>2}$.

This canonical-marginal family can be identified with the translation-covariant part of Takami \textit{et al.}'s construction~\cite{takami_Wigner_2001}. Their operators ${\widetilde\Delta(n,m)}$ are the discrete Fourier transforms of their Fano operators [their Eq.~(27)]. Restricting to the rephased clock--shift form of their Eq.~(42), ${\widetilde\Delta_{0}(n,m)=e^{i\vartheta(n,m)}P^nS^m}$, we identify their Fano operators with the corresponding ${M}$-PPOs and set ${N=d}$, ${P=\wzed}$, ${S=\wex^{-1}}$ and ${\bm\alpha=(\{-m\}_d,n)^\tp}$. Writing their phase angle as ${\vartheta}$, the conventions are related by
\begin{align}
    \widetilde\Delta_{0}(n,m)&=\hat V^M(\bm\alpha),\nonumber\\
    e^{i\vartheta(n,m)}&=\omegaT[2d][-\nu(\bm\alpha)-\alpha_1\alpha_2].
    \label{eq_takami_phase_dictionary}
\end{align}
This identification concerns the translation-covariant sector of their construction; their full family also permits more general mixing of Fourier components.

\section{Stencil-induced Weyl--Heisenberg structure and algebraic restrictions of the admissible parameter space}
\label{sec_induced_WHDO_struct_and_algeb_restr}
The operator family introduced in \Cref{subsec_stencil_ind_M_WHDOs} makes the same base-cell phase data available for algebraic analysis. We give examples, derive the multiplication laws and determine how an order-dividing-${d}$ requirement restricts ${\solspace}$. The general-form WHDOs ${\hat V^{\mathrm{Gen}}}$ [\Cref{def_WHDO_general_phase}] are used for phase-convention-independent covariance statements.

The reduction-stencil (RS), coarse-grain-stencil (CGS) and Dirichlet-kernel-stencil (DKS) examples connect this abstract rephasing formula to the stencil constructions of Ref.~\cite{antonopoulos_Grand_2025}. RS and DKS are valid for odd ${d}$, whereas CGS is valid for even ${d}$. Additionally, direct comparison shows that, for every even ${d}$, CGS is Fourier-unitarily equivalent, up to a quarter-turn of phase-space coordinates, to Takami \textit{et al.}'s explicit even-dimensional construction~\cite{takami_Wigner_2001}.

\subsection{Example stencil-induced \titm{M}-WHDOs}
\label{subsec_example_M_WHDOS_induced_RS_CGS_DKS}\label{app_example_M_WHDOs}

The projected SDFTs of the RS, CGS and DKS from Ref.~\cite{antonopoulos_Grand_2025}, substituted into \Cref{prop_explicit_form_M_WHDOs}, give the factors ${c_M(\bm\alpha)=\omegaT[2d][-\nu(\bm\alpha)]}$ in \Cref{tab_example_M_WHDOs}. The derivations and the comparison with Gross follow below.

For odd ${d}$, the RS-induced ${M}$-WHDO coincides, under the present conventions, with Gross' displacement-operator convention~\cite{gross_Hudsons_2006}; the corresponding PPO-level equivalence was established in Sec.~VII of Ref.~\cite{antonopoulos_Grand_2025}. For the DKS, define $\indicDKS : \intsT[2d] \to \intsT[2]$,
\begin{align}
    \indicDKS(m)
&\coloneqq
    1-H(\Re\omegaT[2d][m]).
\end{align}
Here ${H}$ is the Heaviside step function, and the superscript ${\mathrm c}$ denotes the complement of the scalar boxcar function ${\Pi(m)=H(\Re\omegaT[2d][m])}$ defined in Eq.~(A19) of Ref.~\cite{antonopoulos_Grand_2025}. Also, define its vector-valued extension, $\bm{\indicDKS}:\mathcal P_{2d} \to \intsT[2][2]$,
\begin{align}
   \bm{\indicDKS}(\bm m)
&=
    \begin{pmatrix}
        \indicDKS(m_1)
    \\
        \indicDKS(m_2)
    \end{pmatrix}.
\end{align}
Its values at ${\bm\alpha\in\mathcal P_d}$ use the canonical base-cell representatives. Here ${d}$ is odd, so ${\Re\omegaT[2d][m_j]=\cos(\pi m_j/d)}$ never vanishes at an integer label; thus, no choice of ${H(0)}$ is needed. Since the DKS prefactor takes values in ${\{+1,-1\}}$, it is specifically a sign rephasing.

\begin{table}[b]
    \footnotesize
    \setlength{\tabcolsep}{0.5pt}
    \renewcommand{\arraystretch}{1.2}
    \begin{tabular}{c c c}
        \hline\hline
        Stencil
        & ${c_M(\bm{\alpha})}$
        & ${\nu(\bm{\alpha}) \equiv_{2d}}$ \\
        \hline
        RS
        &
        ${\omegaT[2d][-d\,\alpha_1\alpha_2]}$
        &
        ${d\,\alpha_1\alpha_2}$
        \\[1mm]
        CGS
        &
        ${\omegaT[2d][\symprod{\bm{\alpha}}{\{\bm{\alpha}\}_2}]}$
        &
        ${-\symprod{\bm{\alpha}}{\{\bm{\alpha}\}_2}}$
        \\[1mm]
        DKS
        &
        ${\omegaT[2][
            \symprod{\bm{\indicDKS}(\bm{\alpha})}{\bm{\alpha}}
            -\indicDKS(\alpha_1)\indicDKS(\alpha_2)]}$
        &
        ${d\,\bigl[
            \symprod{\bm{\indicDKS}(\bm{\alpha})}{\bm{\alpha}}
            -\indicDKS(\alpha_1)\indicDKS(\alpha_2)
        \bigr]}$
        \\
        \hline\hline
    \end{tabular}
    \caption{Example stencil-induced rephasing factors and corresponding base-cell phase functions.}
    \label{tab_example_M_WHDOs}
\end{table}

We derive the three entries of \Cref{tab_example_M_WHDOs} in turn. For the reduction stencil in odd dimension, applying the SDFT and projector to the stencil of Ref.~\cite{antonopoulos_Grand_2025} gives
\begin{align}
 \bar{\check{M}}_{\RS}(\bm{m})=\frac1{2d}\omegaT[2d][d\,m_1m_2].
\end{align}
Substitution into \Cref{eq_M_WHDO_explicit_form_doubled} gives ${\hat{V}^{\RS}(\bm{\alpha})=(-1)^{\alpha_1\alpha_2}\hat{V}^{(2d)}(\bm{\alpha})}$. To compare with Gross' convention, write out the doubled WHDO:
\begin{align}
 \hat{V}^{\RS}(\bm{\alpha})
 &=\omegaT[2d][(d-1)\alpha_1\alpha_2]\wzed^{\alpha_2}\wex^{\alpha_1}\nonumber\\
 &=\omegaT[d][-2^{-1}\alpha_1\alpha_2]\wzed^{\alpha_2}\wex^{\alpha_1},
 \label{eq_appH_RS_M_WHDO_final}
\end{align}
where ${2^{-1}=(d+1)/2}$ in ${\intsT[d]}$ for odd ${d}$. Thus the RS-induced family agrees with Gross' WHDO convention~\cite{gross_Hudsons_2006}.

For even ${d}$, Fourier transformation and projection of the coarse-grain stencil of Ref.~\cite{antonopoulos_Grand_2025} give
\begin{align}
 \bar{\check{M}}_{\CGS}(\bm{m})=\frac1{2d}\omegaT[2d][\symprod{\{\bm{m}\}_2}{\bm{m}}].
\end{align}
In \Cref{eq_M_WHDO_explicit_form_doubled}, complex conjugation reverses this phase; antisymmetry of ${\mat\Omega}$ therefore gives ${\hat{V}^{\CGS}(\bm{\alpha})=\omegaT[2d][\symprod{\bm{\alpha}}{\{\bm{\alpha}\}_2}]\hat{V}^{(2d)}(\bm{\alpha})}$, as stated in \Cref{tab_example_M_WHDOs}.

For odd ${d}$, Eqs.~(A17)--(A21) of Ref.~\cite{antonopoulos_Grand_2025} give the unprojected DKS SDFT as ${2/d}$ on the centered ${d\times d}$ cell and zero elsewhere. This support and its three translates by ${d\,\bm{b}}$ partition ${\mathcal{P}_{2d}}$. Thus only ${\bm{b}=\bm{\indicDKS}(\bm{m})}$ contributes in \Cref{eq_proj_explicit}, whose sign gives
\begin{align}
 \bar{\check{M}}_{\DKS}(\bm{m})
 =\frac1{2d}\omegaT[2][\symprod{\bm{\indicDKS}(\bm{m})}{\bm{m}}-\indicDKS(m_1)\indicDKS(m_2)].
\end{align}
This projected Fourier stencil is real, so the conjugation in \Cref{eq_M_WHDO_explicit_form_doubled} leaves its sign unchanged. Multiplication by ${2d\,\hat{V}^{(2d)}(\bm{\alpha})}$ therefore gives the DKS-induced family in \Cref{tab_example_M_WHDOs}.

\subsection{Basic \titm{M}-WHDO properties from \criterion{\mathrm{A}1}--\criterion{\mathrm{A}4}}
The next question is which familiar WHDO properties survive for every valid stencil, independently of any further algebraic restrictions on ${\nu}$. The following proposition records the properties inherited directly from the \criterion{\mathrm{A}1}--\criterion{\mathrm{A}4} validity criteria of the induced ${M}$-PPO frame.

\begin{proposition}[Basic properties of the ${M}$-WHDOs]
\label{prop_basic_M_WHDO_properties}
    Let ${M}$ be a valid stencil. Addition and negation of phase-space labels below are understood modulo ${d}$. Then, for all ${\bm{\alpha},\bm{\beta}\in\mathcal{P}_{d}}$, the associated ${M}$-WHDO family satisfies:
    \begin{itemize}
    \item[\criterion{\mathrm{V}1}.]\emph{(Adjoint relation).} The ${M}$-WHDOs satisfy
    \begin{align}
    \label{eq_M_WHDO_V1}
        \hat{V}^{M}(\bm{\alpha})^\dagger
        =
        \hat{V}^{M}(-\bm{\alpha}).
    \end{align}
    Furthermore, in the ${d=2}$ case, the ${M}$-WHDOs are Hermitian.

    \item[\criterion{\mathrm{V}2}.]\emph{(Traciality).} The ${M}$-WHDOs satisfy
    \begin{align}
    \label{eq_M_WHDO_V2}
        \traceOf{\hat{V}^{M}(\bm{\alpha})}{}
        =
        d\,\DeltaT[d][\bm{\alpha}].
    \end{align}

    \item[\criterion{\mathrm{V}3}.]\emph{(Hilbert--Schmidt orthogonality).} The ${M}$-WHDOs form a Hilbert--Schmidt orthogonal operator basis:
    \begin{align}
    \label{eq_M_WHDO_V3}
        \traceOf{\hat{V}^{M}(\bm{\alpha})^\dagger}{\hat{V}^{M}(\bm{\beta})}
        =
        d\,\DeltaT[d][\bm{\alpha}-\bm{\beta}].
    \end{align}

    \item[\criterion{\mathrm{V}4}.]\emph{(WHDO-covariance).} The ${M}$-WHDOs act as translations on the ${d\times d}$ phase space, translating an ${M}$-PPO phase-space label:
    \begin{align}
    \label{eq_M_WHDO_V4}
        \hat{V}^{M}(\bm{\beta})
        \hat{A}^{M}(\bm{\alpha})
        \hat{V}^{M}(\bm{\beta})^\dagger
        =
        \hat{A}^{M}(\bm{\alpha}+\bm{\beta}).
    \end{align}
    \end{itemize}
\end{proposition}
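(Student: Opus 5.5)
The plan is to derive each property directly from the two descriptions of $\hat V^M$ available: the defining SDFT relation \Cref{eq_M_WHDO_def} together with its inverse \Cref{eq_M_PPO_from_M_WHDO}, and the explicit rephasing form of \Cref{prop_explicit_form_M_WHDOs}. Properties \criterion{\mathrm{V}1} and \criterion{\mathrm{V}4} will come most cleanly from the PPO side. Properties \criterion{\mathrm{V}2} and \criterion{\mathrm{V}3} will come most cleanly from the explicit form, since $\hat V^M(\bm\alpha)$ is a unit-modulus multiple of $\wzed^{\alpha_2}\wex^{\alpha_1}$.

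\textbf{Adjoint relation.} For \criterion{\mathrm{V}1}, take the adjoint of $\hat V^M(\bm\alpha)=\frac1d\sum_{\bm\beta}\hat A^M(\bm\beta)\omegaT[d][-\symprod{\bm\alpha}{\bm\beta}]$. Hermiticity \criterion{\mathrm{A}1} of the $M$-PPOs gives $\hat V^M(\bm\alpha)^\dagger=\frac1d\sum_{\bm\beta}\hat A^M(\bm\beta)\omegaT[d][\symprod{\bm\alpha}{\bm\beta}]=\hat V^M(-\bm\alpha)$, with $-\bm\alpha$ reduced modulo $d$. For $d=2$, every $\bm\alpha\in\mathcal P_2$ is self-negative, so $\hat V^M(\bm\alpha)^\dagger=\hat V^M(\bm\alpha)$. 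As a consistency check, one could also verify this against twisted oddness \criterion{\nu1} in the explicit form: the carry sign of $\hat V^{(2d)}(-\bm\alpha)$ relative to the base-cell representative should cancel exactly against $\Xi$. This check is not needed for the proof.

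\textbf{Traciality and orthogonality.} For \criterion{\mathrm{V}2}, use $\hat V^M(\bm\alpha)=\omegaT[2d][-\nu(\bm\alpha)-\alpha_1\alpha_2]\wzed^{\alpha_2}\wex^{\alpha_1}$. The trace of $\wzed^{\alpha_2}\wex^{\alpha_1}$ is zero unless $\alpha_1\equiv_d0$. When $\alpha_1\equiv_d0$ it equals $\sum_j\omegaT[d][\alpha_2 j]=d\,\DeltaT[d][\alpha_2]$. At $\bm\alpha=\bm0$ the prefactor is $1$ by \criterion{\nu2}. Alternatively, \criterion{\mathrm{V}2} follows from unit trace \criterion{\mathrm{A}2} and $\frac1d\sum_{\bm\beta}\omegaT[d][-\symprod{\bm\alpha}{\bm\beta}]=d\,\DeltaT[d][\bm\alpha]$. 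For \criterion{\mathrm{V}3}, the phases combine to a single unit-modulus factor, so
\[
\traceOf{\hat V^M(\bm\alpha)^\dagger}{\hat V^M(\bm\beta)}=(\text{phase})\cdot\traceOf{\wex^{-\alpha_1}\wzed^{\beta_2-\alpha_2}\wex^{\beta_1}}{}.
\]
This trace vanishes unless $\bm\alpha\equiv_d\bm\beta$. When $\bm\alpha=\bm\beta$, unitarity gives the value $d$. A cross-check via \criterion{\mathrm{A}3} and Parseval for the SDFT gives the same result.

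\textbf{Covariance.} For \criterion{\mathrm{V}4}, observe that $\hat V^M(\bm\beta)$ differs from $\hat V^{\mathrm{Gen}}(\bm\beta)$ (with $\bm\beta$ taken as its base-cell integer representative) only by a scalar phase. Conjugation is insensitive to such a phase, so \criterion{\mathrm{A}4}, which the $M$-PPO frame satisfies by the Stencil Theorem, gives the result immediately. The main point requiring care, rather than a real obstacle, is label reduction. \criterion{\mathrm{A}4} is stated for integer $\bm\kappa$ with the result reduced modulo $d$, while $\hat V^M$ is indexed by base-cell representatives. Since conjugation kills the phase, the identity holds for every integer lift of $\bm\beta$, and reduction modulo $d$ is harmless.
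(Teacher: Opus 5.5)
Your proposal is correct and follows essentially the same route as the paper: \criterion{\mathrm{V}1} comes from taking the adjoint of the defining SDFT and using \criterion{\mathrm{A}1}. \criterion{\mathrm{V}3} comes from the vanishing of clock--shift monomial traces together with the unit-modulus rephasing, and \criterion{\mathrm{V}4} from the observation that the scalar phase relating $\hat V^M$ to $\hat V^{\mathrm{Gen}}$ cancels under conjugation, so \criterion{\mathrm{A}4} applies. The only difference is that your primary argument for \criterion{\mathrm{V}2} uses the explicit form together with \criterion{\nu2}, whereas the paper uses your stated alternative (unit trace \criterion{\mathrm{A}2} plus finite Fourier orthogonality); both are valid.
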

\begin{proof}
We prove \criterion{\mathrm{V}1}--\criterion{\mathrm{V}4} in turn.
For \criterion{\mathrm{V}1}, taking the adjoint of the defining SDFT and using the Hermiticity of each PPO \criterion{\mathrm{A}1} conjugates the Fourier kernel. This reverses its label:
\begin{align}
 \hat{V}^M(\bm{\alpha})^\dagger
 &=\frac1d\sum_{\bm\beta\in\mathcal P_d}\omegaT[d][\symprod{\bm\alpha}{\bm\beta}]\hat A^M(\bm\beta)\nonumber\\
 &=\hat V^M(-\bm\alpha).
\end{align}
Here negation is modulo ${d}$. At ${d=2}$, every label is self-negative, so all ${M}$-WHDOs are Hermitian.

For \criterion{\mathrm{V}2}, \criterion{\mathrm{A}2} makes every PPO trace equal to one. Linearity and finite Fourier orthogonality therefore give
\begin{align}
 \traceOf{\hat{V}^M(\bm{\alpha})}{}
 &=\frac1d\sum_{\bm{\beta}\in\mathcal{P}_d}\omegaT[d][-\symprod{\bm{\alpha}}{\bm{\beta}}]\nonumber\\
 &=d\,\DeltaT[d][\bm{\alpha}].
\end{align}

For \criterion{\mathrm{V}3}, the trace of a clock--shift monomial vanishes unless both exponents are zero modulo ${d}$. Consequently the doubled WHDOs restricted to canonical base-cell labels are Hilbert--Schmidt orthogonal. Substituting \Cref{eq_M_WHDO_explicit_form_doubled} gives
\begin{align}
 &\traceOf{\hat{V}^M(\bm{\alpha})^\dagger}{\hat{V}^M(\bm{\beta})}\nonumber\\
 &\quad=4d^2\,\bar{\check{M}}(\bm{\alpha})\bar{\check{M}}(\bm{\beta})^*
 d\,\DeltaT[d][\bm{\alpha}-\bm{\beta}]\nonumber\\
 &\quad=d\,\DeltaT[d][\bm{\alpha}-\bm{\beta}].
\end{align}
The delta forces equality of the canonical labels, where ${4d^2\,|\bar{\check{M}}(\bm{\alpha})|^2=1}$ by ${\criterion{\check{\mathrm{M}}3}}$.

For \criterion{\mathrm{V}4}, ${\hat{V}^M(\bm{\beta})}$ differs from the general-form WHDO at the same label only by a unit-modulus scalar. That scalar cancels under conjugation. Therefore \criterion{\mathrm{A}4} gives
\begin{align}
 \hat{V}^M(\bm{\beta})\hat{A}^M(\bm{\alpha})\hat{V}^M(\bm{\beta})^\dagger
 =\hat{A}^M(\bm{\alpha}+\bm{\beta}),
\end{align}
independently of the displacement phase convention. This proves \criterion{\mathrm{V}4} with addition understood in ${\mathcal{P}_d}$.
\end{proof}

\subsection{\titm{M}-WHDO multiplication laws}
\label{subsec_mult_law_and_stencil_coboundary}
The commutation rule is inherited unchanged, whereas the ordered-product law contains both the doubled-WHDO multiplier, written in base-cell labels, and a stencil-dependent phase determined by ${\nu}$.

The clock--shift relation ${\wex^{\alpha_1}\wzed^{\alpha_2}=\omegaT[d][-\alpha_1\alpha_2]\wzed^{\alpha_2}\wex^{\alpha_1}}$, together with \Cref{def_DV_doubled_WHDOs}, gives
\begin{align}
 \hat{V}^{(2d)}(\bm{\alpha})\hat{V}^{(2d)}(\bm{\beta})
 =\omegaT[2d][-\symprod{\bm{\alpha}}{\bm{\beta}}]\hat{V}^{(2d)}(\bm{\alpha}+\bm{\beta}).
 \label{eq_reference_doubled_product}
\end{align}
The output ${\bm\alpha+\bm\beta}$ in \Cref{eq_reference_doubled_product} is initially a doubled-lattice label. Reducing it to the base cell introduces a carry phase; the stencil rephasing then changes the multiplier as recorded below.

\begin{proposition}[${M}$-WHDO commutation and ordered-product laws]
\label{prop_M_WHDO_commutation_combination}
    Let ${M}$ be a valid stencil with associated base-cell phase function ${\nu}$. For ${\bm{\alpha},\bm{\beta}\in\mathcal{P}_{d}}$, take their canonical integer representatives and write their ordinary sum as
    \begin{align}
        \bm{\alpha}+\bm{\beta}
        &=
        \bm{\gamma}+d\,\bm{b},
        \nonumber \\
        \bm{\gamma}
        &\coloneqq
        \{\bm{\alpha}+\bm{\beta}\}_{d}
        \in\intsT[d][2],
        \\
        \bm{b}
        &\coloneqq
        \frac{\bm{\alpha}+\bm{\beta}-\bm{\gamma}}{d}
        \in\intsT[2][2].
    \end{align}
    Then
    \begin{align}
        \hat{V}^{M}(\bm{\alpha})\hat{V}^{M}(\bm{\beta})
        &=
        \omegaT[d][-\symprod{\bm{\alpha}}{\bm{\beta}}]
        ~\hat{V}^{M}(\bm{\beta})\hat{V}^{M}(\bm{\alpha}),
        \label{eq_M_WHDO_commutation}
        \\
        \hat{V}^{M}(\bm{\alpha})\hat{V}^{M}(\bm{\beta})
        &=
        \omegaT[2d][D_{\nu}(\bm{\alpha},\bm{\beta})+\theta_c(\bm{\gamma},\bm{b})
        -\symprod{\bm{\alpha}}{\bm{\beta}}]\hat{V}^{M}(\bm{\gamma}),
        \label{eq_M_WHDO_combination}
    \end{align}
    where
    \begin{align}
        D_{\nu}(\bm{\alpha},\bm{\beta})
        &\coloneqq
        \nu(\bm{\gamma})-\nu(\bm{\alpha})-\nu(\bm{\beta})
        \label{eq_ch8_Dnu_def}
    \end{align}
    is the change in the ordered-product phase caused by the stencil rephasing, and ${\theta_c\in\{0,d\}}$ is the canonical carry phase defined in \Cref{eq_ch8_theta_c_def}.
\end{proposition}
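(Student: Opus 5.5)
The plan is to reduce everything to the explicit form of \Cref{prop_explicit_form_M_WHDOs}, namely ${\hat V^M(\bm\alpha)=\omegaT[2d][-\nu(\bm\alpha)]\hat V^{(2d)}(\bm\alpha)}$ with ${\bm\alpha}$ read as its canonical base-cell representative. Both laws then become statements about the doubled WHDOs, and the scalar prefactors ${\omegaT[2d][-\nu]}$ are tracked separately.

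For the commutation law \Cref{eq_M_WHDO_commutation}, the stencil phases ${\omegaT[2d][-\nu(\bm\alpha)-\nu(\bm\beta)]}$ appear identically on both sides, so they cancel. It then suffices to use \Cref{eq_reference_doubled_product} twice, in both orders. Both products land on the same doubled label ${\bm\alpha+\bm\beta}$ as an integer vector, so no carry issue arises there. The ratio of multipliers is
\begin{align}
\omegaT[2d][-\symprod{\bm\alpha}{\bm\beta}+\symprod{\bm\beta}{\bm\alpha}]=\omegaT[2d][-2\symprod{\bm\alpha}{\bm\beta}]=\omegaT[d][-\symprod{\bm\alpha}{\bm\beta}],
\end{align}
by antisymmetry of ${\mat\Omega}$. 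As a cross-check, this also follows directly from the clock--shift relation, since the prefactors are scalars.

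For the ordered-product law \Cref{eq_M_WHDO_combination}, the steps are as follows.
\begin{enumerate}
\item Write ${\hat V^M(\bm\alpha)\hat V^M(\bm\beta)=\omegaT[2d][-\nu(\bm\alpha)-\nu(\bm\beta)-\symprod{\bm\alpha}{\bm\beta}]\hat V^{(2d)}(\bm\alpha+\bm\beta)}$, using \Cref{eq_reference_doubled_product} with the integer sum of the canonical representatives.
\item Reduce the doubled label ${\bm\alpha+\bm\beta=\bm\gamma+d\,\bm b}$ to the base cell. This is done with the ${d}$-shift recurrence for ${\hat V^{(2d)}}$ recorded in the proof of \Cref{prop_explicit_form_M_WHDOs}: ${\hat V^{(2d)}(\bm\gamma+d\,\bm b)=\omegaT[2][\symprod{\bm\gamma}{\bm b}-d\,b_1b_2]\hat V^{(2d)}(\bm\gamma)}$.
\item Convert the sign in step 2 to ${\omegaT[2d][\theta_c(\bm\gamma,\bm b)]}$, using \Cref{eq_ch8_theta_c_def,eq_ch8_beta_c_def} and ${d\equiv_2\{d\}_2}$.
\item Substitute ${\hat V^{(2d)}(\bm\gamma)=\omegaT[2d][\nu(\bm\gamma)]\hat V^M(\bm\gamma)}$. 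Collecting exponents gives ${D_\nu(\bm\alpha,\bm\beta)+\theta_c(\bm\gamma,\bm b)-\symprod{\bm\alpha}{\bm\beta}}$.
\end{enumerate}

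The main obstacle is the carry bookkeeping in step 2. The shift recurrence for ${\hat V^{(2d)}}$ should be verified directly from \Cref{def_DV_doubled_WHDOs}, rather than only quoted. Concretely, expand ${-(\gamma_1+db_1)(\gamma_2+db_2)}$ modulo ${2d}$ and use ${\wzed^d=\wex^d=\idop}$. This gives the sign ${(-1)^{\gamma_1b_2+\gamma_2b_1+db_1b_2}}$, which matches ${\omegaT[2][\symprod{\bm\gamma}{\bm b}-d\,b_1b_2]}$ because signs are insensitive to the orientation of ${\mat\Omega}$ modulo ${2}$.

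Two further points need care. First, ${\bm b\in\intsT[2][2]}$ holds because each component of ${\bm\alpha+\bm\beta}$ lies in ${\{0,\ldots,2d-2\}}$. Second, ${\symprod{\bm\alpha}{\bm\beta}}$ must be taken with the integer representatives, not reduced modulo ${d}$, since it sits in a ${2d}$-exponent.
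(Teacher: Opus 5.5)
Your proposal is correct and follows essentially the same route as the paper. The scalar rephasings \(\omegaT[2d][-\nu]\) cancel in the commutation law. For the ordered product, you reduce \(\hat V^{(2d)}(\bm\gamma+d\,\bm b)\) to \(\hat V^{(2d)}(\bm\gamma)\) using \(\wzed^d=\wex^d=\idop\), identify the resulting sign as \(\omegaT[2d][\theta_c(\bm\gamma,\bm b)]\), and collect the three \(\nu\) phases into \(D_\nu\), with integer representatives kept in the symplectic product exactly as the paper notes.
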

\begin{proof}
For the commutation law, \Cref{eq_M_WHDO_explicit_form_doubled} makes each ${M}$-WHDO a doubled WHDO multiplied by the scalar ${\omegaT[2d][-\nu(\bm{\alpha})]}$. These scalars cancel when a product is compared with its reverse. Comparing the two orders in \Cref{eq_reference_doubled_product} therefore proves \Cref{eq_M_WHDO_commutation}.

For the ordered product, the output label must be reduced to the base cell without discarding its phase. Take the canonical integer representatives and write ${\bm{\alpha}+\bm{\beta}=\bm{\gamma}+d\,\bm{b}}$ as in \Cref{prop_M_WHDO_commutation_combination}. The doubled-WHDO definition and ${\wzed^d=\wex^d=\idop}$ give
\begin{align}
 \hat{V}^{(2d)}(\bm{\gamma}+d\,\bm{b})
 &=\omegaT[2][-\gamma_1b_2-\gamma_2b_1-d\,b_1b_2]\hat{V}^{(2d)}(\bm{\gamma})\nonumber\\
 &=\omegaT[2d][\theta_c(\bm{\gamma},\bm{b})]\hat{V}^{(2d)}(\bm{\gamma}).
\end{align}
The second equality uses ${-\gamma_1b_2\equiv_2\gamma_1b_2}$ and the definition of ${\theta_c}$; this is the canonical carry phase.

The two input rephasings contribute ${-\nu(\bm{\alpha})-\nu(\bm{\beta})}$, while converting the output from ${\hat{V}^{(2d)}(\bm{\gamma})}$ to ${\hat{V}^M(\bm{\gamma})}$ contributes ${\nu(\bm{\gamma})}$. Their sum is ${D_\nu(\bm{\alpha},\bm{\beta})}$. Combining these contributions gives
\begin{align}
 \hat{V}^M(\bm{\alpha})\hat{V}^M(\bm{\beta})
 &=\omegaT[2d][D_\nu(\bm{\alpha},\bm{\beta})+\theta_c(\bm{\gamma},\bm{b})
-\symprod{\bm{\alpha}}{\bm{\beta}}]\hat{V}^M(\bm{\gamma}),
\end{align}
which proves \Cref{eq_M_WHDO_combination}. The symplectic product is evaluated on the integer representatives before reducing its contribution modulo ${2d}$.
\end{proof}

The stencil-induced rephasing [$\hat{V}^{(2d)}(\bm{\alpha}) \mapsto \hat{V}^{M}(\bm{\alpha}) = \omegaT[2d][-\nu(\bm{\alpha})]\hat{V}^{(2d)}(\bm{\alpha})$; \Cref{eq_M_WHDO_explicit_form_doubled}] explains why the commutation law is unchanged while the ordered-product phase acquires the \emph{coboundary}%
\footnote{\samepage\label{fn_coboundary_rephasing}Here, \emph{coboundary} refers to the change in an ordered-product multiplier induced by a phase redefinition. If ${\hat V(\bm\alpha)\mapsto g(\bm\alpha)\hat V(\bm\alpha)}$, then the multiplier is multiplied by ${g(\bm\alpha)g(\bm\beta)/g(\bm\gamma)}$, where ${\bm\gamma=\{\bm\alpha+\bm\beta\}_d}$. In the present case, ${g(\bm\alpha)=\omegaT[2d][-\nu(\bm\alpha)]}$, so the coboundary factor is ${\omegaT[2d][D_\nu(\bm\alpha,\bm\beta)]}$. See also the change of multiplier under rephasing in Eq.~(13) of Ref.~\cite{raussendorf_Role_2023}.} %
term~${D_\nu}$.

One might seek ${D_\nu=0}$ to recover the ordered-product multiplier of the chosen doubled-WHDO reference, including its base-cell carry phase. This would require ${\nu}$ to be additive on ${\mathcal P_d}$, meaning ${\nu(\{\bm\alpha+\bm\beta\}_d)\equiv_{2d}\nu(\bm\alpha)+\nu(\bm\beta)}$. Hence ${\nu}$ would be ordinarily odd modulo ${2d}$. Whenever the prescribed twist ${\Xi}$ is nonzero, this conflicts with twisted oddness, so ${D_\nu=0}$ is not an available simplification within the admissible family. This fixed-reference compatibility condition is distinct from the cohomological obstructions to Clifford covariance and positive Pauli-measurement representations studied in Ref.~\cite{raussendorf_Role_2023}.

\begin{remark}[Carry phases from reducing doubled WHDO labels]
    The canonical carry phase term ${\theta_c\in\{0,d\}}$ in \Cref{prop_M_WHDO_commutation_combination} comes from reducing a doubled WHDO label back to the ${d\times d}$ base cell. With ${\bm{\alpha}+\bm{\beta}=\bm{\gamma}+d\,\bm{b}}$, as in \Cref{prop_M_WHDO_commutation_combination}, the ${d}$-quasiperiodicity of the doubled WHDOs gives
    \begin{align}
        \hat{V}^{(2d)}(\bm{\gamma}+d\,\bm{b})
        =
        \omegaT[2d][\theta_c(\bm{\gamma},\bm{b})] 
        ~\hat{V}^{(2d)}(\bm{\gamma}).
    \end{align}
    This is the ${\theta_c}$ term because
    \begin{align}
        \omegaT[2][\symprod{-\bm{b}}{\bm{\gamma}}-d\,b_1b_2]
        =
        \omegaT[2d][d\,\{\symprod{\bm{\gamma}}{\bm{b}}-b_1b_2\{d\}_2\}_2]
        =
        \omegaT[2d][\theta_c(\bm{\gamma},\bm{b})].
    \end{align}
    If no component carries, then ${\bm{b}=\bm{0}}$, ${\theta_c(\bm{\gamma},\bm{b})=0}$, and \Cref{eq_M_WHDO_combination} reduces to the simpler no-carry form. Hence, ${\theta_c}$ is the fixed carry phase inherited from the doubled WHDO convention. By contrast, ${D_\nu}$ is the stencil-dependent rephasing coboundary term.

    This is closely related to the convention of Galetti and Toledo Piza~\cite{galetti_Discrete_1992}, which effectively works with labels reduced modulo ${d}$ from the outset. After translating their clock--shift convention to ours, the corresponding periodic family is obtained by evaluating ${\hat V^{(2d)}}$ at ${\{\bm\kappa\}_d}$. In the present doubled-WHDO convention, reducing an ordinary sum of labels to the base cell produces the carry phase displayed above. Thus, strict periodicity of the individual operators does not remove the carry phase from their ordered products; this phase remains distinct from the stencil-dependent ${D_\nu}$.
\end{remark}

\subsection{Optional order-\titm{d} restriction on \titm{M}-WHDOs}
\label{subsec_M_WHDO_order_d_restriction}
We now fix the Hilbert-space dimension ${d}$ and ask which admissible stencils make every induced displacement satisfy ${\bigl(\hat V^M(\bm\alpha)\bigr)^d=\idop}$. Equivalently, each operator has finite order dividing this fixed ${d}$; its exact order may be a proper divisor. Although an operator of order dividing ${d}$ also has order dividing any multiple of ${d}$, those multiples do not change the Hilbert-space dimension or the label space being classified here. We refer to this as the \emph{order-${d}$ restriction}.

The restriction asks which admissible stencils make ${d}$ repeated applications of an ${M}$-WHDO return to ${\idop}$ without a residual scalar phase, so that the return of phase-space labels after ${d}$ translations is also realized exactly at the operator level. Order-dividing-${d}$ conventions also underlie Gross' odd-dimensional Weyl operators and Raussendorf \textit{et al.}'s phase-constrained Pauli representatives~\cite{gross_Hudsons_2006,raussendorf_Role_2023}.

The doubled WHDO family is naturally indexed on the doubled lattice and has order dividing ${2d}$, but it need not have order dividing ${d}$. Specifically,
\begin{align}
    \left(\hat{V}^{(2d)}(\bm{m})\right)^{2d}
    &=\idop,
    \qquad \bm m\in\mathcal P_{2d}.
    \label{eq_dbld_WHDO_order}
\end{align}
To determine its ${d}$th power, we use the clock--shift relation
\begin{align*}
    \wex^{m_1}\wzed^{m_2}
    &=\omegaT[d][-m_1m_2]\wzed^{m_2}\wex^{m_1}.
\end{align*}
Reordering the ${d}$ clock--shift pairs requires ${d\,(d-1)/2}$ such commutations. Thus, since ${\wzed^d=\wex^d=\idop}$,
    \begin{align}
        \left(\hat{V}^{(2d)}(\bm{m})\right)^d
        &=
        \omegaT[2d][-d\,m_1m_2]
        \omegaT[d][-\frac{d\,(d-1)}{2}m_1m_2]
        \wzed^{d\,m_2}\wex^{d\,m_1} \nonumber \\
        &=
        \omegaT[2][-d\,m_1m_2]\idop.
        \label{eq_dbld_WHDO_order2}
    \end{align}
The residual sign can therefore prevent the ${d}$th power from being the identity, while squaring it gives the ${2d}$th-power identity above.

\begin{proposition}[Power and order condition for the ${M}$-WHDOs]
\label{cor_M_WHDO_order_d_condition}
    Let ${M}$ be a valid stencil with admissible base-cell phase function ${\nu}$. For every ${\bm\alpha\in\mathcal P_d}$,
    \begin{align}
        \bigl(\hat{V}^{M}(\bm{\alpha})\bigr)^d
        &=
        \omegaT[2][-\nu(\bm{\alpha})-d\,\alpha_1\alpha_2]
        ~\idop.
        \label{eq_M_WHDO_order_d_phase}
    \end{align}
    Hence the family has order dividing ${d}$, meaning ${\bigl(\hat V^M(\bm\alpha)\bigr)^d=\idop}$ for every label, if and only if
    \begin{align}
    \label{eq_M_WHDO_order_d_condition}
        \nu(\bm{\alpha})
        \equiv_2
        d\,\alpha_1\alpha_2
        \equiv_2
        \begin{cases}
        0, & d \text{ even}, \\
        \alpha_1\alpha_2, & d \text{ odd}.
    \end{cases}        
    \end{align}    
    Equivalently, the allowed values lie in the additive coset
    \begin{align}
        \nu(\bm\alpha)\in d\,\alpha_1\alpha_2+2\intsT[d],
        \label{eq_order_d_phase_coset}
    \end{align}
    where $2\intsT[d]
    =\{0,2,\ldots,2d-2\}$ and addition is modulo~$2d$.
\end{proposition}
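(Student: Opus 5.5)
The plan is to compute the $d$th power directly from the explicit form of \Cref{prop_explicit_form_M_WHDOs} and then read off the order condition. By \Cref{eq_M_WHDO_explicit_form_doubled}, $\hat V^M(\bm\alpha)=\omegaT[2d][-\nu(\bm\alpha)]\hat V^{(2d)}(\bm\alpha)$, where $\bm\alpha$ is taken as its canonical base-cell representative. The prefactor is a scalar and commutes with everything, so
\begin{align}
    \bigl(\hat V^M(\bm\alpha)\bigr)^d=\omegaT[2d][-d\,\nu(\bm\alpha)]\bigl(\hat V^{(2d)}(\bm\alpha)\bigr)^d .
\end{align}
I will then substitute the already-computed identity \Cref{eq_dbld_WHDO_order2}, which gives $\bigl(\hat V^{(2d)}(\bm\alpha)\bigr)^d=\omegaT[2][-d\,\alpha_1\alpha_2]\idop$. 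Finally, I note that $\omegaT[2d][-d\,\nu]=e^{-i\pi\nu}=\omegaT[2][-\nu]$. Combining these yields \Cref{eq_M_WHDO_order_d_phase}.

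One point needs care here. The value $\nu(\bm\alpha)$ lies in $\reals/(2d\,\intsT)$, so the expression $\omegaT[2][-\nu]$ has to be well defined under $\nu\mapsto\nu+2d$. It is: this shift changes the exponent by $2d$, which is even, so the factor $e^{-i\pi\nu}$ is unchanged. For real non-integer $\nu$, the factor $\omegaT[2][-\nu]$ is a genuine phase, not a sign.

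For the order condition, $\bigl(\hat V^M(\bm\alpha)\bigr)^d=\idop$ holds iff $e^{-i\pi(\nu(\bm\alpha)+d\,\alpha_1\alpha_2)}=1$. This is equivalent to $\nu(\bm\alpha)+d\,\alpha_1\alpha_2\in2\intsT$, that is, $\nu(\bm\alpha)\equiv_2 d\,\alpha_1\alpha_2$. In particular, $\nu(\bm\alpha)$ must be an integer. The statement is consistent because $2\mid 2d$, so a congruence modulo $2$ is meaningful for a value defined modulo $2d$.

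The parity cases then follow directly. For even $d$ the right-hand side is $\equiv_2 0$. For odd $d$ we have $d\,\alpha_1\alpha_2\equiv_2\alpha_1\alpha_2$.

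The coset form \Cref{eq_order_d_phase_coset} is a restatement. It uses the fact that the integers modulo $2d$ with a fixed parity form the coset $d\,\alpha_1\alpha_2+\{0,2,\dots,2d-2\}$.

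I expect the main obstacle to be bookkeeping rather than anything conceptual. I need to make sure \Cref{eq_dbld_WHDO_order2} is applied to the canonical representative $\bm\alpha+d\,\bm0$, as the base-cell label convention requires. I also need the sign $\omegaT[2][-d\,\alpha_1\alpha_2]$ to be evaluated on integer representatives, which it is, since $d\,\alpha_1\alpha_2$ is an integer. If needed, I would briefly re-verify \Cref{eq_dbld_WHDO_order2} by counting reorderings: there are $d(d-1)/2$ transpositions, which give the factor $\omegaT[d][-d(d-1)m_1m_2/2]=\omegaT[2][-(d-1)m_1m_2]$. Combined with $\omegaT[2d][-d\,m_1m_2]=\omegaT[2][-m_1m_2]$, this gives a total of $\omegaT[2][-d\,m_1m_2]$, as claimed.
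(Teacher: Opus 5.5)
Your proposal is correct and follows essentially the same route as the paper. Like the paper, you pull the scalar rephasing $\omegaT[2d][-\nu(\bm\alpha)]$ out of \Cref{eq_M_WHDO_explicit_form_doubled}, raise it to the $d$th power to get $\omegaT[2][-\nu(\bm\alpha)]$, combine it with the inherited sign from \Cref{eq_dbld_WHDO_order2}, and then read off the parity condition and its coset form; your extra checks (that $\omegaT[2][-\nu]$ is well defined on $\reals/(2d\,\intsT)$, and the recount of the $d(d-1)/2$ reorderings) are sound and simply make explicit what the paper leaves implicit.
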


\begin{proof}
For a canonical base-cell representative ${\bm\alpha}$, \Cref{eq_M_WHDO_explicit_form_doubled} writes $\hat V^M(\bm\alpha)=\omegaT[2d][-\nu(\bm\alpha)]\hat V^{(2d)}(\bm\alpha)$. Raising this expression to the ${d}$th power multiplies the inherited phase in \Cref{eq_dbld_WHDO_order2} by ${\omegaT[2][-\nu(\bm\alpha)]}$, giving \Cref{eq_M_WHDO_order_d_phase}. The result equals ${\idop}$ exactly when ${-\nu(\bm\alpha)-d\,\alpha_1\alpha_2\equiv_2 0}$, equivalently ${\nu(\bm\alpha)\equiv_2d\,\alpha_1\alpha_2}$. Within the ${2d}$-periodic phase circle, this congruence selects precisely the ${d}$ residues in \Cref{eq_order_d_phase_coset}.
\end{proof}

The inherited factor in \Cref{eq_M_WHDO_order_d_phase} is trivial for even ${d}$ and equals ${\omegaT[2][-\alpha_1\alpha_2]}$ for odd ${d}$. The order restriction selects even residues for even ${d}$ and residues of parity ${\alpha_1\alpha_2}$ for odd ${d}$. Opposite labels remain related by twisted oddness, and self-negative points must still satisfy their admissibility constraints; the choices are not independent at every label.

This optional parity-level restriction selects a subspace of ${\solspace}$ without changing the \criterion{\mathrm{A}1}--\criterion{\mathrm{A}4} admissibility criteria. Its topological effect is recorded next.

\begin{proposition}[Topological reduction after imposing the order-${d}$ restriction]
\label{prop_order_d_restricted_topology}    
    Let ${\solspace^{\mathrm{ord}}\subseteq\solspace}$ denote the subspace of admissible base-cell phase functions satisfying the order-${d}$ condition of \Cref{cor_M_WHDO_order_d_condition}, equipped with the subspace topology inherited from ${\topol}$. Then
    \begin{align}
        \solspace^{\mathrm{ord}}
        \cong
        \intsT[d][N_c(d)]
        \times
        \intsT[2][N_b(d)],
    \end{align}
    where ${N_c(d)}$ and ${N_b(d)}$ are the counting functions of \Cref{thm_ch8_topology_of_admissible_stencils}. Equivalently,
    \begin{align}
        \solspace^{\mathrm{ord}}
        \cong
        \begin{cases}
            \intsT[d][\frac{d^2-1}{2}],
            & d \text{ odd},
            \\[1mm]
            \intsT[d][\frac{d^2-4}{2}]
            \times
            \intsT[2][3],
            & d \text{ even}.
        \end{cases}
    \end{align}
\end{proposition}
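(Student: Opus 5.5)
The plan is to reuse the restriction/reconstruction bijection from the proof of \Cref{thm_ch8_topology_of_admissible_stencils} and check that the order-${d}$ condition of \Cref{cor_M_WHDO_order_d_condition} acts factor by factor. First, fix one representative ${\bm\alpha}$ from each distinct negation pair ${\{\bm\alpha,\{-\bm\alpha\}_d\}}$, together with the non-origin self-negative points. Under the homeomorphism ${\solspace\cong(\reals/(2d\,\intsT))^{N_c(d)}\times\intsT[2][N_b(d)]}$, the subset ${\solspace^{\mathrm{ord}}}$ is cut out by requiring \Cref{eq_M_WHDO_order_d_condition} at every base-cell label. The key point is that this condition, imposed at a representative, automatically holds at its partner. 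If so, then on each circle factor the condition reduces to the single coset constraint \Cref{eq_order_d_phase_coset} at the representative. That coset has exactly ${d}$ elements, so each ${S^1}$ factor is replaced by a ${d}$-point set.

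The main step is this partner-compatibility check. For a distinct pair with ${\bm\alpha'=\{-\bm\alpha\}_d}$, twisted oddness gives ${\nu(\bm\alpha')\equiv_{2d}\Xi(\bm\alpha)-\nu(\bm\alpha)}$. Since ${2d}$ is even, reduction modulo ${2}$ is well defined. Assuming \Cref{eq_M_WHDO_order_d_condition} at ${\bm\alpha}$, it therefore suffices to show
\begin{align}
    \Xi(\bm\alpha)\equiv_2 d\,(\alpha_1\alpha_2+\alpha_1'\alpha_2').
\end{align}
For even ${d}$, both sides are even because ${\Xi\in\{0,d\}}$. For odd ${d}$, I split into two cases. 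If some ${\alpha_i=0}$, then ${\alpha_i'=0}$ as well, so both products vanish and ${\Xi=0}$. Off-axis, ${\alpha_i'=d-\alpha_i\equiv_2 1+\alpha_i}$, so the right-hand side reduces to ${1+\alpha_1+\alpha_2\equiv_2\alpha_1+\alpha_2+\{d\}_2}$. This is exactly ${\Xi(\bm\alpha)/d}$ modulo ${2}$ by \Cref{eq_ch8_Xi_def}. Hence the condition at the partner holds, and by the same computation read in reverse, the partner value lies in its own coset.

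The self-negative points need a separate check. The origin has ${\nu(\bm0)=0}$ and ${d\cdot0\cdot0=0}$, so the condition holds. For even ${d}$, the three remaining points have ${\nu\in\{0,d\}}$, and both values are even. The order condition therefore imposes nothing there, and all ${N_b(d)=3}$ bits survive. As a result, the restriction map sends ${\solspace^{\mathrm{ord}}}$ bijectively onto ${\prod(d\,\alpha_1\alpha_2+2\intsT[d])\times\intsT[2][N_b(d)]}$. I then identify each coset with ${\intsT[d]}$ via ${d\,\alpha_1\alpha_2+2k\mapsto k}$.

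For the topology, note that ${\solspace^{\mathrm{ord}}}$ is a finite subset of the Hausdorff space ${\solspace}$, so its subspace topology is discrete. Any bijection between finite discrete spaces is a homeomorphism, which gives ${\intsT[d][N_c(d)]\times\intsT[2][N_b(d)]}$. Substituting ${N_c,N_b}$ from \Cref{thm_ch8_topology_of_admissible_stencils} then yields the stated odd and even forms. I expect the only real obstacle to be the parity bookkeeping of the carry twist ${\Xi}$ in the odd off-axis case; the rest follows directly from the earlier theorem.
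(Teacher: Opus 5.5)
Your proposal is correct and follows essentially the same route as the paper's proof. Both reuse the negation-pair restriction/reconstruction from \Cref{thm_ch8_topology_of_admissible_stencils}. Both reduce twisted oddness modulo $2$ to show that the order-$d$ condition at a representative forces it at the partner: this is trivial for even $d$, and for odd $d$ off-axis it follows from $\alpha_i^-\equiv_2 1-\alpha_i$ and $\Xi\equiv_2\alpha_1+\alpha_2+1$. Both then note that the self-negative values $0$ and $d$ are even, so no bit is lost. Your observation that a finite subset of the Hausdorff space $\solspace$ carries the discrete topology is a slightly crisper way of stating the paper's final topological step.
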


\begin{proof}
We use the same negation-pair decomposition as in the proof of \Cref{thm_ch8_topology_of_admissible_stencils}. First we determine the choices allowed on a distinct pair; then we check that the self-negative choices survive.

The order-${d}$ condition selects the coset in \Cref{eq_order_d_phase_coset}. At a chosen representative this can be written as
\begin{align}
 \nu(\bm{\alpha})\in\{d\,\alpha_1\alpha_2+2r\mid r\in\intsT[d]\}
 \subset\reals/(2d\,\intsT).
 \label{eq_appH_order_d_allowed_values}
\end{align}
These are exactly ${d}$ distinct values: two indices give the same phase precisely when they agree modulo ${d}$. Twisted oddness then determines ${\nu(\bm{\alpha}^-)}$ by ${\nu(\bm{\alpha}^-)\equiv_{2d}\Xi(\bm{\alpha})-\nu(\bm{\alpha})}$, where ${\bm{\alpha}^-=\{-\bm{\alpha}\}_d}$.

This choice is allowed only if its forced partner also has the required parity. Reducing twisted oddness modulo ${2}$ shows that this is equivalent to
\begin{align}
 d\,\alpha_1^-\alpha_2^-
 \equiv_2\Xi(\bm{\alpha})-d\,\alpha_1\alpha_2.
 \label{eq_appH_Xi_parity_identity}
\end{align}
For even ${d}$, both sides vanish modulo ${2}$, since ${\Xi\in\{0,d\}}$. For odd ${d}$, they also vanish on either coordinate axis. Off-axis, ${\alpha_i^-=d-\alpha_i\equiv_2 1-\alpha_i}$ and \Cref{eq_ch8_Xi_def} gives ${\Xi(\bm{\alpha})\equiv_2\alpha_1+\alpha_2+1}$. Therefore,
\begin{align}
 d\,\alpha_1^-\alpha_2^-
 &\equiv_2(1-\alpha_1)(1-\alpha_2)\nonumber\\
 &\equiv_2\alpha_1+\alpha_2+1-\alpha_1\alpha_2\nonumber\\
 &\equiv_2\Xi(\bm{\alpha})-d\,\alpha_1\alpha_2.
\end{align}
Thus the partner automatically satisfies the condition, and each distinct negation pair contributes one ${\intsT[d]}$-indexed choice.

At a self-negative point, twisted oddness is a self-consistency condition, ${2\nu(\bm{\alpha})\equiv_{2d}\Xi(\bm{\alpha})}$. The origin remains fixed. Every other self-negative point occurs in even dimension and has ${\Xi=0}$, so its two admissible values are ${0}$ and ${d}$. Both are even and therefore satisfy ${\nu(\bm{\alpha})\equiv_2d\,\alpha_1\alpha_2}$. No bit is removed.

Consequently each original circle factor becomes a ${d}$-point subset, while the non-origin self-negative factors remain unchanged. These finite subsets of the phase circle carry the discrete topology, so the restriction and reconstruction maps used in the proof of \Cref{thm_ch8_topology_of_admissible_stencils} give
\begin{align}
 \solspace^{\mathrm{ord}}\cong\intsT[d][N_c(d)]\times\intsT[2][N_b(d)].
\end{align}
This proves \Cref{prop_order_d_restricted_topology}, including the singleton at ${d=1}$ and the three surviving bits at ${d=2}$.
\end{proof}

Imposing the order-${d}$ restriction does not change the negation-pair counting from \Cref{thm_ch8_topology_of_admissible_stencils}. Rather, it selects the admissible conventions whose induced displacement operators all have order dividing ${d}$. As shown in the proof, the ${\intsT[d]}$ factors arise because each former circle-valued parameter associated with a distinct negation pair is reduced to a dit-valued choice. For an independent non-self-negative negation-pair representative ${\bm{\alpha}}$, this dit may be written as ${r_{\bm{\alpha}}\in\intsT[d]}$ and embedded into the original ${2d}$-periodic phase circle by
\begin{align}
    r_{\bm{\alpha}}
    \longmapsto
    d\,\alpha_1\alpha_2+2r_{\bm{\alpha}}
    \qquad
    \text{modulo }2d.
\end{align}
Equivalently, the allowed phase representatives are ${\{0,2,\ldots,2d-2\}}$ when ${d\,\alpha_1\alpha_2}$ is even and ${\{1,3,\ldots,2d-1\}}$ when it is odd.
Therefore, each former ${S^1}$ parameter is replaced by ${d}$ equally spaced points on the same ${2d}$-periodic phase circle, \({S^1 \longmapsto \intsT[d]}\),
while the non-origin self-negative ${\intsT[2]}$ factors in even dimensions are left unchanged.

The order-${d}$ restriction may also be imposed jointly with the horizontal--vertical marginal restrictions, since both impose compatible pointwise conditions on the same base-cell phase function ${\nu}$.

\begin{table*}[t]
    \centering
    \renewcommand{\arraystretch}{1.2}
    \setlength{\tabcolsep}{10pt}
    \begin{tabular}{|c|c|c|c|c|c|c|c|}
    \hline
    Parameter space & ${d=1}$ & ${d=2}$ & ${d=3}$ & ${d=4}$ & ${d=5}$ & ${d=6}$ & ${d=7}$ \\
    \hline
    ${\solspace}$
    & ${\{\ast\}}$
    & ${\intsT[2][3]}$
    & ${(S^1)^4}$
    & ${(S^1)^6 \times \intsT[2][3]}$
    & ${(S^1)^{12}}$
    & ${(S^1)^{16} \times \intsT[2][3]}$
    & ${(S^1)^{24}}$ \\
    \hline
    ${\solspace^{\mathrm{HV}}}$
    & ${\{\ast\}}$
    & ${\intsT[2]}$
    & ${(S^1)^2}$
    & ${(S^1)^4 \times \intsT[2]}$
    & ${(S^1)^8}$
    & ${(S^1)^{12} \times \intsT[2]}$
    & ${(S^1)^{18}}$ \\
    \hline
    ${\solspace^{\mathrm{ord}}}$
    & ${\{\ast\}}$
    & ${\intsT[2][3]}$
    & ${\intsT[3][4]}$
    & ${\intsT[4][6] \times \intsT[2][3]}$
    & ${\intsT[5][12]}$
    & ${\intsT[6][16] \times \intsT[2][3]}$
    & ${\intsT[7][24]}$ \\
    \hline
    ${\solspace^{\mathrm{HV,ord}}}$
    & ${\{\ast\}}$
    & ${\intsT[2]}$
    & ${\intsT[3][2]}$
    & ${\intsT[4][4] \times \intsT[2]}$
    & ${\intsT[5][8]}$
    & ${\intsT[6][12] \times \intsT[2]}$
    & ${\intsT[7][18]}$ \\
    \hline
\end{tabular}
    \caption[Comparison of restricted admissible parameter spaces]{Comparison of the admissible parameter space ${\solspace}$ with several restricted parameter spaces for dimensions ${d=1,\ldots,7}$. The HV constraints remove the coordinate-axis degrees of freedom. The optional order-${d}$ restriction leaves the number of pairs related by negation unchanged but replaces each continuous ${S^1}$ factor by a finite ${\intsT[d]}$ factor. The jointly restricted space ${\solspace^{\mathrm{HV,ord}}}$ shows that the HV marginal constraints and the optional order-${d}$ restriction are compatible---the former remove coordinate-axis freedoms, while the latter discretizes the remaining off-axis circle factors.}
\label{tab_ch8_topology_HV_D_nu_ord_d_counts}
\end{table*}

\begin{proposition}[Joint HV-marginal and order-${d}$ restricted topological type]
\label{prop_HV_order_d_restricted_topology}
    Let
    ${\solspace^{\mathrm{HV,ord}}\coloneqq\solspace^{\mathrm{HV}}\cap\solspace^{\mathrm{ord}}}$.
    Then
    \begin{align}
        \solspace^{\mathrm{HV,ord}}
        \cong
        \intsT[d][N_c^{\mathrm{HV}}]
        \times
        \intsT[2][N_b^{\mathrm{HV}}],
    \end{align}
    where
    \begin{align}
        N_c^{\mathrm{HV}}
        =
        \left\lfloor \frac{(d-1)^2}{2} \right\rfloor,
        \qquad
        N_b^{\mathrm{HV}}
        =
        \{d-1\}_2 .
    \end{align}
\end{proposition}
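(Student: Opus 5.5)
The plan is to combine the proofs of \Cref{prop_canonical_hor_vert_line_marginals} and \Cref{prop_order_d_restricted_topology}, reusing the negation-pair decomposition from the proof of \Cref{thm_ch8_topology_of_admissible_stencils}. Both restrictions are pointwise conditions on the same base-cell function ${\nu}$.

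\textbf{Step 1: the coordinate axes.} The HV condition \Cref{eq_ch8_axis_constraints_nu} sets ${\nu=0}$ on both axes. I must check that these forced values satisfy the order-${d}$ condition \Cref{eq_M_WHDO_order_d_condition}. On an axis ${\alpha_1\alpha_2=0}$, so the condition requires ${\nu\equiv_2 0}$, and the value ${0}$ meets it. The axes are also consistent with twisted oddness because ${\Xi=0}$ there. Hence the intersection is nonempty, and the axes contribute no freedom.

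\textbf{Step 2: the off-axis set.} As in the proof of \Cref{prop_canonical_hor_vert_line_marginals}, the off-axis set ${\mathcal P_d^{\mathrm{off}}}$ is negation-invariant and has ${(d-1)^2}$ points. For odd ${d}$ it splits into ${(d-1)^2/2}$ distinct pairs. For even ${d}$ it splits into ${((d-1)^2-1)/2}$ distinct pairs plus the single self-negative point ${(d/2,d/2)}$. In both cases the number of pairs is ${\lfloor (d-1)^2/2\rfloor = N_c^{\mathrm{HV}}}$.

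\textbf{Step 3: each distinct pair.} I would invoke the argument of \Cref{prop_order_d_restricted_topology}. The chosen representative takes one of the ${d}$ values ${d\,\alpha_1\alpha_2+2r}$ with ${r\in\intsT[d]}$. The parity identity \Cref{eq_appH_Xi_parity_identity} then guarantees that the partner determined by twisted oddness also satisfies the order condition. That identity was verified for all ${\bm\alpha}$, in particular for off-axis points, so each pair contributes exactly one ${\intsT[d]}$ factor.

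\textbf{Step 4: the self-negative point for even ${d}$.} At ${(d/2,d/2)}$ the admissible values are ${0}$ and ${d}$. Both are even, so both satisfy the order condition when ${d}$ is even. This leaves one bit, giving ${N_b^{\mathrm{HV}}=\{d-1\}_2}$.

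\textbf{Step 5: topology.} The restriction and reconstruction maps from the proof of \Cref{thm_ch8_topology_of_admissible_stencils} give a bijection onto ${\intsT[d][N_c^{\mathrm{HV}}]\times\intsT[2][N_b^{\mathrm{HV}}]}$. Since the space is a finite subset of the ambient torus, its subspace topology is discrete, so this bijection is a homeomorphism.

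The only genuinely new check is compatibility on the axes in Step 1; everything else is inherited from the two earlier propositions. I expect no real obstacle beyond confirming that the off-axis self-negative bit survives the parity condition in Step 4. The degenerate cases ${d=1,2}$ should be confirmed against \Cref{tab_ch8_topology_HV_D_nu_ord_d_counts}, which lists a singleton and ${\intsT[2]}$ respectively.
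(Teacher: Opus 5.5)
Your proposal is correct and follows essentially the same route as the paper's proof. The paper likewise checks only that the HV-fixed axis values $\nu=0$ satisfy $\nu\equiv_2 d\,\alpha_1\alpha_2$, since $\alpha_1\alpha_2=0$ there, and then reuses the order-$d$ argument on the off-axis coordinates, where each pair gives $d$ allowed values and the even-$d$ bit at $(d/2,d/2)$ survives with the discrete topology; your explicit pair count and finite-subspace topology remark just spell out what the paper inherits from the earlier propositions.
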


\begin{proof}
The HV constraints fix ${\nu=0}$ on the coordinate axes. Before applying the order-${d}$ restriction to the surviving degrees of freedom, we must check compatibility with these already fixed values. On either axis ${\alpha_1\alpha_2=0}$, so ${\nu=0}$ automatically satisfies ${\nu(\bm{\alpha})\equiv_2d\,\alpha_1\alpha_2}$.

Therefore, the proof of \Cref{prop_order_d_restricted_topology} applies to the surviving off-axis coordinates: each of the ${N_c^{\mathrm{HV}}(d)}$ circle factors is replaced by its ${d}$ allowed values, while the possible even-dimensional self-negative bit at ${(d/2,d/2)}$ remains unchanged. Thus
\begin{align}
 \solspace^{\mathrm{HV,ord}}
 &=\solspace^{\mathrm{HV}}\cap\solspace^{\mathrm{ord}}\nonumber\\
 &\cong\intsT[d][N_c^{\mathrm{HV}}(d)]\times\intsT[2][N_b^{\mathrm{HV}}(d)],
\end{align}
with the discrete topology inherited from the phase coordinates. Both orders of imposition enforce the same two sets of equations on ${\solspace}$, so they define the same intersection. Hence the order of imposition does not affect the resulting restricted parameter space.
\end{proof}

Hence, the two restrictions are compatible---imposing HV first removes the axis degrees of freedom before discretization, whereas imposing order-${d}$ first discretizes those same axis degrees of freedom before HV selects their zero values. In either order, the resulting restricted space is the same intersection ${\solspace^{\mathrm{HV}}\cap\solspace^{\mathrm{ord}}}$.

\Cref{tab_ch8_topology_HV_D_nu_ord_d_counts} collects the distinct effects of the canonical HV and order-${d}$ conditions, together with the joint HV and order-${d}$ restriction. \Cref{fig_tori} illustrates these successive restrictions for ${d=3}$.

\begin{figure}[h]
    \centering
    \includegraphics[width=\columnwidth]{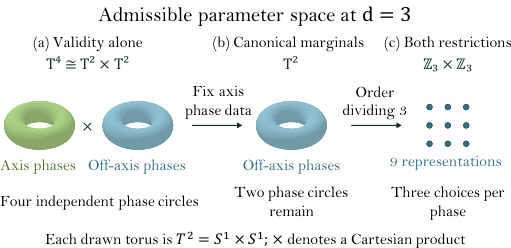}
    \caption[Tori construction for \titm{d=3}]{Effect of additional restrictions on the admissible parameter space at $d = 3$. Each drawn torus represents two independent circle-valued phase parameters. The unrestricted family is the Cartesian product of the axis and off-axis factors. Canonical horizontal and vertical marginals fix the axis phase data, leaving the off-axis torus. Requiring the associated displacements also to have order dividing 3 restricts each remaining phase to three choices, leaving nine isolated representations. Each final point denotes an entire labeled DWF convention, including the RS convention corresponding to Gross’ construction. The DKS convention belongs to the canonical-marginal torus but is excluded by the order restriction.}
    \label{fig_tori}
\end{figure}

\section{Discussion and conclusions}
\label{sec_disc_and_conc}
The stencil theorem of Ref.~\cite{antonopoulos_Grand_2025} encompasses every labeled single-qudit ${d\times d}$ discrete Wigner representation satisfying \criterion{\mathrm{A}1}--\criterion{\mathrm{A}4}. Here, we classified the valid projected stencils, and hence these representations, through their admissible base-cell phase data ${\nu}$. The parameterization separates circle-valued choices on distinct negation pairs from binary choices at non-origin self-negative points. The odd-dimensional family is a single torus, while the even-dimensional family is a disjoint union of eight tori. Thus, any two conventions in odd dimensions can be connected by continuously varying the admissible phase data, whereas continuous variation within the even-dimensional valid family cannot change the three binary choices.

Additional constraints restrict the available valid DWF representations and change the topology of their parameter space, as summarized in \Cref{tab_ch8_topology_HV_D_nu_ord_d_counts}. The canonical marginal restrictions fix the base-cell phase data on the coordinate axes, but continuous freedom remains for every ${d>2}$, and one off-axis bit survives in even dimensions. Therefore, fixing the position and momentum marginals does not determine how the full operator information is represented on phase space. The order restriction has a different effect: it replaces each independent phase circle by ${d}$ equally spaced choices while retaining the self-negative bits, making the family finite and discrete. The two restrictions are compatible, with the order restriction discretizing the freedom left by the marginals. These familiar marginal and algebraic requirements therefore constrain the choice of representation without generally selecting a unique one.

Horibe \textit{et al.}~\cite{horibe_Existence_2002} illustrate how stronger requirements can select a unique representation or exclude all solutions. Under their coordinate-marginal and determinant-one integer lattice-covariance assumptions, the solution is unique for odd ${d}$ and absent for even ${d}$. Their odd-dimensional solution coincides with Cohendet \textit{et al.}'s construction~\cite{cohendet_Stochastic_1988}, previously identified with the RS PPO up to a quadrature sign flip~\cite{antonopoulos_Grand_2025}. We do not prove an analogous collapse theorem within the stencil formalism here; the comparison shows why uniqueness or nonexistence must be assessed relative to the additional requirements imposed.

The family satisfying the canonical horizontal and vertical marginal conditions coincides with the translation-covariant part of Takami \textit{et al.}'s construction~\cite{takami_Wigner_2001}, after matching the phase and label conventions as in \Cref{eq_takami_phase_dictionary}. Our treatment also determines the topology of this family, making explicit the continuous and discrete freedom that remains after imposing these marginals. 

In their Sec.~3, Chaturvedi \textit{et al.}~\cite{chaturvedi_Wigner_2010} obtain sign choices in a displacement expansion subject to Hermiticity and canonical marginal constraints [their Eqs.~(24), (27)--(28) and (35)]. After matching operator conventions, these sign-valued choices lie within the canonical-marginal family classified here; their subsequent analysis imposes further isotropic-line marginal requirements. 

For a single qudit with ${d\geq2}$, the broader Pauli-covariant operator-basis expansion of Raussendorf \textit{et al.}~\cite{raussendorf_Role_2023} contains the present \criterion{\mathrm{A}1}--\criterion{\mathrm{A}4} family after matching label and phase conventions. This comparison does not establish a correspondence between our stencil phase data and their cohomological obstruction classes; determining whether such a correspondence exists is left to future work.

The admissible base-cell phase data ${\nu}$ also determine the $M$-WHDOs through a rephasing of the base-cell restriction of the doubled WHDO family. At the operator level, this rephasing leaves the commutation law unchanged, while the ordered-product multiplier is modified by the stencil-dependent coboundary ${D_\nu}$; the carry phase ${\theta_c}$ remains fixed by the doubled-WHDO convention. Thus, the continuous and discrete freedom classified here remains even after fixing the Weyl commutation law. The order restriction shows why this distinction matters: it can narrow the available representations by constraining phase choices that the common commutation law leaves undetermined.

It is also interesting to view this topological classification through the associated $M$-characteristic functions. Since the SDFT is an invertible linear map, the families of DWF and characteristic-function representation maps are homeomorphic under their usual finite-dimensional topologies. The same classification therefore applies to both descriptions. For a fixed operator, \Cref{eq_M_characteristic_function_change_of_stencil_rephasing} shows that changing the valid projected stencil preserves all characteristic-function magnitudes and zeros, while nonzero values may acquire different phases. This gives a concrete interpretation of the representation freedom classified here: changing the convention can change how the symplectic-Fourier components combine to form the DWF, while preserving their magnitudes.

A further question is how much of this freedom actually changes the phase-space description of a particular operator. If two admissible conventions differ only through rephasings at characteristic-function zeros, they give identical characteristic-function values, and hence identical DWF values, for that operator despite defining distinct representation conventions. Thus, the nonzero entries, together with the admissibility constraints, determine which independent phase choices remain visible in its representation. The SDFT makes this distinction equivalent in the DWF and characteristic-function descriptions.

Another question is how to use this freedom to select a representation suited to a particular task. For example, which admissible convention minimizes the total negative DWF weight of a given state while retaining specified marginal or displacement-order requirements? The characteristic-function viewpoint turns this into a constrained choice of phases with fixed magnitudes. The classification separates the continuous phase choices from the remaining discrete choices, while imposing the order-${d}$ restriction makes the search finite. Investigating this selection problem is left to future work.

A further direction is to extend the classification by relaxing selected validity criteria. Our previous work~\cite{antonopoulos_Grand_2025} discussed relaxing Hermiticity to include discrete Kirkwood--Dirac quasidistributions. Here, Hermiticity gives rise to twisted oddness, relating opposite base-cell labels and restricting non-origin self-negative points to binary choices. How do the independent parameters and topology change when this requirement is relaxed? This would distinguish features specific to Hermiticity from those that persist in a broader stencil framework.

The central message is that the nonuniqueness of discrete Wigner representations has a structure that can be determined explicitly. Within the labeled single-qudit \criterion{\mathrm{A}1}--\criterion{\mathrm{A}4} class, the continuous and discrete freedoms describe the choices that validity leaves open, while additional requirements narrow those choices. This gives a basis for approaching the choice of a DWF through the properties it should satisfy and the freedom those properties leave. What might otherwise be a choice among separate constructions becomes a question about a known parameter space.

\begin{acknowledgments}
The authors developed the research and mathematical content---including the results, derivations, proofs and substantive discussion---and wrote the original manuscript text. OpenAI's ChatGPT (GPT-5.6 Sol, GPT-6 Sol, GPT-6 Astra and GPT-6 Pro) was used as a research assistant for literature searching and contextualisation, checking mathematical derivations and proofs, assessing claims and assisting with manuscript editing. The authors reviewed the entire manuscript and take full responsibility for its content, accuracy and integrity. We thank Nicholas Funai and Dominic G. Lewis for discussions. This work was supported by the Australian Research Council~(ARC) Centre of Excellence for Quantum Computation and Communication Technology (CQC2T), project no.\ CE170100026, and by the ARC Centre of Excellence for Quantum Computer Performance and Integration (QPI), project no.\ CE260100009. NCM acknowledges support from the ARC Future Fellowship scheme, project no.\ FT230100571.
\end{acknowledgments}

\bibliography{DWF_refs}

\end{document}